%
%
%
%
%
%
\RequirePackage{fix-cm}
\documentclass[smallextended]{svjour3}       
\smartqed  
\usepackage{graphicx}
\usepackage{tabularx}
\usepackage{tikz} 
\usetikzlibrary{calc}
\usetikzlibrary{arrows}
\usepackage{array, makecell}
\usepackage{multirow}
\usepackage{enumitem}
\usepackage{color}
\usepackage{amsmath,amssymb} 
\usepackage{wrapfig}


\usetikzlibrary{arrows, shapes, snakes}

%
%
%
%
%
\begin{document}

\input{pictures}
\newcommand{\acomment}[2]{\ \\ \fbox{\parbox{\linewidth}{{\sc #1}: #2}}}
\newcommand{\fcomment}[2]{{\color{blue}(#1)}\footnote{#1: #2}}

\newcommand{\mrf}[1]{\fcomment{MR}{#1}}
\newcommand{\mr}[1]{\acomment{MR}{#1}}

\newcommand{\pbf}[1]{\fcomment{PB}{#1}}
\newcommand{\pb}[1]{\acomment{PB}{#1}}

\newcommand{\tzf}[1]{\fcomment{TZ}{#1}}
\newcommand{\tz}[1]{\acomment{TZ}{#1}}

\newcommand{\deactivateComments}{
  \renewcommand{\mrf}[1]{}
  \renewcommand{\mr}[1]{}

  \renewcommand{\pbf}[1]{}
  \renewcommand{\pb}[1]{}

  \renewcommand{\tzf}[1]{}
  \renewcommand{\tz}[1]{}

}

\renewcommand{\L}{{\cal L}} 
\newcommand{\np}{{\sc NP}}
\newcommand{\ptime}{{\sc PTIME}}
\newcommand{\logcfl}{{\sc LOGCFL}}
\newcommand{\expspace}{{\sc Expspace}}
\newcommand{\twoexpspace}{{\sc 2Expspace}}
\newcommand{\nl}{{\sc NLogspace}}
\newcommand{\nat}{\ensuremath{\mathbb{N}}}

\newcommand{\reminder}[1]{{\large \bf Comment: #1\\}}
\newcommand{\OMIT}[1]{}

\renewcommand{\l}{\ell}
\newcommand{\A}{{\cal A}} 
\newcommand{\B}{{\cal B}} 
\newcommand{\G}{{\cal G}} 
\newcommand{\C}{{\cal C}} 
\newcommand{\D}{{\cal D}} 
\renewcommand{\H}{{\cal H}} 
\newcommand{\TW}{{\sf TW}} 
\newcommand{\AC}{{\sf AC}} 
\newcommand{\HW}{{\sf GHW}}
 \newcommand{\HHW}{{\sf HW}} 
\newcommand{\Var}{{\sf Var}} 
\newcommand{\Dom}{{\sf Dom}} 
\newcommand{\I}{{\cal I}} 
\newcommand{\R}{{\cal R}} 
\newcommand{\T}{{\cal T}} 
\newcommand{\ints}{\sf{int}} 
\newcommand{\exts}{\sf{ext}} 

\newcommand{\Enc}{{\sf Enc}} 
\newcommand{\Dec}{{\sf Dec}} 

\newcommand{\fth}{\hfill $\Box$}

\newcommand{\df}{\ensuremath{\mathrel{\smash{\stackrel{\scriptscriptstyle{
    \text{def}}}{=}}}}}



\newcommand{\apptheorem}[3]{\noindent\textbf{#1 #2.} \textit{#3}\vspace{2mm}}
\newcommand{\alemma}[2]{\apptheorem{Lemma}{#1}{#2}}
\newcommand{\aproposition}[2]{\apptheorem{Proposition}{#1}{#2}}
\newcommand{\atheorem}[2]{\apptheorem{Theorem}{#1}{#2}}
\newcommand{\acorollary}[2]{\apptheorem{Corollary}{#1}{#2}}

\title{A More General Theory of Static Approximations for Conjunctive Queries
\thanks{Barcel\'o is funded by Millennium Institute for Foundational Research on Data and Fondecyt Grant 1170109. Zeume acknowledges the financial support by the European Research Council
(ERC), grant agreement No 683080. Romero and Zeume thank the Simons Institute for the
Theory of Computing for hosting them. This project has received funding from the European
Research Council (ERC) under the European Union's Horizon 2020 research and
innovation programme (grant agreement No 714532). The paper reflects only the
authors' views and not the views of the ERC or the European Commission. The
European Union is not liable for any use that may be made of the information
contained therein.}
}


\author{Pablo Barcel\'o \and Miguel Romero \and Thomas Zeume}


\institute{P. Barcel\'o \at
               DCC, University of Chile \& IMFD Chile \\
              Tel.: +56229784813\\
              \email{pbarcelo@dcc.uchile.cl}           
           \and
           M. Romero \at
              Department of Computer Science, University of Oxford \\  
              \email{miguel.romero@cs.ox.ac.uk}  
           \and
            T. Zeume \at
            TU Dortmund \\
            \email{thomas.zeume@cs.tu-dortmund.de}
}

\date{Received: date / Accepted: date}

\maketitle

\begin{abstract}
Conjunctive query (CQ) evaluation is NP-complete, but becomes tractable for fragments of bounded hypertreewidth. Approximating a hard CQ by a query from such a fragment can thus allow for an efficient approximate evaluation. While underapproximations (i.e., approximations that return correct answers only) are well-understood, the dual notion of overapproximations (i.e, approximations that return complete -- but not necessarily sound -- answers), and also a 
more general notion of approximation based on the symmetric difference of query results, are almost unexplored.  In fact, the decidability of the basic problems of evaluation, identification, and existence of those approximations has been open. 

This article establishes a connection between overapproximations and existential pebble games that allows for studying such problems systematically. Building on this connection, it is shown that the evaluation and identification problem for overapproximations can be solved in polynomial time. While the general existence problem remains open, the problem is shown to be decidable in 2EXPTIME over the class of acyclic CQs and in PTIME for Boolean CQs 
over binary schemata. Additionally we  propose a more liberal notion of overapproximations to remedy the known shortcoming that queries might not have an overapproximation, and study how queries can be overapproximated in the presence of tuple generating and equality generating dependencies.

The techniques are then extended to symmetric difference approximations and used to provide several complexity results for the identification, existence, and evaluation problem for this type of approximations.

\keywords{conjunctive queries \and hypertreewidth \and approximations \and existential pebble game}
\end{abstract}

\section{Introduction}
\label{intro}

Due to the growing number of scenarios in which exact query evaluation is infeasible -- e.g.,
when the volume of the data being queried is very large, or when
queries are inherently complex --   
approximate query answering has become an important area of study in databases  
(see, e.g. \cite{GG01,Ioa03,Liu09,FLMTWW10,FO11}). 
Here we focus on approximate query answering  for the fundamental class of conjunctive queries (CQs). 

%

Exact query evaluation for CQs, that is, determining whether a tuple $\bar a$ is contained in the result of a query $q$ on a database $\D$, is NP-complete. It is known that the complexity of evaluation of a CQ depends on its {\em degree of acyclicity}, which can be formalized using different notions. One of the most general and well-studied such notions corresponds to {\em generalized hypertreewidth} 
\cite{GLS02}.  Notably, the classes of CQs of bounded generalized hypertreewidth can be evaluated in polynomial time (see \cite{GGLS16} for a survey).

Following recent work on approximate query answering for CQs and some related query languages \cite{BLR14,BRV16},  
we study approximation of CQs by queries of bounded generalized hypertreewidth. If a CQ can be approximated by such a restricted query, this provides 
a certificate for an efficient approximation of the evaluation problem. It is worth noticing that 
the approximations studied here are  {\em static} in the sense that they depend only on the 
CQ $q$ and not on the underlying database $\D$. This has clear benefits in terms of the cost of the approximation 
process, as $q$ is often orders of magnitude smaller than $\D$ and an approximation that has been computed once can be used for all databases. Moreover, it allows us to construct a principled approach  to CQ approximation based on the well-studied notion of CQ containment \cite{CM77}. 
Recall that a CQ $q$ is {\em contained} in a CQ $q'$, written $q \subseteq q'$, if $q(\D) \subseteq q'(\D)$ over each database $\D$. 
This notion constitutes the theoretical basis for the study of several CQ optimization problems \cite{AHV95}.

Before stating our contributions, we recall the precise notions of approximation under consideration in this article, as well as the algorithmic problems of interest. 

As mentioned above, we are interested in approximating CQs by queries from the class $\HW(k)$ of CQs of generalized hypertreewidth with width at most $k$, for some $k \geq 1$. Intuitively, an approximation of a CQ $q$ is a 
query $q' \in \HW(k)$ whose result when evaluated on a database $\D$ is so close to the result of $q$, that no result of another query from $\HW(k)$ is closer. A formalization of this notion was first introduced in \cite{BLR12}, based on the following partial order $\sqsubseteq_q$ over the  CQs in $\HW(k)$: if $q',q'' \in \HW(k)$, then $q' \sqsubseteq_q q''$ iff over every database $\D$ the {\em symmetric difference}\footnote{Recall that the symmetric difference 
between sets $A$ and $B$ is $(A \setminus B) \cup (B \setminus A)$.} 
 between $q(\D)$ and $q''(\D)$ is contained in the symmetric difference  between $q(\D)$ and $q'(\D)$.Intuitively, $q' \sqsubseteq_q q''$ if  $q''$ is a better $\HW(k)$-approximation of $q$ than $q'$. The $\HW(k)$-approximations of $q$ then correspond to maximal elements with respect to
$\sqsubseteq_q$ among a distinguished class of CQs in $\HW(k)$. 

Three notions of approximation were introduced in \cite{BLR12}, 
by imposing different ``reasonable'' conditions on such a class. These are:

\begin{itemize}
\item {\em Underapproximations:}  
In this case we look for approximations in the set of CQs $q'$ in $\HW(k)$ that are contained in $q$, i.e., $q' \subseteq q$. 
  This ensures that the evaluation of such approximations always produces correct (but not necessarily complete) answers to $q$.  A 
$\HW(k)$-underapproximation of $q$ is then a CQ $q'$ amongst these CQs that is {\em maximal} 
with respect to the partial order defined by $\sqsubseteq_q$. 
Noticeably, the latter coincides with being maximal with respect to the containment 
partial order $\subseteq$ among the CQs in $\HW(k)$ that are contained in $q$; i.e., no other CQ in such a set 
strictly contains $q'$. 

\item {\em Overapproximations:}  
This is the dual notion of underapproximations, in which we look for minimal elements
 in the class of CQs $q'$ in $\HW(k)$ that contain $q$, i.e., $q \subseteq q'$. 
Hence, $\HW(k)$-overapproximations produce complete (but not necessarily correct) answers to $q$. 

\item {\em Symmetric difference approximations:} 
While underapproximations must be contained in the original query, and overapproximations must contain it, 
symmetric difference approximations impose no constraints on approximations with respect to the partial order $\subseteq$. 
Thus, the symmetric difference $\HW(k)$-approximations of $q$ (or $\HW(k)$-$\Delta$-approximations from now on)
are the maximal CQs in $\HW(k)$ with respect to $\sqsubseteq_q$.
\end{itemize} 

The approximations presented above provide ``qualitative'' guarantees for evaluation, 
as they are as close as possible to $q$ among all CQs in $\HW(k)$ of a certain kind. 
In particular, 
under- and overapproximations are dual notions which provide lower and upper bounds for the exact evaluation of a CQ, 
while 
$\Delta$-approximations can give us useful information that complements the one provided by under- and overapproximations. 
Then, in order to develop a robust theory of bounded generalized hypertreewidth 
static approximations for CQs, it is necessary to have a good understanding of all three notions. 

The notion of underapproximation is by now well-understood~\cite{BLR14}.
Indeed, it is known that for each $k \geq 1$ the 
$\HW(k)$-underapproximations  have 
good properties that justify their application: 
(a)
they always exist, 
and (b) 
evaluating all $\HW(k)$-underapproximations of a CQ $q$ over a database $\D$ 
is {\em fixed-parameter tractable} with the size of $q$ as parameter.
 This is an improvement over general CQ evaluation 
for which the latter
is believed not to hold \cite{PY99}. 

In turn, the notions of 
overapproximations and $\Delta$-approximations, while introduced in \cite{BLR12}, are much less understood. In fact, 
no general tools have been identified so far for studying the decidability of basic problems such as:   
\begin{itemize}
\item {\em Existence:} Does CQ $q$ have a $\HW(k)$-overapproximation (or $\HW(k)$-$\Delta$-approximation)? 
\item {\em Identification:} Is it the case that $q'$ is a $\HW(k)$-overapproximation (resp., 
$\HW(k)$-$\Delta$-approximation) of $q$? 
\item{\em Evaluation:} Given a CQ $q$, a database $\D$, and a  tuple $\bar a$ in $\D$, is it the case that 
$\bar a \in q'(\D)$, for some $\HW(k)$-overapproximation (resp., $\HW(k)$-$\Delta$-approximation) $q'$ of $q$?
\end{itemize} 
Partial results were obtained in \cite{BLR12}, but based 
on ad-hoc tools. It has also been observed that  some CQs have no $\HW(k)$-overapproximations (in contrast to underapproximations, that always exist), which was seen as a negative result. 


\paragraph{{\em {\bf Contributions.}}} 
We develop tools for the study of overapproximations and $\Delta$-approximations. 
While we mainly focus on the former, we provide a detailed account of how our techniques can be extended to deal with the latter.  
In the context of $\HW(k)$-overapproximations, 
we apply our tools to pinpoint 
the complexity of 
evaluation and identification, and make progress
in the problem of existence. 
We also study when 
overapproximations do not exist and suggest how this can be alleviated. Finally, we study 
overapproximations in the presence of integrity constraints.  
Our contributions are as follows: 

\begin{enumerate}
\item
\underline{Link to existential pebble games.} 
We establish a link between $\HW(k)$-overapproximations  and  
 {\em existential pebble games} \cite{KV95}. 
Such games have been used to show that several classes of 
CQs of bounded width 
can be evaluated efficiently \cite{DKV02,CD05}. 
Using the fact that the existence of winning conditions 
in the existential pebble game can be checked in polynomial time~\cite{CD05}, we show 
that the identification and evaluation problems for $\HW(k)$-overapproximations are tractable.



%

\item
\underline{A more liberal notion of overapproximation.} 
We observe that non-existence of overapproximations is due to the fact that in some 
cases overapproximations 
require expressing conjunctions of infinitely many atoms. By relaxing our notion,  
we get that each CQ $q$ has a (potentially infinite) $\HW(k)$-overapproximation $q'$. This $q'$ is unique 
(up to equivalence). Further, it can be evaluated efficiently -- in spite of being potentially infinite -- by checking 
a winning condition 
for the existential $k$-pebble game on $q$ and $\D$. 

\item
\underline{Existence of overapproximations.}
It is still useful to check if a CQ 
$q$ has a {\em finite} $\HW(k)$-overapproximation $q'$, and compute it if possible. 
This might allow us to optimize $q'$ before evaluating it. 
There is also a difference in complexity, 
as existential pebble game techniques
are \ptime-complete in general \cite{KP03}, and thus inherently sequential, 
while evaluation of CQs in $\HW(k)$ is highly parallelizable (Gottlob et al. \cite{GLS02}). 

By exploiting automata techniques, we show that checking if a CQ $q$ has a (finite) $\HW(1)$-overapproximation $q'$ 
is in {\sc 2Exptime}. Also, when such $q'$ exists it can be computed in 
{\sc 3Exptime}. 
This is important since $\HW(1)$ coincides with the well-known class of {\em acyclic} CQs \cite{Yan81}. 
If the arity of the schema is fixed, these bounds drop to {\sc Exptime} and {\sc 2Exptime}, respectively. 
Also, we look at the case of binary schemas, which are for instance used in \emph{graph databases} \cite{Bar13} and {\em description logics} \cite{DL-handbook}. 
In this case, we show that for \emph{Boolean} CQs, $\HW(1)$-overapproximations can be computed efficiently via a greedy algorithm. 
This is optimal, as over ternary schemas we prove an exponential lower bound for the size of 
$\HW(1)$-overapproximations. 

We do not know if the existence problem is decidable for $k > 1$. However, we show that 
it can be recast as an 
unexplored boundedness condition for the existential pebble game. Understanding the decidability boundary for such
conditions is often difficult \cite{Otto2006,BOW14}.

\item
\underline{Overapproximations under constraints.} It has been observed that semantic information about the data, in the form 
of integrity constraints, enriches the quality of approximations \cite{BGP16}. 
This is based on the fact that approximations are now defined over a restricted set of databases; namely, those that satisfy the constraints. 

We study  
$\HW(k)$-overapproximations for the practical classes of {\em equality-generating dependencies} (egds), which subsume functional dependencies, 
and {\em tuple-generating dependencies} (tgds), which subsume inclusion dependencies. 
By extending our previously derived techniques, we show that each CQ $q$ admits an infinite $\HW(k)$-overapproximation under a set of constraints~$\Sigma$, in any of the following cases: 
 $\Sigma$ consists exclusively of egds, 
or $\Sigma$ is a set of {\em guarded} tgds \cite{CaGK08a}. Recall that the latter corresponds to a well-studied extension of the
class of inclusion dependencies. 
Such an infinite $\HW(k)$-overapproximation can be evaluated in polynomial time for the case of functional dependencies and guarded tgds 
(and so, for inclusion dependencies), and by a {\em fixed-parameter tractable} algorithm for egds.

\end{enumerate}
Figure \ref{table:summary} shows a summary of these results in comparison with previously known results about 
underapproximations (in the absence of constraints). 
 
\begin{figure}
\centering 
\scalebox{0.87}{
\begin{tabular}{|c|c|c|c|l|l|}
\hline
& {\bf Existence?} & {\bf Unique?} & {\bf Evaluation} & {\bf Existence check} \\ \hline
\emph{$\HW(k)$-underapp.} & always &  not always & NP-hard  & N/A \\ \hline
\emph{$\HW(k)$-overapp.} & not always & always & \ptime$^*$ & For $k = 1$:\\ \cline{6-6} 
& & & & \hspace{4mm} {\sc 2Exptime}$^*$   \\ \cline{6-6}
& & & & \hspace{4mm} \ptime$^*$ on binary schemas  \\ \cline{6-6}
& & & & For $k > 1$: Open \\ 
\hline 
\end{tabular}
}
\caption{Summary of results on under- and overapproximations of bounded generalized hypertreewidth (in the absence of constraints). 
The complexity of identification coincides with that of evaluation in both cases.  
New results are marked with ($^*$). All remaining results follow from \cite{BLR12,BLR14}.}
\label{table:summary}

\end{figure}

Our contributions for $\HW(k)$-$\Delta$-approximations are as follows. 
As a preliminary step, we show that $\HW(k)$-under- and $\HW(k)$-overapproximations are particular cases of $\HW(k)$-$\Delta$-approximations, 
but not vice versa. 
Afterwards, as for $\HW(k)$-overapproximations, we 
provide a link between $\HW(k)$-$\Delta$-overapproximations and the existential pebble game, 
and use it to characterize when a CQ $q$ has at least one $\HW(k)$-$\Delta$-approximation that is neither a 
$\HW(k)$-underapproximation nor a $\HW(k)$-overapproximation (a so-called {\em incomparable $\HW(k)$-$\Delta$-approximation}). 
This allows us to show that the identification problem for such 
$\Delta$-approximations is coNP-complete. 
As for the problem of checking for the existence of incomparable $\HW(k)$-$\Delta$-approximations, we
extend our automata techniques to prove that it is in {\sc 2Exptime} for $k = 1$ 
(and in {\sc Exptime} for fixed-arity schemas). In case such a
$\HW(1)$-$\Delta$-approximation exists, we can evaluate it using a fixed-parameter tractable algorithm. 
We also provide results on existence and evaluation of infinite incomparable $\HW(1)$-$\Delta$-approximations. 

\paragraph{{\em {\bf Organization.}}} 
Section \ref{preliminaries} contains background notions and results, while Section \ref{approximations} introduce  
approximations. 
Basic properties of overapproximations are presented in Section \ref{overapproximations},
while the existence of overapproximations is studied in Section \ref{decidability}. 
In Section \ref{constraints} we study overapproximations under constraints, while in Section 
\ref{beyond} we deal with $\Delta$-approximations. We conclude in 
Section \ref{section:conclusion}
with final remarks. 

\section{Preliminaries} \label{preliminaries} 

\subsection{Databases, homomorphisms, and conjunctive queries}

\paragraph{{\em {\bf Databases and homomorphisms.}}}
A {\em relational schema} $\sigma$ is a finite set of relation symbols, each one of which has
an arity $n > 0$. A {\em database} $\D$ over $\sigma$ is a finite 
set of facts of the form $R(\bar a)$, where $R$ is a relation symbol in $\sigma$ of arity $n$ and 
$\bar a$ is an $n$-tuple of constants. We often abuse notation and write $\D$ also for the set of constants mentioned 
 in the facts of $\D$. 

Let $\D$ and $\D'$ be databases 
over $\sigma$. A {\em homomorphism} from $\D$ to $\D'$ 
is a mapping $h$ from $\D$ to $\D'$ such that for every atom  
$R(\bar a)$ in $\D$ it is the case that $R(h(\bar a)) \in \D'$. Here, we use the convention that if 
$\bar a = (a_1,\dots,a_n)$ then $h(\bar a) = (h(a_1),\dots,h(a_n))$.
If $\bar a$ and $\bar b$ are $n$-ary  
tuples (for $n \geq 0$) in $\D$ and $\D'$, respectively, we 
write $(\D,\bar a) \to (\D',\bar b)$ if there is a homomorphism $h$ from $\D$ to $\D'$ such that 
$h(\bar a) = \bar b$. Checking if $(\D,\bar a) \to (\D',\bar b)$ is a well-known \np-complete problem. 


\paragraph{{\em {\bf Conjunctive queries.}}}
A {\em conjunctive query} (CQ) over schema $\sigma$ is a formula of the form  
$q(\bar x)=\exists \bar y \bigwedge_{1 \leq i \leq m} R_i(\bar x_i)$, where  each $R_i(\bar x_i)$ is an atom over $\sigma$ for each $i$ with $1 \leq i \leq m$, 
and the tuple $\bar x$ contains precisely the {\em free variables}, i.e., the variables that do not appear existentially quantified in $\bar y$. 
Note that our definition allows for repetitions of variables in $\bar x$. 
We often refer implicitly to the free variables $\bar x$ and write  $q$ for $q(\bar x)$.  
If $\bar x$ is empty, then $q$ is {\em Boolean}.  

As customary, we define the evaluation of CQs in terms of homomorphisms. 
Recall that the {\em canonical database} $\D_q$ of a CQ $q(\bar x)=\exists \bar y \bigwedge_{1 \leq i \leq m} R_i(\bar x_i)$ 
consists precisely of 
the atoms $R_i(\bar x_i)$, for $1 \leq i \leq m$. 
We then define the {\em result of $q$ over $\D$}, denoted 
$q(\D)$, as the set of all tuples $\bar a$ of elements (i.e., constants) in $\D$ 
such that $(\D_q,\bar x) \to (\D,\bar a)$. 
We often do not distinguish between a CQ $q$ and its canonical database $\D_q$, i.e., we write $q$ for $\D_q$. 
If $q$ is Boolean, then its evaluation over a database $\D$ correspond to the Boolean
values {\sf true} or {\sf false} depending on whether $q(\D) = \{()\}$ or $q(\D) = \emptyset$, respectively. 


\paragraph{{\em {\bf Evaluation and tractable classes of CQs.}}}
The {\em evaluation problem for CQs} is as follows: 
Given a CQ $q$, a database~$\D$, and a tuple $\bar a$ of elements 
in~$\D$, is $\bar a \in q(\D)$? Since this problem corresponds to checking if $(q,\bar x) \to (\D,\bar a)$,  
it is \np-complete \cite{CM77}. 
This led to a flurry of activity for finding 
classes of CQs for which evaluation is tractable 
(see, e.g., \cite{Yan81,CR00,GLS02,GM06}).

Here we deal with one of the most studied such classes: CQs of 
bounded {\em generalized hypertreewidth} \cite{GLS02}, also 
called \emph{coverwidth} \cite{CD05}. 
We adopt the definition of \cite{CD05} which is better suited for working with non-Boolean queries. 
A \emph{tree decomposition} of a CQ $q = \exists \bar y \bigwedge_{1 \leq i \leq m} R_i(\bar x_i)$ is a pair $(T,\chi)$, where $T$ is a tree and $\chi$  is a mapping that assigns a subset of the existentially quantified variables in $\bar y$ to each node $t \in T$, such that:
%
\begin{enumerate}
  \item For each $1 \leq i \leq m$, the variables in $\bar x_i \cap \bar y$ are contained in $\chi(t)$, for some $t \in T$.
  \item For each variable $y$ in $\bar y$, the set of nodes $t \in T$ for which $y$ occurs in $\chi(t)$ is connected. 
\end{enumerate} 

The {\em width} of  node $t$ in $(T,\chi)$ is the minimal  
size of an $I \subseteq \{1,\dots,m\}$ such that $\bigcup_{i \in I} \bar x_i$ 
covers $\chi(t)$ (where we slightly abuse notation and write $\bar x_i$ also for the set 
of variables mentioned in the tuple $\bar x_i$). The width of $(T,\chi)$ is the maximum width of the nodes of $T$.  
The \emph{generalized hypertreewidth} 
of $q$ is the minimum width of its tree decompositions.

For a fixed $k \geq 1$, we denote by $\HW(k)$ the class of CQs of generalized hypertreewidth 
at most $k$. 
The CQs in $\HW(k)$ can be evaluated in polynomial time; see \cite{GGLS16}.

\paragraph{{\em {\bf Containment of CQs.}}}
A CQ $q$ is {\em contained} in a CQ $q'$, written as $q\subseteq q'$,
if $q(\D) \subseteq q'(\D)$ over every database $\D$.
Two CQs $q$ and $q'$ are {\em equivalent}, denoted $q \equiv q'$, if $q \subseteq q'$ 
and $q' \subseteq q$. 

It is known 
that CQ containment and CQ evaluation are, essentially, the same problem \cite{CM77}. In particular, 
%
let $q(\bar x)$ and $q'(\bar x')$ be CQs. Then  
\begin{equation} 
\label{eq:cont} 
q \subseteq q' \quad \Longleftrightarrow \quad \bar x \in q'(\D_q)  \quad \Longleftrightarrow \quad 
(\D_{q'},\bar x') \to (\D_q,\bar x).
\end{equation} 
Thus, $q \subseteq q'$ and $(q',\bar x')\to (q,\bar x)$  -- i.e., $(\D_{q'},\bar x') \to (\D_q,\bar x)$ --  
are used interchangeably. 

\paragraph{{\em {\bf Cores of CQs.}}}
A CQ $q$ is a {\em core} \cite{CM77,HN92} if there is no CQ $q'$ with fewer atoms than $q$ such that $q \equiv q'$. 
Given CQs $q$ and $q'$, we say that $q'$ is a core of $q$ if $q'$ is a core and 
$q \equiv q'$. In other words, a core of $q$ is a minimal CQ (in terms of number of atoms) that is equivalent to $q$. 
The following result summarizes some important properties of cores.  

\begin{proposition} \label{prop:cores} 
{\em \cite{HN92}} The following statements hold:  
\begin{enumerate}
\item Each CQ $q$ has a core. Moreover, there is a unique core of $q$ 
up to renaming of variables. Therefore, we can talk about {\em the} core of $q$. 
\item Each core $q'$ of $q$ is a {\em retract} of $q$. That is, each atom of $q'$ is also an atom of $q$ and 
we can choose the homomorphism $h$ witnessing $(q,\bar x)\to (q',\bar x')$ to be a $\emph{retraction}$, 
i.e., to be the identity on the variables of $q'$. 
\end{enumerate} 
\end{proposition}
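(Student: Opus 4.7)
My plan is to prove both parts of the proposition by first establishing a key technical lemma about cores, from which uniqueness and the retract property will follow by short arguments. For existence, I would simply iterate the definition: if $q$ is not already a core, there exists $q' \equiv q$ with strictly fewer atoms; replace $q$ by $q'$ and repeat. Since the number of atoms is a nonnegative integer, this process terminates at a core equivalent to $q$.

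The central lemma I would prove is that every endomorphism of a core is an automorphism; here by an endomorphism of a CQ $c(\bar x)$ I mean a homomorphism $c \to c$ fixing $\bar x$ pointwise. The argument goes as follows: since $c$ has finitely many variables, the sequence of iterates of any endomorphism $e$ must satisfy $e^{n} = e^{n+m}$ for some $n,m \geq 1$, so some power $e^N$ is idempotent. The image $e^N(c)$ is a sub-CQ of $c$ equivalent to $c$ (via inclusion and via $e^N$, which act as mutually inverse homomorphisms between $c$ and $e^N(c)$). Minimality of the core then forces $e^N(c)$ to contain all atoms of $c$, hence $e^N$, and therefore $e$ itself, is a bijection on variables.

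Equipped with this lemma, uniqueness follows quickly: given two cores $c_1, c_2$ of $q$, equation (\ref{eq:cont}) supplies homomorphisms $h_1 \colon c_1 \to c_2$ and $h_2 \colon c_2 \to c_1$, and $h_2 \circ h_1$ is then an automorphism of $c_1$, forcing $h_1$ to be an isomorphism. For the retract statement, given a core $q'$ of $q$ together with $f \colon q' \to q$ and $g \colon q \to q'$ witnessing $q \equiv q'$, the endomorphism $g \circ f$ of $q'$ is an automorphism $\pi$ by the lemma; composing $f$ with $\pi^{-1}$ yields an injective homomorphism whose image is an isomorphic copy of $q'$ sitting inside $q$. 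After relabeling variables along this isomorphism we may assume $q'$ is literally a subset of the atoms of $q$, and then $g$ (adjusted by $\pi^{-1}$) becomes a retraction from $q$ onto $q'$.

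The main obstacle I anticipate is not the combinatorics — which is a standard pigeonhole-on-finite-sets argument — but keeping the bookkeeping of variable renamings consistent, particularly the final step that identifies $q'$ with an atom-subset of $q$ so that the resulting homomorphism genuinely deserves to be called a retraction rather than merely a homomorphism onto an isomorphic copy.
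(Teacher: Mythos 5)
Your proposal is correct, and there is nothing in the paper to compare it against: Proposition~\ref{prop:cores} is stated as a known result and simply cited from the literature (Hell and Ne\v{s}et\v{r}il), with no proof given. What you have reconstructed is essentially the standard argument from that literature: existence by iterating the minimization step, the key lemma that every endomorphism of a core (fixing the free tuple) is an automorphism via an idempotent power $e^N$, and then uniqueness and the retract property by composing the two witnessing homomorphisms and correcting by the resulting automorphism. Two small points of bookkeeping, neither a gap: in the uniqueness step, the fact that $h_2\circ h_1$ is an automorphism of $c_1$ only gives injectivity of $h_1$; to conclude that $h_1$ is an isomorphism you also apply the lemma to $h_1\circ h_2$ on $c_2$ (or, equivalently, replace $h_2$ by $\pi^{-1}\circ h_2$ so that the two maps become mutually inverse). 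Likewise, in the retract step the phrase ``$g$ adjusted by $\pi^{-1}$'' should be spelled out as the map $f\circ\pi^{-1}\circ g$, which one checks is the identity on the image variables $f\pi^{-1}(\mathrm{Var}(q'))$ and sends $\bar x$ to the image of $\bar x'$, so after renaming $q'$ to coincide with its image inside $q$ it is a genuine retraction. Also, ``inclusion and $e^N$ act as mutually inverse homomorphisms'' overstates matters -- they are merely homomorphisms in both directions witnessing equivalence -- but that is all the argument needs.
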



\subsection{The existential cover game}
Several results in the paper require applying techniques based on  
existential pebble games. 
We use a version of the {\em existential cover game}, 
that is tailored for CQs of bounded generalized hypertreewidth \cite{CD05}. 

Let $k \geq 1$. The existential $k$-cover game is played by \emph{Spoiler} and \emph{Duplicator}
on pairs $(\D,\bar a)$ and $(\D',\bar b)$,   
where $\D$ and $\D'$ are databases 
and $\bar a$ and $\bar b$ are $n$-ary (for $n \geq 0$) tuples over the elements (i.e., the constants) in 
$\D$ and $\D'$, respectively. 
The game proceeds in
rounds. 
In each round, Spoiler places (resp., removes) a
pebble on (resp., from) an element in $\D$, and Duplicator responds by
placing (resp., removing) 
its corresponding pebble on an element in (resp., from)~$\D'$.
The number of pebbles is not bounded, 
but Spoiler is constrained as follows: At any round $p$ of the
game, if $c_1,\dots,c_\l$ ($\l \leq p$)
are the elements marked by Spoiler's pebbles in~$\D$, 
there must be a set of at most $k$ atoms in $\D$ that contain all such elements (this
is why the game is called $k$-cover game, as pebbled elements are {\em covered} by no more than $k$ atoms).

Duplicator wins if she has a 
{\em winning strategy}, 
i.e., if she can indefinitely continue playing the game in such a way that 
after each round, if $c_1,\dots,c_\l$ are the elements that are marked by Spoiler's pebbles in $\D$ 
and $d_1,\dots,d_\l$  are the elements marked by the corresponding pebbles of Duplicator in~$\D'$, then 
$$\big((c_1,\dots,c_\l,\bar a),(d_1,\dots,d_\l,\bar b)\big)$$ 
is a {\em partial homomorphism} from $\D$ to~$\D'$. 
That is, for every atom $R(\bar c) \in \D$, where each element $c$ of $\bar c$ 
appears in $(c_1,\dots,c_\l,\bar a)$, 
it is the case that $R(\bar d) \in \D'$, where $\bar d$ is the tuple obtained from $\bar c$ 
by replacing each element $c$ of 
$\bar c$ by its corresponding element $d$ in $(d_1,\dots,d_\l,\bar b)$. 
We write $(\D,\bar a) \to_{k} (\D',\bar b)$ if Duplicator has a winning strategy.

Notice that if $(\D, \bar a) \to (\D', \bar b)$ via some homomorphism $h$ then $h$ witnesses that Duplicator can win the $k$-pebble game for each $k$. Hence,
$\to_k$ can be seen as an ``approximation'' of $\to$:
$$\to \ \subset  \ \cdots \ \subset \ \to_{k+1}  \
\subset  \ \to_k  \ \subset  \ \cdots  \ \subset  \ \to_{1}.$$   
These approximations are convenient complexity-wise: 
Checking if $(\D,\bar a) \to (\D',\bar b)$ is \np-complete, 
while $(\D,\bar a) \to_k (\D',\bar b)$ can be solved
efficiently. 

\begin{proposition}  \label{prop:games-poly}  {\em \cite{CD05}}  
Fix $k \geq 1$. Checking whether $(\D,\bar a) \to_k (\D',\bar b)$
is in polynomial time.  
\end{proposition}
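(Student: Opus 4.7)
The plan is to use a greatest-fixpoint algorithm over a polynomial-size space of game positions.

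First, I would define a \emph{position} to be a partial homomorphism $h$ from $\D$ to $\D'$ whose domain is a set of elements of $\D$ covered by at most $k$ atoms of $\D$, and which, together with the fixed assignment $\bar a \mapsto \bar b$, is a partial homomorphism. Since there are $O(|\D|^k)$ subsets of at most $k$ atoms of $\D$, each covering at most $kr$ elements (where $r$ is the maximum arity of the schema), and each such domain admits at most $|\D'|^{kr}$ mappings into $\D'$, the total number of positions is polynomial in $|\D|$ and $|\D'|$ for fixed $k$ and schema.

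Next, I would characterize Duplicator's winning strategies as the non-empty families $W$ of positions closed under (a) \emph{restriction}: for every $h \in W$ and every $D' \subseteq \mathrm{dom}(h)$, also $h|_{D'} \in W$; and (b) \emph{forth}: for every $h \in W$ and every element $x \in \D$ such that $\mathrm{dom}(h) \cup \{x\}$ is covered by at most $k$ atoms, there is some $h' \in W$ with $\mathrm{dom}(h') = \mathrm{dom}(h) \cup \{x\}$ extending $h$. A standard argument shows that Duplicator has a winning strategy iff the \emph{greatest} such family $W^*$ is non-empty: the positions reachable under any Duplicator winning strategy satisfy (a) and (b), and conversely any non-empty $W$ satisfying (a) and (b) directly gives Duplicator a playbook. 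Restriction (a) also forces the empty position into every non-empty $W$, so $W^*$ being non-empty is equivalent to the empty position -- i.e., the initial state of the game -- belonging to $W^*$.

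Then I would compute $W^*$ by the standard greatest-fixpoint iteration: initialize $W$ to the set of all positions, and repeatedly delete from $W$ any position violating (a) or (b) relative to the current $W$, until convergence. There are polynomially many rounds since each round either shrinks $W$ or halts; each round takes polynomial time since (a) and (b) are checkable by enumerating the polynomially many subdomains and the polynomially many extension elements together with candidate responses in $W$. The algorithm returns ``yes'' iff $W^*$ is non-empty.

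The only mildly delicate point is verifying the correspondence between the game -- with unboundedly many rounds of pebble placements and removals -- and the static fixpoint characterization. The key observation is that the state of the game at any moment is determined entirely by the current partial mapping, and that closure under Spoiler's single-step add and remove moves corresponds exactly to properties (b) and (a), respectively.
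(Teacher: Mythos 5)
Your overall strategy is the standard one behind this result (and the one the paper itself recalls from \cite{CD05} when it characterizes compact winning strategies as non-empty families of partial homomorphisms on $k$-unions closed under extension): view positions as partial homomorphisms compatible with $\bar a \mapsto \bar b$ whose domains are covered by at most $k$ atoms, characterize Duplicator's winning via a non-empty family closed under restriction and one-element extension, and compute the greatest such family by iterated deletion. That part of the argument, including the safety-game correspondence you sketch, is sound.

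The gap is in the counting step. You bound the number of positions by $O(|\D|^k)\cdot|\D'|^{kr}$ and conclude polynomiality ``for fixed $k$ and schema.'' But the proposition (and its uses in the paper, e.g.\ for evaluating overapproximations over arbitrary schemas) does not fix the schema: the maximum arity $r$ is part of the input, and $|\D'|^{kr}$ is exponential in $r$. As written, your argument only establishes the bounded-arity case. The missing idea, which is exactly the point of the $k$-cover game analysis in \cite{CD05}, is that a partial homomorphism whose domain is the union of atoms $R_1(\bar a_1),\dots,R_p(\bar a_p)$ of $\D$ (with $p\leq k$) must map each tuple $\bar a_i$ onto the tuple of some $R_i$-fact of $\D'$, since these atoms lie entirely inside the domain. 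Hence a position can be represented by a choice of at most $k$ atoms of $\D$ together with, for each, a target fact of $\D'$ over the same relation symbol (checked for mutual consistency on shared elements and for consistency with $\bar a \mapsto \bar b$). This bounds the number of positions by roughly $|\D|^k\cdot|\D'|^k$, independently of the arity, and with this representation the rest of your fixpoint argument goes through verbatim and yields the proposition in the generality in which it is stated and used.
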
 

Moreover, there is a connection between $\to_k$ and the evaluation of 
CQs in $\HW(k)$ that we heavily exploit in our work. 

\begin{proposition} \label{prop:games-tw} {\em \cite{CD05}}   
Fix $k \geq 1$. 
 Then 
 $(\D,\bar a) \to_{k} (\D',\bar b)$ iff for each CQ $q(\bar x)$ in $\HW(k)$ we have that 
if $(q,\bar x) \to (\D,\bar a)$ then $(q,\bar x) \to (\D',\bar b)$. 
\end{proposition}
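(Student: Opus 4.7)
The plan is to prove both directions separately, after establishing the standard equivalent reformulation of ``Duplicator wins the existential $k$-cover game on $(\D,\bar a)$ and $(\D',\bar b)$'' as the existence of a nonempty family $\H$ of partial homomorphisms from $\D$ to $\D'$ such that (i) every $f \in \H$ maps $\bar a$ to $\bar b$, (ii) $\text{dom}(f)$ is covered by at most $k$ atoms of $\D$, (iii) $\H$ is closed under restriction, and (iv) $\H$ has the \emph{extension} property: whenever $f \in \H$ and $U' \supseteq \text{dom}(f)$ is coverable by $\leq k$ atoms of $\D$, there is $f' \in \H$ extending $f$ with $\text{dom}(f') = U'$. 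This ``$k$-cover strategy'' reformulation follows from the fact that the game is a reachability game on a finite configuration space and therefore admits positional winning strategies whose set of reachable winning positions enjoys (i)--(iv).

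For the direction ($\Rightarrow$), assume such an $\H$ exists and let $q(\bar x) \in \HW(k)$ with $h \colon q \to \D$ a homomorphism witnessing $h(\bar x) = \bar a$. Fix a tree decomposition $(T,\chi)$ of $q$ of width $\leq k$. Traversing $T$ in DFS order from an arbitrary root, I would inductively assign to every node $t$ a partial homomorphism $f_t \in \H$ whose domain is $h(\chi(t))$ together with the elements of $\bar a$, arranged so that for adjacent nodes $t, t'$ the maps $f_t$ and $f_{t'}$ agree on $h(\chi(t) \cap \chi(t'))$; this is achievable by first restricting $f_t$ to the shared part and then applying the extension property to the width-$k$ cover of $\chi(t')$ supplied by the tree decomposition. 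The connectedness condition of tree decompositions then guarantees that $g(y) := f_t(h(y))$ for any $t$ with $y \in \chi(t)$ is independent of $t$, producing a well-defined map on existentially quantified variables. Extending by $g(\bar x) = \bar b$ is consistent since every $f_t$ already maps $\bar a$ to $\bar b$, and that $g$ is a homomorphism from $q$ to $\D'$ follows atom by atom: condition (1) of tree decompositions places the existential variables of each atom inside some $\chi(t)$, and $f_t$ preserves that atom because $h$ does.

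For the direction ($\Leftarrow$), I would prove the contrapositive: if Spoiler wins the $k$-cover game, construct a CQ $q \in \HW(k)$ with $(q, \bar x) \to (\D, \bar a)$ but $(q, \bar x) \not\to (\D', \bar b)$. Extract from Spoiler's winning strategy a (suitably pruned, finite) strategy tree $T$, each of whose nodes is labeled with a set of at most $k$ atoms of $\D$ covering Spoiler's current pebbles. Build $q$ by copying these labeled atoms using fresh existentially quantified variables, identifying variables across adjacent nodes according to the pebbles Spoiler keeps in place; designate as free variables $\bar x$ of $q$ those corresponding to the permanent tuple $\bar a$. Then $T$ itself is a width-$k$ tree decomposition of $q$, the natural ``representation'' map is a homomorphism $q \to \D$ sending $\bar x$ to $\bar a$, and any homomorphism $g \colon q \to \D'$ with $g(\bar x) = \bar b$ could be converted into Duplicator responses that withstand Spoiler's entire strategy, contradicting the hypothesis.

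The main obstacle I anticipate is the backward direction: the strategy tree must be pruned to a finite object (feasible because pebble configurations inhabit a finite space, so a minimal well-founded strategy exists) and the induced tree decomposition of $q$ must genuinely have width $\leq k$, which requires careful bookkeeping to ensure shared pebbles across the strategy tree correspond precisely to shared variables in $q$. A secondary subtlety in the forward direction is the DFS propagation: closure of $\H$ under restriction and extension is exactly what makes the $f_t$'s combine consistently along edges, and then connectedness of $(T,\chi)$ upgrades ``consistency across edges'' into ``consistency across all of $T$''.
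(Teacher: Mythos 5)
Your proposal is correct and takes essentially the route underlying the paper's own machinery: the paper merely cites \cite{CD05} for this proposition, but your forward direction is precisely the composition of a Duplicator strategy with a homomorphism propagated along a width-$k$ tree decomposition (spelled out in Lemma~\ref{lemma:aux1} and used in Theorem~\ref{theo:games-overapp}), with the family-of-partial-homomorphisms reformulation of winning strategies appearing verbatim in the proof of Proposition~\ref{prop:tree-automata}, and your backward direction is the strategy/game-tree-as-width-$k$-CQ construction of Lemma~\ref{lemma:tedious} combined with finiteness of the configuration space. One cosmetic slip: in your condition (ii) the domain of each $f \in \H$ should be a $k$-coverable set \emph{together with} the elements of $\bar a$ (which need not themselves be covered by $k$ atoms), but this does not affect the argument.
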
 

In particular, if $q(\bar x) \in \HW(k)$ then for every $\D$ and $\bar a$ it is the case that 
\begin{equation} \label{eq:tw} 
\bar a \in q(\D) \quad \Longleftrightarrow \quad (q,\bar x) \to (\D,\bar a) 
\quad \Longleftrightarrow \quad (q,\bar x) \to_k (\D,\bar a).  
\end{equation} 
That is, the ``approximation'' of $\to$ provided by $\to_{k}$
is sufficient for evaluating CQs in $\HW(k)$. 
Together with Proposition \ref{prop:games-poly}, this proves that CQs in $\HW(k)$
can be evaluated efficiently. 

As a matter of fact, the equivalences established in Equation \eqref{eq:tw} hold even if $q$ itself 
is not in $\HW(k)$, but its core $q'$ is in $\HW(k)$. That is, the evaluation problem for the class 
of CQs whose core is in $\HW(k)$ can be solved efficiently via the existential $k$-cover game. 
As established by Greco and Scarcello, this is precisely the boundary for when 
this good property holds. Indeed, 
for any Boolean CQ $q$ whose core is not in $\HW(k)$ it is possible to find a database $\D$ such that 
$q \to_k \D$ but $q \not\to \D$ \cite{GS17}.    

\subsection{Expressing the existential cover game as a CQ in $\HW(k)$}
\label{sec:compact}

An instrumental tool in several of our results is that for any given pair $(\D,\bar a)$, where $\D$ is a database and $\bar a$ is a tuple of elements (i.e., constants) 
over $\D$, we can construct a CQ $q(\bar x)$ in $\HW(k)$ that represents 
the possible moves of Spoiler in the existential $k$-cover 
game played from $\D$, but only up to certain number of rounds. 

For simplicity, we consider a {\em compact} version of the existential $k$-cover game as in~\cite{CD05}. 
For a database $\D$, we say that a set $S$ of elements of $\D$ is a \emph{$k$-union} if there exist $p$ atoms $R_1(\bar a_1),\dots R_p(\bar a_p)\in \D$ with $p\leq k$, 
such that $S=\bigcup_{1\leq i\leq p} \bar a_i$. That is, the $k$-unions of $\D$ are the sets that appear in a union of at most $k$ atoms of $\D$. 
In the compact existential $k$-cover game, Spoiler is allowed, in each round, to remove and place as many pebbles as desired, as long as 
the resulting pebbled elements form a $k$-union. (Notice the difference with the 
standard existential $k$-cover game, in which in each round Spoiler is allowed to either remove or place 
exactly one pebble). As before, 
Duplicator wins the compact existential $k$-cover game on $(\D,\bar a)$ and $(\D',\bar b)$ 
iff she has a winning strategy, i.e., she can indefinitely continue playing the game in such a way that after each round, 
if $c_1,\dots,c_\l$ are the elements marked by Spoiler's pebbles in $\D$ 
and $d_1,\dots,d_\l$  are the elements marked by the corresponding Duplicator's pebbles in~$\D'$, then 
$((c_1,\dots,c_\l,\bar a),(d_1,\dots,d_\l,\bar b))$ 
is a partial homomorphism from $\D$ to~$\D'$. 

As can be easily seen, the compact existential $k$-cover game is not more powerful than the standard one. 
That is, for each $k \geq 1$ 
Duplicator wins the existential $k$-cover game on pairs $(\D,\bar a)$ and $(\D',\bar b)$ iff 
she wins the compact existential $k$-cover game on such pairs \cite{CD05}.

We then write $(\D,\bar a) \to_{k}^c (\D',\bar b)$, for $k \geq 1$ and $c \geq 1$, if Duplicator 
has a winning strategy {\em in the first $c$ rounds} of the (compact) existential $k$-cover 
game on $(\D,\bar a)$ and $(\D',\bar b)$. That is, $(\D,\bar a) \to_k (\D',\bar b)$ if and only if $(\D,\bar a) \to_k^c (\D',\bar b)$ for every $c \geq 1$. 
The following is our desired technical result, which establishes that there is a CQ in $\HW(k)$ that represents 
the possible moves of Spoiler -- up to $c$ rounds -- in the existential $k$-cover 
game played from $(\D,\bar a)$.

\begin{lemma} \label{lemma:tedious} 
Let $\D$ be a database and $\bar a$ an $n$-ary tuple of elements in $\D$, for $n \geq 0$. 
For each $c \geq 1$ there is a CQ $q_c(\bar x_c)$ in $\HW(k)$ 
such that for each database $\D'$ and tuple $\bar b$ of the same arity as $\bar a$ we have that 
$$\bar b \in q_c(\D') \ \ \Longleftrightarrow \ \ 
(q_c,\bar x_c) \to (\D',\bar b) \ \ \Longleftrightarrow \ \ (\D,\bar a) \to_{k}^c (\D',\bar b).$$ 
\end{lemma}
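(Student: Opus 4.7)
The plan is to encode the compact existential $k$-cover game tree from $(\D,\bar a)$, unfolded to depth $c$, as a CQ $q_c$ whose free variables $\bar x_c$ correspond to $\bar a$, whose existential variables track Duplicator's responses along each branch, and whose atoms enforce the partial-homomorphism condition at every reachable game position. The first equivalence $\bar b\in q_c(\D')\Leftrightarrow (q_c,\bar x_c)\to(\D',\bar b)$ is then just the definition of CQ evaluation; the substance of the lemma is the second equivalence together with $q_c\in\HW(k)$.

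Set up notation as follows. Let $\Sigma_c$ be the set of sequences $\sigma=(S_1,\ldots,S_i)$ with $0\le i\le c$, where each $S_j$ is a $k$-union of $\D$, and set $S_0:=\emptyset$. Take $\bar x_c=(x_1,\ldots,x_n)$ as free variables with $x_p=x_q$ whenever $a_p=a_q$. For each $\sigma=(S_1,\ldots,S_i)$ with $i\ge 1$ and each $e\in S_i$, define $\mathrm{var}^\sigma(e)$ by: if $e=a_p$ set $\mathrm{var}^\sigma(e):=x_p$; otherwise, letting $j_0$ be the largest $j\le i$ with $e\notin S_{j-1}$, set $\mathrm{var}^\sigma(e):=z^{(S_1,\ldots,S_{j_0})}_e$, a fresh existential variable. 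This rule reuses the same variable exactly while the pebble on $e$ persists and introduces a new one each time $e$ is re-pebbled. Define $q_c$ to be the existential closure of the conjunction of all atoms $R(\mathrm{var}^\sigma(\bar c))$ with $\sigma=(S_1,\ldots,S_i)\in\Sigma_c$, $i\ge 1$, and $R(\bar c)\in\D$ such that $\bar c\subseteq S_i\cup\{a_1,\ldots,a_n\}$.

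To show $q_c\in\HW(k)$, take $T=\Sigma_c$ ordered by prefix, and let $\chi(\sigma):=\{\mathrm{var}^\sigma(e):e\in S_i\setminus\{a_1,\ldots,a_n\}\}$ for $\sigma=(S_1,\ldots,S_i)$, with $\chi((\,))=\emptyset$. The atoms arising from $\sigma$ have their existential variables inside $\chi(\sigma)$, giving condition (1). For connectivity, the variable $z^{(S_1,\ldots,S_{j_0})}_e$ appears precisely at those nodes in the subtree rooted at $(S_1,\ldots,S_{j_0})$ whose every coordinate beyond position $j_0$ contains $e$, which is itself a subtree and hence connected. Finally, the $p\le k$ atoms of $\D$ witnessing $S_i$ as a $k$-union translate to $p$ atoms of $q_c$ whose variables jointly cover $\chi(\sigma)$, so the width is at most $k$.

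For the equivalence with $(\D,\bar a)\to_k^c(\D',\bar b)$: given a Duplicator winning strategy for $c$ rounds, define $h(x_p):=b_p$ and $h(z^{\sigma_0}_e):=$\,Duplicator's pebble on $e$ immediately after Spoiler plays the history $\sigma_0$; pebble persistence and the variable-reuse rule give $h(\mathrm{var}^\sigma(e))=$\,Duplicator's current pebble on $e$ for every reachable $\sigma$, so the partial-homomorphism condition at round $|\sigma|$ makes $h$ satisfy every atom of $q_c$. Conversely, a homomorphism $h:q_c\to\D'$ with $h(\bar x_c)=\bar b$ induces a strategy: after Spoiler's history $\sigma$ ending in $S_i$, Duplicator places each pebble at $h(\mathrm{var}^\sigma(e))$, and the atoms of $q_c$ ensure the partial-homomorphism condition holds round by round. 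The main subtlety is designing $\mathrm{var}^\sigma$ so that a single rule simultaneously encodes pebble persistence in the game and yields the connectivity condition in the tree decomposition; once this is in place, the rest is bookkeeping.
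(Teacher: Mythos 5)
Your construction is essentially the paper's: you unfold the compact $k$-cover game tree to depth $c$, rename the non-distinguished elements per maximal run of consecutive pebbling (your $\mathrm{var}^\sigma$ rule coincides with the paper's occurrence-based renaming $e_{d,T}$), read off a width-$k$ tree decomposition from the unfolding itself, and convert between homomorphisms and $c$-round Duplicator strategies exactly as the paper does via the per-node variable bijections. The only deviation is that at each node you include all atoms whose elements lie in $S_i\cup\{a_1,\dots,a_n\}$ rather than only those inside the $k$-union, which is a harmless (arguably cleaner) variant that matches the game's winning condition directly.
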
 

Results of a similar kind have been obtained in \cite{KV95}, and actually the proof 
of Lemma \ref{lemma:tedious} can easily be obtained 
by adapting techniques in such paper. We provide a proof, nevertheless, 
as we use it in several results along the article. 

\begin{proof}[Lemma \ref{lemma:tedious}] Fix $c\geq 1$. We shall define a database $\D_c$ such that  $(\D_c,\bar a) \to (\D',\bar b)$ iff $(\D,\bar a) \to_{k}^c (\D',\bar b)$, for every pair $(\D',\bar b)$. 
The lemma will follow by choosing $q_c(\bar x_c)$ such that $(\D_{q_c}, \bar x_c)=(\D_c,\bar a)$. 

Let $S_1,\dots,S_N$ be an enumeration of all the $k$-unions of $\D$, where $N\geq 1$.  
In order to define $\D_c$, we first consider a pair $(T_c,\lambda_c)$ where $T_c$ is an ordered rooted tree (i.e., the children of a node are ordered) with rank $N$ and height $c$, and $\lambda_c$ is a labelling that maps each $t\in T_c$ to a $k$-union of $\D$ such that 
(i) $\lambda_c(r)=\emptyset$, for the root $r$, and (ii) if $t_i$ is the $i$-th child of $t$, then $\lambda_c(t_i)=S_i$. 
The pair $(T_c,\lambda_c)$ is simply a representation of all the possible moves of Spoiler on $\D$ in the compact existential $k$-cover game, up to $c$ rounds. 

Now we define a pair $(T_c,\beta_c)$ as follows. 
The intuition is that $(T_c,\beta_c)$ is obtained from $(T_c,\lambda_c)$ via renaming elements in the $\lambda_c(t)$'s in such a way that  
$(T_c,\lambda_c)$ satisfies the connectivity condition of tree decompositions. 
An \emph{occurrence} of an element $d\in \D$ in $(T_c,\lambda_c)$ is a maximal subtree $T$ of $T_c$ 
such that $d\in \lambda_c(t)$, for all $t\in T$. For each $d\in \D\setminus \bar a$ and each occurrence $T$ of $d$ in $(T_c,\lambda_c)$, we introduce a new element $e_{d,T}$, and modify each $\lambda_c(t)$ with $t\in T$ 
by replacing $d$ with $e_{d,T}$. The resulting labelling is $\beta_c$. Note that, for each $t\in T_c$, there is a bijection $\Phi_t$ from $\lambda_c(t)$ to $\beta_c(t)$. 
By construction, $\Phi_t(a)=a$, for all $a\in \lambda_c(t)\cap \bar a$, and $\Phi_t(d)=\Phi_{t'}(d)$, where $t'$ is the parent of $t$ and $d\in \lambda_c(t)\cap \lambda_c(t')$. 

The database $\D_c$ is obtained as follows. For each atom $R(\bar d)$ in $\D$ and node $t\in T_c$ such that $\bar d\subseteq \lambda_c(t)$, 
we add to $\D_c$ the atom $R(\Phi_t(\bar d))$. Note that the same atom $R(\bar d)$ can produce several atoms in $\D_c$. 
By definition, $\Phi_t: \lambda_c(t)\to \beta_c(t)$ and $\Phi_t^{-1}: \beta_c(t)\to \lambda_c(t)$ are partial homomorphisms from $\D$ to $\D_c$ and $\D_c$ to $\D$, respectively, for all $t\in T_c$. 
To see that the resulting CQ $q_c(\bar x_c)$ is in $\HW(k)$, we consider the pair $(T_c,\chi_c)$, 
where for each $t\in T_c$, $\chi_c(t)=\beta_c(t)\setminus \bar a$. By construction, $(T_c,\chi_c)$ is a tree decomposition of $q_c(\bar x_c)$ of width $k$. 

Let us conclude by stressing that $(\D_c,\bar a) \to (\D',\bar b)$ iff $(\D,\bar a) \to_{k}^c (\D',\bar b)$, for every pair $(\D',\bar b)$. 
Suppose that $(\D_c,\bar a) \to (\D',\bar b)$ via a homomorphism $h$. If the first move of Spoiler is the $k$-union $S_i$ of $\D$, for some $i\in \{1,\dots, N\}$, 
then the Duplicator responds with the partial homomorphism $h\circ \Phi_{t_i}$, where $t_i$ is the $i$-th child of the root $r$ (and hence, the domain of $\Phi_{t_i}$ is $\lambda_c(t_i)=S_i$). 
If in the next round Spoiler plays $S_j$ for some $j\in \{1,\dots, N\}$, then the Duplicator responds with $h\circ \Phi_{t_j}$, where $t_j$ is the $j$-th child of $t_i$. 
This is a valid move for Duplicator since, as mentioned above, $\Phi_{t_i}$ and  $\Phi_{t_j}$ are consistent. 
In this way, Duplicator can win the first $c$ rounds of the game as the height of $T_c$ is $c$. 

Conversely, assume that $(\D,\bar a) \to_{k}^c (\D',\bar b)$. 
We can define a homomorphism $h$ from $\D_c$ to $\D'$ such that $h(\bar a)=\bar b$ in a top-down fashion over $T_c$. 
We start with the root $r$, and make Spoiler play $\Phi_r^{-1}(\beta_c(r))$ as his first move. 
Suppose Duplicator responds with a partial homomorphism $g_0:\lambda_c(r)\to \D'$. Then $h_0=g_0\circ \Phi_r^{-1}$ 
is a partial homomorphism from $\beta_c(r)$ to $\D'$. 
Following this argument, and making Spoiler play accordingly, we can extend $h_0$ to partial homomorphisms $h_1,\dots,h_c$, 
where the domain of $h_p$, with $p\in\{1,\dots, c\}$, are the variables in some $\beta_c(t)$ such that the distance from $t$ to root $r$ is at most $p$. 
We can define $h=h_c$ to be our desired homomorphism from $\D_c$ to $\D'$. 
\qed
\end{proof} 

\section{Approximations of CQs} \label{approximations}  

Fix $k \geq 1$. 
Let $q$ be a CQ. 
The approximations of $q$ in $\HW(k)$ are  defined with respect to a partial order $\sqsubseteq_q$ over 
the set of CQs in $\HW(k)$. Formally, for any two  CQs $q',q''$ in $\HW(k)$ we have
$$q'\sqsubseteq_q q''  \ \ \Longleftrightarrow \ \ \Delta(q(\D),q''(\D)) \, \subseteq \, \Delta(q(\D),q'(\D)), \text{ for every database $\D$,}$$
where $\Delta(A,B)$ denotes the symmetric difference between sets $A$ and $B$. Thus, 
 $q'\sqsubseteq_q q''$, whenever the ``error'' of $q''$ with respect to $q$ -- measured in terms of the symmetric difference between 
 $q''(\D)$ and $q(\D)$ -- is contained in that of $q'$ for each database $\D$.
As usual, we write 
$q' \sqsubset_q q''$ if $q' \sqsubseteq_q q''$ but $q'' \not\sqsubseteq_q q'$. 

The approximations of $q$ in $\HW(k)$ always correspond to maximal elements, with respect to the partial order $\sqsubseteq_q$, 
over a class of CQs in $\HW(k)$ that satisfies certain conditions. 
The following three basic notions of approximation were identified in \cite{BLR12}: 

\begin{itemize} 
\item \underline{Underapproximations:}  Let $q,q'$ be CQs such that $q' \in \HW(k)$. Then $q'$ is a {\em $\HW(k)$-underapproximation} 
of $q$ if it is maximal, with respect to $\sqsubseteq_q$, among all CQs in $\HW(k)$ that are contained in $q$. That is, it holds that  
$$\text{$q' \subseteq q$, and there is no CQ $q'' \in \HW(k)$ such that $q'' \subseteq q$ and $q' \sqsubset_q q''$.}$$ 
In particular, the 
$\HW(k)$-underapproximations of $q$ produce correct (but not necessarily complete) answers with respect to $q$ over every database $\D$. 
  
\item \underline{Overapproximations:}  Analogously, $q'$ is a {\em $\HW(k)$-overapproximation} 
of $q$ if it is maximal, with respect to $\sqsubseteq_q$, among all CQs in $\HW(k)$ that contain $q$. That is, it holds that 
$$\text{$q \subseteq q'$, and there is no CQ $q'' \in \HW(k)$ such that $q \subseteq q''$ and $q' \sqsubset_q q''$.}$$
Hence,  
$\HW(k)$-overapproximations of $q$ produce complete (but not necessarily correct) answers with respect to $q$
 over every database $\D$.
  
\item \underline{$\Delta$-approximations:}  
In this case we impose no restriction on $q'$. 
That is, $q'$ is a {\em $\HW(k)$-$\Delta$-approximation} of $q$ if it is maximal with respect to the partial order $\sqsubseteq_q$, i.e., 
there is no $q''\in \HW(k)$ such that 
$q' \sqsubset_q q''$. 
\end{itemize} 

Underapproximations and overapproximations admit an equivalent, but arguably simpler 
characterization as maximal (resp., minimal) elements,
with respect to the containment partial order $\subseteq$, among all CQs in $\HW(k)$ that are contained in $q$ (resp., contain $q$). We 
state this characterization next. 

\begin{proposition}  {\em \cite{BLR12}}  
Fix $k \geq 1$. Let $q,q'$ be CQs such that $q' \in \HW(k)$. Then:
\begin{itemize} 
\item $q'$ is a $\HW(k)$-underapproximation of $q$ iff $q' \subseteq q$ and there is no CQ $q'' \in \HW(k)$ such that 
$q' \subset q'' \subseteq q$. 
\item $q'$ is a $\HW(k)$-overapproximation of $q$ iff $q \subseteq q'$ and there is no CQ $q'' \in \HW(k)$ such that 
$q \subseteq q'' \subset q'$. 
 \end{itemize}   
\end{proposition}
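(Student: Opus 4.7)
The plan is to show that when both candidate CQs $q'$ and $q''$ lie in the ``underapproximation regime'' (both contained in $q$), the partial order $\sqsubseteq_q$ collapses to ordinary containment $\subseteq$; and symmetrically, when both lie in the ``overapproximation regime'' (both containing $q$), $\sqsubseteq_q$ collapses to reverse containment. Once this is established, each of the two equivalences in the proposition is immediate: maximality with respect to $\sqsubseteq_q$ among CQs contained in $q$ becomes maximality with respect to $\subseteq$, and maximality with respect to $\sqsubseteq_q$ among CQs containing $q$ becomes minimality with respect to $\subseteq$.

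For the underapproximation case, I would fix a database $\D$ and use the hypothesis $q'(\D) \subseteq q(\D)$ (and likewise for $q''$) to rewrite
\[
\Delta(q(\D), q'(\D)) \;=\; q(\D) \setminus q'(\D), \qquad \Delta(q(\D), q''(\D)) \;=\; q(\D) \setminus q''(\D).
\]
Then $\Delta(q(\D), q''(\D)) \subseteq \Delta(q(\D), q'(\D))$ reduces to $q(\D) \setminus q''(\D) \subseteq q(\D) \setminus q'(\D)$, which, using that both difference sets are inside $q(\D)$, is equivalent to $q'(\D) \subseteq q''(\D)$. Since this holds on every $\D$, we get that $q' \sqsubseteq_q q''$ iff $q' \subseteq q''$ whenever $q', q'' \subseteq q$. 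Consequently, maximality of $q'$ in the set of $\HW(k)$-CQs contained in $q$ under $\sqsubseteq_q$ is the same as: there is no $q'' \in \HW(k)$ with $q'' \subseteq q$ and $q' \subsetneq q''$, which is exactly the stated condition.

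The overapproximation case is completely dual. Assuming $q \subseteq q'$ and $q \subseteq q''$, the symmetric differences become
\[
\Delta(q(\D), q'(\D)) \;=\; q'(\D) \setminus q(\D), \qquad \Delta(q(\D), q''(\D)) \;=\; q''(\D) \setminus q(\D),
\]
so $q' \sqsubseteq_q q''$ over all $\D$ is equivalent to $q''(\D) \setminus q(\D) \subseteq q'(\D) \setminus q(\D)$ for all $\D$, which (since the omitted parts coincide with $q(\D)$) is equivalent to $q'' \subseteq q'$. Hence on the overapproximation side $\sqsubseteq_q$ is the reverse of $\subseteq$, so maximal elements under $\sqsubseteq_q$ are the $\subseteq$-minimal ones, giving the stated characterization.

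There is essentially no obstacle here; the only point that requires a bit of care is verifying that the reductions of the symmetric differences to one-sided differences really do turn set-wise inclusion over every $\D$ into CQ containment. This is handled by remarking that $q \subseteq q'$ is, by definition, $q(\D) \subseteq q'(\D)$ for every $\D$, which is exactly the ``pointwise'' hypothesis needed in the calculations above. No appeal to $\HW(k)$-specific properties (for instance Proposition~\ref{prop:games-tw}) is needed; the argument is purely order-theoretic and uses only the definition of $\sqsubseteq_q$.
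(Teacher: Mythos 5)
Your argument is correct: since both candidates lie on the same side of $q$, the symmetric differences reduce to one-sided differences, and the pointwise set computation shows that $\sqsubseteq_q$ coincides with $\subseteq$ on the underapproximation side and with $\supseteq$ on the overapproximation side, which immediately yields both equivalences. The paper itself gives no proof of this proposition (it is cited from \cite{BLR12}), but your purely order-theoretic calculation is the standard one and matches the style of reasoning the paper does spell out for the closely related Proposition~\ref{prop:general1} on $\Delta$-approximations, so there is nothing to add or correct.
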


The basic theoretical properties of 
$\HW(k)$-underapproximations are by now well-understood \cite{BLR14}. We concentrate on $\HW(k)$-overapproximations and 
$\HW(k)$-$\Delta$-approximations in this paper. We start by studying the former.

\section{Overapproximations}\label{overapproximations} 
Recall that  
$\HW(k)$-overapproximations are minimal elements (in terms of $\subseteq$) 
in the set of CQs in $\HW(k)$ that contain $q$. We show an example of a $\HW(1)$-overapproximation below.  

\begin{example}
\label{ex:overapp} 
Figure \ref{fig:fig1} shows a CQ $q$ and its $\HW(1)$-overapproximation $q'$. The schema consists of binary symbols
$P_a$ and $P_b$. Nodes represent variables, and an edge labeled $P_a$ between $x$ and 
$y$ represents the presence of atoms 
$P_a(x,y)$ and $P_a(y,x)$. (Same for $P_b$). All variables are existentially quantified. 
Clearly, $q \subseteq q'$ (as $q' \to q$). In addition, there is no CQ $q'' \in \HW(1)$ such 
that $q \subseteq q'' \subset q'$. We provide an explanation for this later.  \qed
\end{example}

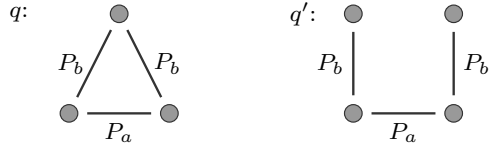
\begin{figure}
\centering 
  \scalebox{1.1}{
  \begin{tikzpicture}[
      xscale=0.6,
      yscale=0.6,      
    ]
      \node (0) at (-1,2){$q$:};
      
      \node[mnode] (1) at (0,0){};
      \node[mnode] (2) at (2,0){};
      \node[mnode] (3) at (1,2){};

      \draw [uEdge] (1) to node[below]{$P_a$} (2);
      \draw [uEdge] (2) to node[right]{$P_b$} (3);
      \draw [uEdge] (3) to node[left]{$P_b$} (1);
  \end{tikzpicture}\hspace{10mm}
    \begin{tikzpicture}[
      xscale=0.6,
      yscale=0.6,
    ]
      \node (0) at (-1,2){$q'$:};
      
      \node[mnode] (1) at (0,0){};
      \node[mnode] (2) at (2,0){}; 
      \node[mnode] (3) at (0,2){};      
      \node[mnode] (4) at (2,2){};
      
      \draw [uEdge] (1) to node[below]{$P_a$} (2);
      \draw [uEdge] (1) to node[left]{$P_b$} (3);
      \draw [uEdge] (2) to node[right]{$P_b$} (4);
  \end{tikzpicture}
  }
\caption{The CQ $q$ and its $\HW(1)$-overapproximation $q'$ from Example \ref{ex:overapp}.} 
\label{fig:fig1}
\end{figure} 

We start in Section \ref{sec:uniqueness} by stating some basic properties on 
existence and uniqueness of $\HW(k)$-overapproximations. 
Later in Section \ref{sec:pg} we establish 
a connection between $\HW(k)$-overapproximations and the existential pebble game, which allows us to show 
that both the identification and evaluation problems for $\HW(k)$-overapproximations are tractable. 
Finally, in Section~\ref{sec:liberal} we look at the case when $\HW(k)$-overapproximations do not exist, and suggest how this can 
be alleviated by allowing infinite overapproximations. 

\subsection{Existence and uniqueness of overapproximations} 
\label{sec:uniqueness} 

As  shown in \cite{BLR12}, existence of overapproximations is not a general phenomenon. In fact, for every $k>1$ there is a Boolean CQ $q$ in $\HW(k)$ that has no $\HW(1)$-overapproximation. 
Using the characterization given later in Theorem~\ref{theo:boundedness}, we can strengthen this further. 

\begin{theorem}
\label{theo:non-existence} 
For each $k>1$, there is a Boolean CQ $q \in \HW(k)$ without $\HW(\ell)$-overapproximations for any $1 \leq \ell<k$. 
\end{theorem}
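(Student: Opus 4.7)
The overall plan is to combine the forthcoming characterization of overapproximation existence (Theorem~\ref{theo:boundedness}) with an explicit gadget construction. Theorem~\ref{theo:boundedness}, once established, will say that a CQ $q$ admits a $\HW(\ell)$-overapproximation precisely when the existential $\ell$-cover game starting from $(q, ())$ is ``bounded'' in the sense that the descending chain of $\HW(\ell)$-approximants $q_{\ell}^{(c)}$ obtained from Lemma~\ref{lemma:tedious} stabilizes up to equivalence. Recall that each $q_{\ell}^{(c)} \in \HW(\ell)$ satisfies $q \subseteq q_{\ell}^{(c)}$ (since the identity yields a winning Duplicator strategy on $(q,())$ against itself), and that $q_{\ell}^{(c+1)} \subseteq q_{\ell}^{(c)}$ follows from monotonicity of $\to_{\ell}^{c}$ in $c$. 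Hence, to prove non-existence of a $\HW(\ell)$-overapproximation of $q$ it suffices to exhibit, for every $c \geq 1$, a database $\D_{c}^{\ell}$ with $(q,()) \to_{\ell}^{c} (\D_{c}^{\ell}, ())$ but $(q,()) \not\to_{\ell}^{c+1} (\D_{c}^{\ell}, ())$: any candidate finite overapproximation $q^{*} \in \HW(\ell)$ of $q$ would, by a standard compactness argument on the finitely many $\ell$-unions of $q^{*}$, satisfy $q_{\ell}^{(c_0)} \subseteq q^{*}$ for some $c_{0}$, but then $q_{\ell}^{(c_0+1)}$ would be a $\HW(\ell)$-CQ strictly contained in $q^{*}$ and still containing $q$, violating minimality.

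With this reduction in hand, the task is to construct, for each $k > 1$, a single Boolean $q_{k} \in \HW(k)$ that simultaneously induces an unbounded $\ell$-cover game for every $1 \leq \ell < k$. The plan is to take a ``layered'' structure $q_{k} = G_{1} \sqcup G_{2} \sqcup \cdots \sqcup G_{k-1}$ on disjoint sorts, where each $G_{\ell}$ is a gadget of generalized hypertreewidth at most $k$ whose existential $\ell$-cover game is unbounded in the precise sense above. For $\ell = 1$, $G_{1}$ is the gadget from \cite{BLR12}; for $2 \leq \ell < k$ I would use an adaptation of the same style (for example, an arbitrarily long chain of overlapping $(\ell+1)$-ary atoms labelled over a binary schema so that an $\ell$-pebbled Duplicator is forced to keep ``sliding'' without ever fully committing). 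Two things must be checked: (i) $q_{k} \in \HW(k)$, which is immediate for disjoint unions since $\HW$ is the maximum width across connected components; and (ii) the $\ell$-cover game on $q_{k}$ inherits the unboundedness from $G_{\ell}$, which holds because any $k$-union of atoms in a disjoint union lies in a single component, so Spoiler's play on the $G_{\ell}$-component is unaffected by the presence of other sorts.

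The main obstacle is the construction and verification of the gadgets $G_{\ell}$ for $\ell \geq 2$: one must produce a CQ of generalized hypertreewidth at most $k$ whose $\ell$-cover game chain provably never stabilizes, and this requires a careful combinatorial argument on the pebble game. A cleaner (but less modular) alternative I would pursue if the sort-based construction gets unwieldy is to build a single ``universal'' $q_{k}$ from a long sequence of overlapping $(k+1)$-cliques over a binary schema, labelled so that the $\ell$-cover-game profiles grow unboundedly for every $\ell < k$ at once; this would replace the gadget-by-gadget analysis by a single direct pebble-tracking argument, at the cost of a more intricate labelling scheme. Either way, the delicate step is showing that the bounded memory of the $\ell$-cover game ($\ell$ pebbles at a time) is provably insufficient to track the width-$k$ essence of $q_{k}$, uniformly across all $\ell < k$.
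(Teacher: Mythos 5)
Your high-level reduction is the same one the paper uses: via Theorem~\ref{theo:boundedness} and the queries $q_c$ of Lemma~\ref{lemma:tedious}, non-existence of a $\HW(\ell)$-overapproximation follows once one shows that the chain of unravelings descends strictly forever, i.e.\ $q_{c+1}\not\to q_c$ for every $c$. (Two small repairs to your inline re-derivation: an $\ell$-overapproximation $q^*$ is a \emph{lower bound} of all $\HW(\ell)$-CQs containing $q$ by Proposition~\ref{prop:over-lower}, so $q^*\subseteq q_{c_0+1}$ always holds and nothing can be ``strictly contained in $q^*$''; the contradiction is rather that $q^*\equiv q_{c_0}\equiv q_{c_0+1}$, i.e.\ stabilization. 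Moreover the step ``$q_{c_0}\subseteq q^*$ for some $c_0$'' is exactly the hard direction of Theorem~\ref{theo:boundedness}, whose proof in the paper needs the infinite unraveling and Proposition~\ref{prop:games-tw-infty}; your ``standard compactness on the $\ell$-unions of $q^*$'' does not obviously yield it, so you should simply cite the theorem, which is proved independently.)

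The genuine gap is that the heart of the theorem — an explicit Boolean $q\in\HW(k)$ together with a proof that $q_{c+1}\not\to q_c$ for \emph{every} $\ell<k$ and every $c$ — is not supplied. For $\ell\geq 2$ you only gesture at ``an adaptation of the same style'' and yourself flag the construction and verification of the gadgets $G_\ell$ as the main obstacle; that is precisely the content the proof must contain. The paper does this with a single CQ: a tournament on $2k-1$ vertices containing a directed triangle and satisfying a combinatorial distinguishability condition ($\dagger$) (whose existence is Lemma~\ref{lemma:good-graphs}, proved by induction), and then a core/retraction argument showing that a fresh leaf-bag element of $q_{c+1}$ cannot be retracted into $q_c$ without violating ($\dagger$); the case $\ell=1$ is handled separately via directed paths and the absence of $2$-cycles. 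In addition, your verification sketch for the layered construction rests on a false claim: for $\ell\geq 2$ an $\ell$-union of atoms of a disjoint union need \emph{not} lie in a single component, since Spoiler may cover his pebbles with atoms taken from different layers, so the asserted ``inheritance'' of unboundedness needs an actual argument (e.g.\ sort-disjoint relation symbols plus a componentwise analysis of Duplicator strategies and of the witnessing databases), not just the observation about $\HW$ of disjoint unions. As it stands, the proposal is a plausible plan whose decisive combinatorial content is missing.
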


Figure \ref{fig:fig2} depicts examples of CQs in $\HW(k)$, for $k = 2$ and $k = 3$, respectively, 
without $\HW(\ell)$-overapproximations for any $1 \leq \ell<k$. The proof of Theorem \ref{theo:non-existence} is long 
and quite technical, and for such a reason we relegate it to the appendix.  

Interestingly, when $\HW(k)$-overapproximations do exist, they are unique (up to equivalence). 
This is because, in this case, $\HW(k)$-overapproximations are not only the minimal elements, but also the lower bounds of the set of CQs in 
$\HW(k)$ that contain $q$. 

In order to show uniqueness, we need to introduce a simple construction that also will be useful when studying $\Delta$-approximations (Section \ref{beyond}). 
Let $q(\bar x)$ and $q'(\bar x')$ be two CQs such that $|\bar x|=|\bar x'|=n$. 
The \emph{disjoint conjunction} of $q$ and $q'$ is the CQ $(q\wedge q')(\bar z)$, with $|\bar z|=n$ defined as follows. 
First we rename 
each existentially quantified variable in $q$ and $q'$ with a different fresh variable, and then take the conjunction of the atoms in 
$q$ and~$q'$. Finally, if $\bar x=(x_1,\dots,x_n)$ and $\bar x'=(x'_1,\dots,x'_n)$, we identify $x_i$ and $x'_i$ for every $1\leq i\leq n$. 
The $i$-th variable of $\bar z$ is the variable obtained after identifying $x_i$ and $x'_i$. 
By construction, the following hold: 
\begin{enumerate}
\item $(q,\bar x)\to (q\wedge q', \bar z)$ and $(q',\bar x')\to (q\wedge q', \bar z)$,
\item if $(q,\bar x)\to (p, \bar w)$ and $(q',\bar x')\to (p, \bar w)$, for some CQ $p(\bar w)$, then $(q\wedge q', \bar z)\to (p, \bar w)$, and 
\item if $q,q'\in \HW(k)$ for $k\geq 1$, then $q\wedge q'\in \HW(k)$. 
\end{enumerate}
Note that property (1) and (2) tell us that $q\wedge q'$ is the \emph{least upper bound} of $q$ and $q'$ with respect to the order $\to$. 
We have the following result: 

\begin{proposition}
\label{prop:over-lower}  
Let $q(\bar x)$ and $q'(\bar x')$ be CQs such that $q' \in \HW(k)$. The following are equivalent: 
\begin{enumerate}
\item  $q'(\bar x')$ is a $\HW(k)$-overapproximation of $q(\bar x)$. 
\item (i) $q \subseteq q'$, and (ii) for every  CQ $q''(\bar x'') \in \HW(k)$, it is the case that $q\subseteq q''$ implies $q'\subseteq q''$.
\end{enumerate} 
\end{proposition}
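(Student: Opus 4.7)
The plan is to exploit the disjoint conjunction construction just introduced, which realises the greatest lower bound in the containment order (equivalently, the least upper bound with respect to $\to$), together with the fact that this construction stays inside $\HW(k)$ (property 3).

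The direction (2) $\Rightarrow$ (1) is essentially free: if $q'$ is contained in every CQ in $\HW(k)$ that contains $q$, then in particular there cannot be any $q'' \in \HW(k)$ with $q \subseteq q'' \subsetneq q'$, so $q'$ is a minimal such CQ and hence a $\HW(k)$-overapproximation of $q$.

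For the nontrivial direction (1) $\Rightarrow$ (2), I would assume $q'(\bar x')$ is a $\HW(k)$-overapproximation of $q(\bar x)$ and pick an arbitrary $q''(\bar x'') \in \HW(k)$ with $q \subseteq q''$. Form the disjoint conjunction $(q' \wedge q'')(\bar z)$. The idea is that $q' \wedge q''$ is still a $\HW(k)$-CQ containing $q$, but is contained in $q'$, so by minimality of $q'$ it must be equivalent to $q'$, forcing $q' \subseteq q''$. In detail: by property (3) of the disjoint conjunction, $q' \wedge q'' \in \HW(k)$; by property (1) applied to $q'$ and $q''$ and then via \eqref{eq:cont}, we have $q' \wedge q'' \subseteq q'$ and $q' \wedge q'' \subseteq q''$; finally, since $q \subseteq q'$ and $q \subseteq q''$, property (2) of the disjoint conjunction (read through \eqref{eq:cont}) yields $q \subseteq q' \wedge q''$. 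Thus $q' \wedge q''$ sits between $q$ and $q'$ in the containment order and lies in $\HW(k)$. Minimality of $q'$ as an overapproximation then forces $q' \wedge q'' \equiv q'$, and composing with $q' \wedge q'' \subseteq q''$ gives $q' \subseteq q''$, as required.

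The only step that requires care is applying the right characterisation of minimality. The statement in Section~\ref{approximations} defines $\HW(k)$-overapproximations as maximal elements with respect to $\sqsubseteq_q$, but the proposition right after (quoted from \cite{BLR12}) gives the simpler equivalent characterisation as minimal elements with respect to $\subseteq$ among the $\HW(k)$-CQs containing $q$, which is exactly what the above argument needs. No other obstacle is expected: the core of the argument is the single observation that the disjoint conjunction of two $\HW(k)$-overapproximation candidates is again a $\HW(k)$-candidate containing $q$, so minimality collapses everything onto $q'$.
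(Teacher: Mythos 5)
Your proof is correct and follows essentially the same route as the paper: both arguments form the disjoint conjunction $q' \wedge q''$, observe via its three listed properties that $q \subseteq (q' \wedge q'') \subseteq q'$ with $q' \wedge q'' \in \HW(k)$, and then invoke the minimality characterisation of $\HW(k)$-overapproximations to conclude $q' \subseteq q' \wedge q'' \subseteq q''$. The only cosmetic difference is that the paper phrases the argument as a proof by contradiction while you argue directly.
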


\begin{proof}
We only prove the nontrivial direction $(1)\Rightarrow (2)$. 
By contradiction, suppose that there is a CQ $q''\in \HW(k)$ such that $q\subseteq q''$ but $q'\not\subseteq q''$. 
Let $(q'\wedge q'')(\bar z)$ be the disjoint conjunction of $q'$ and $q''$. 
By definition, we have that $q'\wedge q''$ is in $\HW(k)$, and $q\subseteq (q'\wedge q'') \subseteq q'$. 
But $q'$ is a $\HW(k)$-overapproximation of $q$, and thus $q' \subseteq (q'\wedge q'')$. 
Again by construction of $q'\wedge q''$, we have that $(q'\wedge q'') \subseteq q''$, 
and then
$q'\subseteq q''$. This is 
a contradiction.  
\qed
\end{proof}

As a corollary, we immediately obtain the following. 

\begin{corollary}
\label{coro:over-unique} 
Consider a CQ $q$ with $\HW(k)$-overapproximations  $q_1$ and $q_2$.  
Then it is the case that $q_1\equiv q_2$. 
\end{corollary}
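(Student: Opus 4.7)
The plan is to derive this directly from Proposition \ref{prop:over-lower}, which has just been established. The point is that Proposition \ref{prop:over-lower} gives us not just minimality of a $\HW(k)$-overapproximation $q'$ within the set $\{q'' \in \HW(k) : q \subseteq q''\}$, but the stronger property that $q'$ is a \emph{lower bound} of this set under $\subseteq$. Uniqueness up to equivalence then falls out by a standard antisymmetry argument.

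More concretely, suppose $q_1(\bar x_1)$ and $q_2(\bar x_2)$ are both $\HW(k)$-overapproximations of $q(\bar x)$. Applying Proposition \ref{prop:over-lower} to $q_1$ as a $\HW(k)$-overapproximation, and taking $q'' := q_2$ in clause (2), we immediately get $q_1 \subseteq q_2$, since $q_2 \in \HW(k)$ and $q \subseteq q_2$ by clause (i) applied to $q_2$. Swapping the roles of $q_1$ and $q_2$, the same argument yields $q_2 \subseteq q_1$. Combining both containments gives $q_1 \equiv q_2$.

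There is essentially no obstacle here: the whole content of the corollary is packaged into Proposition \ref{prop:over-lower}, and the work already happened in that proof (via the disjoint conjunction construction, which guarantees that the family of $\HW(k)$-CQs containing $q$ is closed under least upper bounds in $\HW(k)$). The only thing to be slightly careful about is matching arities of the free-variable tuples in order to apply Proposition \ref{prop:over-lower} in both directions; but since being a $\HW(k)$-overapproximation of $q(\bar x)$ forces the free-variable tuple to have the same arity as $\bar x$ (otherwise neither $q \subseteq q_i$ nor $q_i \subseteq q_j$ would even be well-defined as CQ containments), this matching is automatic.
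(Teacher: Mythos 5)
Your argument is correct and is exactly the route the paper intends: the corollary is stated as an immediate consequence of Proposition \ref{prop:over-lower}, whose clause (2) gives precisely the lower-bound property you use in both directions to obtain $q_1 \subseteq q_2$ and $q_2 \subseteq q_1$, hence $q_1 \equiv q_2$.
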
 

\begin{figure}
\centering 

  \scalebox{1.1}{
  \begin{tikzpicture}[
      xscale=0.6,
      yscale=0.6,      
    ]
     \node (0) at (-1,2){$q$:};
      \node (0) at (-1,-0.55){};
     
      \node[mnode] (1) at (0,0){};
      \node[mnode] (2) at (2,0){};
      \node[mnode] (3) at (1,2){};

      \draw [dEdge] (2) to (1); 
      \draw [dEdge] (3) to (2); 
      \draw [dEdge] (1) to (3); 
  \end{tikzpicture}\hspace{10mm}

  \begin{tikzpicture}[
      xscale=0.6,
      yscale=0.6,
    ]
      \node (0) at (-1,2){$q'$:};
      \node (0) at (-1,-0.6){};
      
      \node[mnode] (1) at (0,0){};
            \node[mnode] (2) at (2,0){};
      \node[mnode] (3) at (0,2){};     
       \node[mnode] (4) at (2,2){};
      
            \draw [dEdge] (1) to (2);
      \draw [dEdge] (2) to (3);
      \draw [dEdge] (3) to (1);
      
      \draw [dEdge] (4) to (1);
      \draw [dEdge] (4) to (2);
      \draw [dEdge] (4) to (3);
  \end{tikzpicture}
  }
\caption{The CQ $q$ is in $\HW(2)$ but has no $\HW(1)$-overapproximations, while $q'$ is in $\HW(3)$ but 
has no $\HW(\ell)$-overapproximations for $\ell\in \{1,2\}$.} 
\label{fig:fig2}
\end{figure}
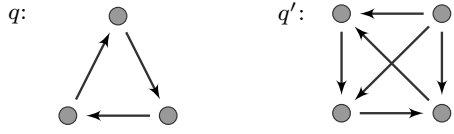

This result shows a stark difference between $\HW(k)$-overapproximations and $\HW(k)$-underapproximations: $\HW(k)$-overapproximations do not necessarily exist, but when they do they are unique; 
$\HW(k)$-underapproximations always exist but there can be exponentially many incomparable ones \cite{BLR14}. 

\subsection{A link with the existential pebble game} \label{sec:pg}


Existential cover games can be applied to obtain a semantic characterization of 
$\HW(k)$-overapproximations as follows.  

\begin{theorem} \label{theo:char} 
Fix $k\geq 1$. Let $q(\bar x)$ and $q'(\bar x')$ 
be CQs  with $q' \in \HW(k)$. The following are equivalent: 
\begin{enumerate}
\item $q'(\bar x')$ is the $\HW(k)$-overapproximation of $q(\bar x)$. 
\item 
$(q',\bar x') \to_{k} (q,\bar x)$ and $(q,\bar x) \to_{k} (q',\bar x')$. 
\end{enumerate} 
\end{theorem}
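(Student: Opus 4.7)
The plan is to reduce Theorem \ref{theo:char} to the combination of two facts already available: the lower-bound characterization of $\HW(k)$-overapproximations from Proposition \ref{prop:over-lower}, and the game-theoretic characterization of $\to_k$ from Proposition \ref{prop:games-tw}, which says that $(\D,\bar a)\to_k (\D',\bar b)$ holds iff every $\HW(k)$ CQ that maps into $(\D,\bar a)$ also maps into $(\D',\bar b)$. Via Equation \eqref{eq:cont}, the latter condition is the same as saying that every $p\in\HW(k)$ containing the CQ associated with $(\D,\bar a)$ also contains the CQ associated with $(\D',\bar b)$. This translation dictionary makes both directions largely mechanical.

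For (1) $\Rightarrow$ (2), I would first observe that since $q'$ is a $\HW(k)$-overapproximation of $q$, Proposition \ref{prop:over-lower} yields $q\subseteq q'$, i.e., $(q',\bar x')\to (q,\bar x)$. Because the hom-relation $\to$ refines $\to_k$, this immediately gives the first conjunct $(q',\bar x')\to_k (q,\bar x)$. For the second conjunct, I would use Proposition \ref{prop:games-tw}: it suffices to show that for every $p(\bar y)\in\HW(k)$, $(p,\bar y)\to(q,\bar x)$ implies $(p,\bar y)\to(q',\bar x')$. Rewriting via Equation \eqref{eq:cont}, this is precisely ``$q\subseteq p$ implies $q'\subseteq p$,'' which is condition (ii) of Proposition \ref{prop:over-lower}.

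For (2) $\Rightarrow$ (1), I would run the same translation in reverse. From $(q',\bar x')\to_k (q,\bar x)$ together with $q'\in\HW(k)$, Equation \eqref{eq:tw} upgrades the pebble-game win to an actual homomorphism $(q',\bar x')\to (q,\bar x)$, giving $q\subseteq q'$, which is condition (i) of Proposition \ref{prop:over-lower}. For condition (ii), take any $p(\bar y)\in\HW(k)$ with $q\subseteq p$, i.e., $(p,\bar y)\to(q,\bar x)$. Applying Proposition \ref{prop:games-tw} to the hypothesis $(q,\bar x)\to_k (q',\bar x')$ with the witness $p$ yields $(p,\bar y)\to(q',\bar x')$, which is $q'\subseteq p$. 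Hence both conditions of Proposition \ref{prop:over-lower} hold and $q'$ is the $\HW(k)$-overapproximation of $q$ (uniqueness coming from Corollary \ref{coro:over-unique}).

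I do not foresee any significant obstacle: the argument is essentially a careful bookkeeping across the three equivalent viewpoints (containment, pebble games, and the minimality characterization). The only point that requires mild attention is that in the step $(q',\bar x')\to_k (q,\bar x) \Rightarrow (q',\bar x')\to (q,\bar x)$ one must explicitly invoke the assumption $q'\in\HW(k)$ to apply Equation \eqref{eq:tw}; without this assumption, $\to_k$ would be strictly weaker than $\to$.
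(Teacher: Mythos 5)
Your proposal is correct and follows essentially the same route as the paper's proof: both directions translate between containment, the lower-bound characterization of Proposition \ref{prop:over-lower}, and the game characterization of Proposition \ref{prop:games-tw}, using Equations \eqref{eq:cont} and \eqref{eq:tw} exactly where the paper does. The only cosmetic difference is that in (2) $\Rightarrow$ (1) you conclude via the equivalence in Proposition \ref{prop:over-lower}, whereas the paper concludes directly from the minimality definition, which amounts to the same argument.
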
 

\begin{proof} 
Assume that $q'(\bar x')$ is the $\HW(k)$-overapproximation of $q(\bar x)$. 
Then $(q',\bar x') \to (q,\bar x)$, and therefore 
$(q',\bar x') \to_{k} (q,\bar x)$. It remains to prove that \mbox{$(q,\bar x) \to_{k} (q',\bar x')$}. From  
Proposition \ref{prop:games-tw}, we need to prove that if $q''(\bar x'')$ is a CQ in 
$\HW(k)$ such that $(q'',\bar x'') \to (q,\bar x)$, then also $(q'',\bar x'') \to (q',\bar x')$. 
This follows directly from Proposition \ref{prop:over-lower}. 

Assume now that $(q',\bar x') \to_{k} (q,\bar x)$ and $(q,\bar x) \to_{k} (q',\bar x')$.
Since \mbox{$q'\in\HW(k)$} and $(q',\bar x') \to_{k} (q,\bar x)$, we have that 
$(q',\bar x') \to (q,\bar x)$ by Equation 
\eqref{eq:tw}, and hence 
$q \subseteq q'$ by Equation \eqref{eq:cont}. 
In addition, since $(q,\bar x) \to_{k} (q',\bar x')$ it follows from 
Proposition \ref{prop:games-tw} and Equation \eqref{eq:cont} that  
 if $q\subseteq q''$ and $q'' \in \HW(k)$, then~$q'\subseteq q''$. That is, 
 there is no  $q''$ in $\HW(k)$ such that $q \subseteq q'' \subset q'$. Hence $q'$ is a $\HW(k)$-overapproximation. \qed
\end{proof}

\begin{example} (Example \ref{ex:overapp} cont.) 
\label{ex:example2}
It 
is now easy to see that the CQ $q'$ in Figure~\ref{fig:fig1} is a $\HW(1)$-overapproximation of $q$. 
In fact, since $q' \to q$, we only need to show that $q \to_1 q'$. The latter is simple and left to the reader. 
\qed 
\end{example} 

Next we show that this characterization allows us to show that 
the identification and evaluation problems for $\HW(k)$-overapproximations can be solved in polynomial time. 

\subsection{Identification and evaluation of $\HW(k)$-overapproximations} 
A direct corollary of Proposition \ref{prop:games-poly} and 
Theorem \ref{theo:char} is that the {\em identification} problem for $\HW(k)$-overapproximations
is in polynomial time.

\begin{corollary}
Fix $k \geq 1$. Given CQs $q,q'$ such that $q' \in \HW(k)$, 
checking if $q'$ is the $\HW(k)$-overapproximation of $q$ can be 
solved in polynomial time. 
\end{corollary}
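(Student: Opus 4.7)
The plan is to reduce the identification problem directly to two existential $k$-cover game checks, using the characterization established in Theorem~\ref{theo:char}. Concretely, given CQs $q(\bar x)$ and $q'(\bar x')$ with $q' \in \HW(k)$, the theorem tells us that $q'$ is the $\HW(k)$-overapproximation of $q$ if and only if both $(q',\bar x') \to_k (q,\bar x)$ and $(q,\bar x) \to_k (q',\bar x')$ hold. Thus the algorithm is simply: (i) verify syntactically (or assume as input) that $q' \in \HW(k)$; (ii) check whether $(q',\bar x') \to_k (q,\bar x)$; and (iii) check whether $(q,\bar x) \to_k (q',\bar x')$. Accept iff both checks succeed.

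The key step is invoking Proposition~\ref{prop:games-poly}, which guarantees that for any fixed $k \geq 1$, deciding whether Duplicator wins the existential $k$-cover game on a given pair of database/tuple instances can be done in polynomial time. Applying this proposition twice — once with $(q',\bar x')$ as the left-hand side and once with $(q,\bar x)$ — yields a polynomial-time procedure overall, since polynomially many polynomial-time checks compose to a polynomial-time algorithm.

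There is essentially no obstacle here: the content of the corollary is the combination of two earlier results. The only mild subtlety is that Theorem~\ref{theo:char} presupposes $q' \in \HW(k)$, which is part of the input hypothesis and therefore does not need to be algorithmically verified inside the procedure (if desired, it can also be checked by standard tree-decomposition algorithms, since $k$ is fixed). Once this is observed, the proof is a one-line combination: apply Theorem~\ref{theo:char} to rephrase the identification question as a conjunction of two $\to_k$ queries, and apply Proposition~\ref{prop:games-poly} to evaluate each of them in polynomial time.
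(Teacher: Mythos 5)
Your proof is correct and follows exactly the paper's argument: the identification question is rephrased via Theorem~\ref{theo:char} as the two checks $(q',\bar x')\to_k(q,\bar x)$ and $(q,\bar x)\to_k(q',\bar x')$, each decided in polynomial time by Proposition~\ref{prop:games-poly}. One caveat on your parenthetical remark: deciding whether $q'\in\HW(k)$ is NP-complete for every fixed $k>1$ (this is generalized hypertreewidth, not treewidth, so standard fixed-$k$ tree-decomposition algorithms do not apply), which is exactly why the paper treats the statement as a promise problem — but since you correctly observe that the hypothesis $q'\in\HW(k)$ is given and need not be verified, this does not affect your argument.
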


This corresponds to a {\em promise version} of the problem, as it is given to us that $q'$ is in fact in 
$\HW(k)$. 
 Checking the latter is NP-complete for every fixed $k > 1$~\cite{GMS09,FGP17}.

Let us assume now that we are given the {\em promise} that $q$ has a 
$\HW(k)$-overapproximation $q'$ (but $q'$ itself is not given). How hard is it to evaluate 
$q'$ over a database $\D$? We could try to compute~$q'$, 
 but so far we have no techniques to do that. Notably, 
we can use existential cover games to show that $\HW(k)$-overapproximations can be evaluated 
efficiently, without even computing them. This is based on the next result, which states 
that evaluating $q'$ over $\D$ boils down to checking $(q,\bar x) \to_{k} (\D,\bar a)$ for the tuples $\bar a$ over $\D$.

\begin{theorem} \label{theo:games-overapp} Consider a fixed $k \geq 1$. 
Let $q(\bar x)$ be a CQ with a $\HW(k)$-overapproximation~$q'(\bar x')$. Then for every $\D$ and $\bar a$ it is the case that  
$$
\bar a \in q'(\D)  \quad \Longleftrightarrow \quad (q',\bar x') \to (\D,\bar a) \quad \Longleftrightarrow \quad 
(q,\bar x) \to_{k} (\D,\bar a).
$$
\end{theorem}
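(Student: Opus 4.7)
The first equivalence, $\bar a \in q'(\D) \Longleftrightarrow (q',\bar x') \to (\D,\bar a)$, is just the definition of CQ evaluation (Equation~\eqref{eq:cont} applied to a database). So the content is in the second equivalence, and the plan is to reduce it to Theorem~\ref{theo:char} (characterization of overapproximations via the existential $k$-cover game) together with Proposition~\ref{prop:games-tw} (characterization of $\to_k$ via preservation of $\HW(k)$-queries).

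For the implication $(q',\bar x')\to(\D,\bar a) \Longrightarrow (q,\bar x)\to_k (\D,\bar a)$, I would use Proposition~\ref{prop:games-tw} to reformulate the conclusion as follows: it suffices to show that for every CQ $p(\bar y)\in \HW(k)$ with $(p,\bar y)\to(q,\bar x)$, we also have $(p,\bar y)\to(\D,\bar a)$. Theorem~\ref{theo:char} gives $(q,\bar x)\to_k(q',\bar x')$; applying Proposition~\ref{prop:games-tw} to this, we deduce $(p,\bar y)\to(q',\bar x')$. Composing this homomorphism with the given one $(q',\bar x')\to(\D,\bar a)$ yields $(p,\bar y)\to(\D,\bar a)$, as required.

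For the converse implication $(q,\bar x)\to_k(\D,\bar a) \Longrightarrow (q',\bar x')\to(\D,\bar a)$, I would first extract a homomorphism $(q',\bar x')\to(q,\bar x)$. This is available: by Theorem~\ref{theo:char} we have $(q',\bar x')\to_k(q,\bar x)$, and since $q'\in\HW(k)$, Equation~\eqref{eq:tw} collapses $\to_k$ to $\to$, giving $(q',\bar x')\to(q,\bar x)$. Now take this very $q'$ as the test query in Proposition~\ref{prop:games-tw} applied to the hypothesis $(q,\bar x)\to_k(\D,\bar a)$: since $q'\in\HW(k)$ and $(q',\bar x')\to(q,\bar x)$, we conclude $(q',\bar x')\to(\D,\bar a)$.

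There is no real obstacle here; once Theorem~\ref{theo:char} is in hand, the result reduces to routine composition of homomorphisms together with two applications of the $\to_k$/$\HW(k)$-preservation equivalence. The only thing to be careful about is the direction in which each fact is used: Theorem~\ref{theo:char} supplies the two relations $(q,\bar x)\to_k(q',\bar x')$ and $(q',\bar x')\to_k(q,\bar x)$, and the proof uses the former in the ``$\Rightarrow$'' direction and the latter in the ``$\Leftarrow$'' direction.
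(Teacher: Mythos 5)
Your proof is correct and takes essentially the same route as the paper: both directions rest on Theorem~\ref{theo:char} (supplying $(q,\bar x)\to_k(q',\bar x')$ and $(q',\bar x')\to_k(q,\bar x)$, the latter collapsed to a homomorphism since $q'\in\HW(k)$) together with the $\to_k$ machinery. The only cosmetic difference is that where the paper composes Duplicator winning strategies and then applies Equation~\eqref{eq:tw}, you route both directions through the preservation characterization of Proposition~\ref{prop:games-tw}; the two arguments are interchangeable here.
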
 

\begin{proof}
Assume first that $(q,\bar x) \rightarrow_{k} (\D,\bar a)$. 
Based on the fact that $q'$ is a $\HW(k)$-overapproximation of $q$, we have that 
$(q',\bar x') \to (q,\bar x)$. 
Since winning strategies for Duplicator compose and $(q',\bar x') \to (q,\bar x)$ implies \mbox{$(q',\bar x') \to_k (q,\bar x)$}, 
it is the case that 
$(q',\bar x')\rightarrow_{k} (\D,\bar a)$.
But $q' \in \HW(k)$, and thus \mbox{$(q',\bar x') \to (\D,\bar a)$} from Equation \eqref{eq:tw}.  

Assume, on the other hand, that $(q',\bar x') \to (\D,\bar a)$. From Theorem~\ref{theo:char}, we have that 
$(q,\bar x)\to_k (q',\bar x')$. 
By composition and the fact that $(q',\bar x') \to (\D,\bar a)$ implies $(q',\bar x') \to_k (\D,\bar a)$, 
it follows that $(q,\bar x)\to_k (\D,\bar a)$ holds. \qed
\end{proof}

As a corollary to Theorem \ref{theo:games-overapp} and Proposition \ref{prop:games-poly} we obtain the following.

\begin{corollary} \label{coro:eval-overapp} 
Fix $k \geq 1$. 
Checking if $\bar a \in q'(\D)$, given a CQ $q$ that has a $\HW(k)$-overapproximation $q'$, 
a 
database $\D$, and a tuple $\bar a$
 in $\D$, can be solved in polynomial time by checking if $(q,\bar x) \to_{k} (\D,\bar a)$. 
 Moreover, this can be done without even computing $q'$. 
\end{corollary}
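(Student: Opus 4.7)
The plan is to combine the two cited results directly. By Theorem~\ref{theo:games-overapp}, for any database $\D$ and tuple $\bar a$ over $\D$, the conditions
\[
\bar a \in q'(\D) \quad \text{and} \quad (q,\bar x) \to_k (\D,\bar a)
\]
are equivalent. The crucial observation is that the right-hand condition is formulated entirely in terms of $q$, $\D$, and $\bar a$, with no reference to the $\HW(k)$-overapproximation $q'$. Hence, even though the problem is stated as evaluating $q'$, the promise that such a $q'$ exists lets us reduce membership in $q'(\D)$ to a game condition on the original input query $q$.

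Next I would apply Proposition~\ref{prop:games-poly}, which tells us that for any fixed $k \geq 1$, deciding whether Duplicator has a winning strategy in the existential $k$-cover game from $(q,\bar x)$ to $(\D,\bar a)$ is solvable in polynomial time. Composing the two steps yields a polynomial-time algorithm: on input $(q,\D,\bar a)$, simply run the polynomial-time procedure for the existential $k$-cover game and return \emph{yes} iff $(q,\bar x) \to_k (\D,\bar a)$. By Theorem~\ref{theo:games-overapp}, this answer coincides with membership $\bar a \in q'(\D)$, and by construction $q'$ itself is never computed or referenced.

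There is really no obstacle here beyond correctly invoking the promise: we use the assumption that a $\HW(k)$-overapproximation $q'$ of $q$ exists only to legitimately apply Theorem~\ref{theo:games-overapp}. If desired, the ``moreover'' clause can be emphasized by noting that the entire procedure is a black-box invocation of the game-solving algorithm of Proposition~\ref{prop:games-poly} on $(q,\bar x)$ and $(\D,\bar a)$, whose running time and output depend only on these inputs and on the fixed parameter~$k$.
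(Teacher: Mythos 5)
Your proposal is correct and follows exactly the paper's own reasoning: the corollary is obtained directly by combining Theorem~\ref{theo:games-overapp} (which equates $\bar a \in q'(\D)$ with $(q,\bar x) \to_k (\D,\bar a)$) with Proposition~\ref{prop:games-poly} (polynomial-time decidability of the game condition for fixed $k$). Your remarks on how the promise of existence of $q'$ is used, and why $q'$ never needs to be computed, match the intended argument.
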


\subsection{More liberal $\HW(k)$-overapproximations} \label{sec:liberal} 
CQs may not have $\HW(k)$-overapproximations, for some $k \geq 1$. We observe in this section
that this anomaly can be taken care of by extending the language of queries 
over which overapproximations are to be found.

An {\em infinite} CQ is as a finite one, save that now the number of atoms is countably infinite. 
We assume that there are finitely many free variables in an infinite CQ. 
The evaluation of an infinite CQ $q(\bar x)$ over a database $\D$ is defined analogously to the 
evaluation of a finite one. 
Similarly, the generalized hypertreewidth of an infinite CQ is defined as in the finite case, but 
now tree decompositions can be infinite. 
We write $\HW(k)^\infty$ for the class of all CQs, finite and infinite ones, 
of generalized hypertreewidth at most $k$.
The next result states a crucial relationship between the existential $k$-cover game and 
 the class $\HW(k)^\infty$.

\begin{lemma} \label{lemma:games-infty} 
Fix $k\geq 1$. For every CQ $q$ there is a $q'(\bar x')$ in $\HW(k)^\infty$ such that for every database 
$\D$ and tuple $\bar a$ of constants in $\D$ it is the case that   
$$\bar a \in q'(\D)  \quad \Longleftrightarrow \quad (q',\bar x') \to (\D,\bar a) \quad \Longleftrightarrow \quad 
 (q,\bar x) \to_{k} (\D,\bar a).$$
This holds even for countably infinite databases $\D$. 
\end{lemma}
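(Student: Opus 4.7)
The approach is to push the construction of Lemma \ref{lemma:tedious} to the limit $c \to \infty$, so that $q'$ encodes \emph{all} possible Spoiler plays in the compact existential $k$-cover game started from $(q,\bar x)$, rather than just plays of bounded length. The finite CQs $q_c$ are cofinal under $\to$ among the $\HW(k)$ witnesses for the game, and an infinite conjunction of them can be realised as a single CQ in $\HW(k)^\infty$ by taking their common unfolding into an infinite tree.

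Concretely, I would enumerate the $k$-unions $S_1, \dots, S_N$ of $q$ as in Lemma \ref{lemma:tedious}, and build the infinite ordered rooted tree $T_\infty$ of rank $N$, labelled by $\lambda_\infty$ with $\lambda_\infty(r) = \emptyset$ at the root and $\lambda_\infty(t_i) = S_i$ whenever $t_i$ is the $i$-th child of $t$. Applying verbatim the renaming of occurrences (maximal connected subtrees on which a variable of $q$ appears in the label) produces a labelling $\beta_\infty$ together with bijections $\Phi_t : \lambda_\infty(t) \to \beta_\infty(t)$ that agree on parent--child overlaps and fix $\bar x$ pointwise. The infinite CQ $q'(\bar x')$, with $\bar x' = \bar x$, is then formed by including, for each atom $R(\bar d)$ of $q$ and each $t \in T_\infty$ with $\bar d \subseteq \lambda_\infty(t)$, the atom $R(\Phi_t(\bar d))$. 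The pair $(T_\infty, \chi_\infty)$ with $\chi_\infty(t) = \beta_\infty(t) \setminus \bar x$ is immediately an infinite tree decomposition of $q'$ of width at most $k$, so $q' \in \HW(k)^\infty$.

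The first equivalence is definitional. For the second, I would adapt the two directions of the proof of Lemma \ref{lemma:tedious}. For $(\Rightarrow)$, assuming $h: q' \to \D$ with $h(\bar x') = \bar a$, Duplicator plays along $T_\infty$: starting at the root, she answers Spoiler's move $S_i$ by advancing to the $i$-th child $t_i$ of her current node $t$ and responding with $h \circ \Phi_{t_i}$. The parent--child consistency of $\Phi$ makes this respect the compact-game condition on overlaps between consecutive rounds, and since $T_\infty$ is infinite the strategy never gets stuck. For $(\Leftarrow)$, assuming Duplicator wins the game on $(q,\bar x)$ and $(\D,\bar a)$, I use the standard characterisation of Duplicator's winning strategies as a nonempty family $F$ of partial homomorphisms whose domains are $k$-unions of $q$, closed under restriction and satisfying the forth property with respect to $k$-union extensions. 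I then choose inductively, for each $t \in T_\infty$, a member $g_t \in F$ with domain $\lambda_\infty(t)$ such that $g_t$ agrees with $g_{t'}$ on $\lambda_\infty(t) \cap \lambda_\infty(t')$ whenever $t'$ is the parent of $t$, and define $h(\Phi_t(d)) := g_t(d)$.

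The main technical point is to verify that $h$ is well-defined: if two nodes $t, t'$ share a variable $\Phi_t(d) = \Phi_{t'}(d)$ of $q'$, then by construction they both lie in the same maximal occurrence subtree of $d$, along which $d$ remains in every $\lambda_\infty$-label; a straightforward induction using the inter-round agreement of the chosen $g$'s then gives $g_t(d) = g_{t'}(d)$. Once well-definedness is established, $h$ is a homomorphism because each atom of $q'$ is the image under some $\Phi_t$ of an atom in the domain of the partial homomorphism $g_t$. All arguments are syntactic and local (round-by-round, node-by-node along $T_\infty$), so the equivalences go through without change when $\D$ is countably infinite.
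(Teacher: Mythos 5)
Your proposal is correct and follows essentially the same route as the paper, which proves the lemma by rerunning the construction of Lemma \ref{lemma:tedious} on the infinite move tree $(T_\infty,\lambda_\infty)$ in place of $(T_c,\lambda_c)$ and observing that all arguments remain valid for countably infinite $\D$. Your extra care about well-definedness of $h$ via occurrence subtrees, and your use of the forth-system formulation of Duplicator's winning strategy, are just explicit renderings of the same top-down argument the paper reuses from Lemma \ref{lemma:tedious}.
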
 

\begin{proof}
The lemma follows directly from the proof of Lemma~\ref{lemma:tedious} by starting with $(T_\infty, \lambda_\infty)$ instead of $(T_c, \lambda_c)$, 
where $(T_\infty, \lambda_\infty)$ is the infinite rooted tree labeled with $k$-unions of $\D_q$ representing all possible moves of Spoiler on $\D_q$ in the 
existential $k$-cover game. All arguments are still valid when $\D$ is infinite. 
\qed
\OMIT{
Recall from Lemma \ref{lemma:tedious} that for every $c \geq 1$ 
there is a CQ $q_c(\bar x_c)$ that defines the existence of a winning strategy for Duplicator -- when Spoiler plays on $q$ -- up to 
$c$ rounds. Formally,  
  $q_c \to \D$ iff $q \to_k^c \D$ for every database $\D$. As it is clear from the proof of Lemma \ref{lemma:tedious}, 
  if $Q_c$ is the set of atoms in $q_c$, for $c \geq 0$, then $Q_c \subseteq Q_{c+1}$ for each $c \geq 0$. 
  We then define $q'$ as the existential closure of the countably infinite set of atoms 
  $\bigcup_{c \geq 0} Q_c$.  In fact, $q' \to \D$ iff for each $c \geq 0$ it is the case that $q_c \to \D$. 
  But the latter is equivalent to the fact that $q \to_k^c \D$ for each $c \geq 0$, which in turn means that $q \to_k \D$. 
  This holds even if $\D$ is countably infinite. \qed     }
\end{proof} 

Since we now deal with infinite CQs and databases, 
we cannot apply Proposition \ref{prop:games-tw} directly in our analysis of $\HW(k)^\infty$-overapproximations. 
Instead, we use the following suitable reformulation of it, which 
we obtain by inspection of its proof. 

\begin{proposition} \label{prop:games-tw-infty} Fix $k \geq 1$. Consider countably infinite 
databases $\D$ and $\D'$.
Then 
 $(\D,\bar a) \to_{k} (\D',\bar b)$ iff for each CQ $q(\bar x)$ in $\HW(k)^\infty$ we have that  
 if $(q,\bar x) \to (\D,\bar a)$ then $(q,\bar x) \to (\D',\bar b)$. 
\end{proposition}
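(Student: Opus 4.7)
My plan is to prove the two directions of the biconditional separately. Direction $(\Leftarrow)$ will follow quickly from (the construction underlying) Lemma \ref{lemma:games-infty}, while direction $(\Rightarrow)$ will combine Duplicator's winning strategy with a recursive descent on a tree decomposition of $q$.

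For $(\Leftarrow)$, I argue by contrapositive. Suppose $(\D,\bar a) \not\to_{k} (\D',\bar b)$. The construction in the proof of Lemma \ref{lemma:games-infty} is insensitive to whether the source is the canonical database of a CQ or a countably infinite database with a distinguished tuple; applied to $(\D,\bar a)$ it yields a $q(\bar x) \in \HW(k)^\infty$ such that $(q,\bar x) \to (\D'',\bar c) \iff (\D,\bar a) \to_{k} (\D'',\bar c)$ for every countable $(\D'',\bar c)$. Taking $(\D'',\bar c) = (\D,\bar a)$ gives $(q,\bar x) \to (\D,\bar a)$, since Duplicator trivially wins on $(\D,\bar a)$ against itself by copying Spoiler; taking $(\D'',\bar c) = (\D',\bar b)$ and invoking the hypothesis gives $(q,\bar x) \not\to (\D',\bar b)$. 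Thus $q$ separates.

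For $(\Rightarrow)$, let $h$ witness $(q,\bar x) \to (\D,\bar a)$ for some $q \in \HW(k)^\infty$ with width-$k$ tree decomposition $(T,\chi)$. From Duplicator's winning strategy on $(\D,\bar a)$ and $(\D',\bar b)$ I would extract the coherent family $\mathcal H$ of partial homomorphisms $f : C \cup \bar a \to \D'$ (with $f(\bar a) = \bar b$), where $C$ is a $k$-union of $\D$ and $(C,f)$ is a winning position for Duplicator. By the definition of a winning strategy this family is non-empty and satisfies the key extension property: for every $f \in \mathcal H$ with domain $C$ and every $k$-union $C'$, there is $f' \in \mathcal H$ with domain $C'$ agreeing with $f$ on $C \cap C'$. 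Since each $h(\chi(t))$ is a $k$-union of $\D$ (witnessed by the images under $h$ of the at most $k$ atoms covering $\chi(t)$), I can recursively define $f_t \in \mathcal H$ on $h(\chi(t)) \cup \bar a$ by descending from the root of $T$: at every child $t$ of $p$, pick $f_t$ extending $f_p$ on the overlap $h(\chi(p) \cap \chi(t)) \cup \bar a$. Each $t \in T$ is reached in finitely many steps from the root, so $f_t$ is defined for every $t$, even when $T$ is infinite.

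Finally I set $h'(y) := f_t(h(y))$ for any $t$ with $y \in \chi(t)$ (and $h'(\bar x) = \bar b$). For well-definedness I would use connectivity: if $y \in \chi(t_1) \cap \chi(t_2)$, then $y$ occurs in every bag along the $T$-path from $t_1$ to $t_2$, so by induction along the two legs meeting at the lowest common ancestor $p$ one obtains $f_{t_1}(h(y)) = f_p(h(y)) = f_{t_2}(h(y))$. The resulting $h'$ then preserves every atom $R(\bar y)$ of $q$ (which is covered by some $\chi(t)$, on which $f_t$ preserves $R(h(\bar y))$) and agrees with $\bar b$ on the free variables. The main obstacle is precisely this consistency argument: it crucially uses the tree decomposition's connectivity condition as a propagation-of-agreement property for $\mathcal H$, and it avoids any transfinite machinery because the recursion on $T$ only ever refers to the parent.
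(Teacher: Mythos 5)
Your proof is correct and is essentially the argument the paper leaves implicit (it only says the proposition follows ``by inspection'' of the proof of Proposition \ref{prop:games-tw}): your forward direction is the standard composition of the homomorphism with Duplicator's winning strategy descending a (possibly infinite) width-$k$ tree decomposition, the same argument used in Lemma \ref{lemma:aux1}, and works for infinite $T$ since every node sits at finite depth; your backward direction via the game-tree CQ of Lemma \ref{lemma:games-infty} applied to an infinite source $(\D,\bar a)$ is legitimate and is exactly the generalization the paper itself invokes later in the proof of Theorem \ref{theo:cons-exist}. One cosmetic point: $h(\chi(t)) \cup \bar a$ is in general only a \emph{subset} of a $k$-union, so you should take $f_t \in \mathcal H$ with domain a covering $k$-union (or note that winning positions are closed under restriction) --- this changes nothing in the consistency argument, since the overlaps $h(\chi(p) \cap \chi(t)) \cup \bar a$ still lie in the intersections of the chosen domains.
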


\paragraph{{\em {\bf $\HW(k)^\infty$-overapproximations.}}}
 We  expand
the notion of overapproximation by allowing infinite CQs. 
Let $q' \in \HW(k)^\infty$. Then 
$q'$ is a $\HW(k)^\infty$-overapproximation of CQ $q$, if 
$q \subseteq q'$ and there is no $q''\in \HW(k)^\infty$ such that
 $q \subseteq q'' \subset q'$. (Here, $\subseteq$ is still defined with respect to
 finite databases only). 
In $\HW(k)^\infty$,  
we can provide each CQ $q$ an overapproximation. 

\begin{theorem} \label{theo:overapp-infty} Fix $k \geq 1$. 
For every CQ $q$ there is a CQ in 
$\HW(k)^\infty$ that is a $\HW(k)^\infty$-overapproximation of $q$.  
\end{theorem}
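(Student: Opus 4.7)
The plan is to show that the infinite CQ supplied by Lemma \ref{lemma:games-infty} is itself a $\HW(k)^\infty$-overapproximation of $q$. Write $q^\infty(\bar x^\infty)$ for that CQ, so that $q^\infty \in \HW(k)^\infty$ and, for every (possibly countably infinite) database $\D$ and tuple $\bar a$, we have $\bar a \in q^\infty(\D) \iff (q,\bar x) \to_k (\D,\bar a)$. Two things need to be established: (i) $q \subseteq q^\infty$, and (ii) there is no $q'' \in \HW(k)^\infty$ with $q \subseteq q'' \subset q^\infty$.

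Step (i) is immediate: if $\D$ is finite and $\bar a \in q(\D)$, then $(q,\bar x) \to (\D,\bar a)$ implies $(q,\bar x) \to_k (\D,\bar a)$ (any homomorphism is trivially a winning strategy in the $k$-cover game), and Lemma \ref{lemma:games-infty} yields $\bar a \in q^\infty(\D)$.

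For step (ii) I would in fact establish the stronger statement that $q^\infty \subseteq q''$ for every $q'' \in \HW(k)^\infty$ with $q \subseteq q''$, which also gives uniqueness. Fix such a $q''(\bar x'')$, a finite database $\D$, and $\bar a \in q^\infty(\D)$. By Lemma \ref{lemma:games-infty}, $(q,\bar x) \to_k (\D,\bar a)$. Because the canonical database of $q$ is finite, $q \subseteq q''$ specialized to $\D = q$ together with $\bar x \in q(q)$ (via the identity) gives $\bar x \in q''(q)$, i.e., a homomorphism $(q'',\bar x'') \to (q,\bar x)$. Composing this homomorphism with Duplicator's winning strategy yields $(q'',\bar x'') \to_k (\D,\bar a)$. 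Finally, apply Proposition \ref{prop:games-tw-infty} (a finite $\D$ is a special case of a countable database) to this winning strategy, instantiated with the CQ $q'' \in \HW(k)^\infty$ and the identity homomorphism on $q''$: it follows that $(q'',\bar x'') \to (\D,\bar a)$, hence $\bar a \in q''(\D)$.

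The only subtle point, and the step I would flag as the main obstacle, is crossing from the purely semantic containment $q \subseteq q''$ (defined only over finite databases) to the syntactic existence of a homomorphism $(q'',\bar x'') \to (q,\bar x)$ when $q''$ is potentially infinite. This is handled cleanly by evaluating $q''$ on the finite canonical database of $q$, so the usual Chandra--Merlin equivalence \eqref{eq:cont} still applies on the $q$-side. Everything else is a routine composition of winning strategies together with a single invocation of the infinitary game characterization in Proposition \ref{prop:games-tw-infty}.
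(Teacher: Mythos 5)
Your proposal is correct and follows essentially the paper's own argument: it takes the CQ from Lemma \ref{lemma:games-infty}, shows it contains $q$, and shows it is a lower bound (with respect to $\subseteq$) of all CQs in $\HW(k)^\infty$ containing $q$, using Proposition \ref{prop:games-tw-infty} together with the finite-$q$ direction of Equation \eqref{eq:cont}. The only cosmetic difference is that you argue pointwise over finite databases -- composing the homomorphism $(q'',\bar x'')\to(q,\bar x)$ with Duplicator's strategy and then using Proposition \ref{prop:games-tw-infty} (instantiated with $q''$ itself) as an infinitary analogue of Equation \eqref{eq:tw} -- whereas the paper derives the homomorphisms $(q',\bar x')\to(q,\bar x)$ and $(q'',\bar x'')\to(q',\bar x')$ directly from the lemma's validity on the countably infinite database $q'$; both routes are sound and rest on the same two facts.
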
  

\begin{proof}
We show that the CQ $q'(\bar x')$ -- as given in Lemma \ref{lemma:games-infty} -- is a 
$\HW(k)^\infty$-overapproximation of $q(\bar x)$. 
Notice first that $(q',\bar x') \to (q,\bar x)$ (by choosing $(\D, \bar a)$ as $(q, \bar x)$ in Lemma \ref{lemma:games-infty}), and hence  
$q \subseteq q'$. This is because  
the condition that $(q',\bar x') \to (q,\bar x)$ implies $q \subseteq q'$, 
expressed in Equation \eqref{eq:cont}, holds even for countably infinite CQs. 
We also have that $(q,\bar x) \to_{k} (q',\bar x')$ (by choosing $(\D, \bar a)$ as $(q', \bar x')$ in Lemma~\ref{lemma:games-infty}). 
Proposition \ref{prop:games-tw-infty} then tells us that for each $q''(\bar x'')$ in $\HW(k)^\infty$, 
if $(q'',\bar x'') \to (q,\bar x)$ then $(q'',\bar x'') \to (q',\bar x')$.
But then 
$q \subseteq q''$ implies $q' \subseteq q''$. This is because the 
condition that $q \subseteq q''$ implies $(q'',\bar x'') \to (q,\bar x)$,     
expressed in 
Equation \eqref{eq:cont}, continues to hold
as long as $q$ (but not necessarily $q'$) is finite. 
Thus, $q'$ is a $\HW(k)^\infty$-overapproximation of $q$. \qed
\end{proof} 

Despite the non-computable nature of $\HW(k)^\infty$-overapproximations, we get 
from Proposition \ref{prop:games-poly} and the proof of Theorem \ref{theo:overapp-infty} that 
they 
can be evaluated efficiently.  

\begin{corollary} \label{coro:eval-overapp-infty} 
Fix $k \geq 1$. 
Checking whether $\bar a \in q'(\D)$, given a CQ $q$ with $\HW(k)^\infty$-overapproximation $q'$,  
a 
database $\D$, and a tuple $\bar a$ in $\D$, boils down to checking if $(q,\bar x) \to_k (\D,\bar a)$, and 
thus it can be solved in polynomial time. 
\end{corollary}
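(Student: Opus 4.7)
} My plan is to derive the corollary essentially for free by combining Lemma \ref{lemma:games-infty}, the construction in the proof of Theorem \ref{theo:overapp-infty}, and Proposition \ref{prop:games-poly}. The key observation is that the statement only requires us to relate the evaluation of \emph{the} $\HW(k)^\infty$-overapproximation to the game $\to_k$; we never need to compute $q'$ explicitly.

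First I would recall from the proof of Theorem \ref{theo:overapp-infty} that the concrete CQ $q^*(\bar x')$ produced by Lemma \ref{lemma:games-infty} (starting from $q(\bar x)$) is a $\HW(k)^\infty$-overapproximation of $q$, and that for this particular CQ Lemma \ref{lemma:games-infty} directly gives
\[
\bar a \in q^*(\D) \ \Longleftrightarrow \ (q^*,\bar x') \to (\D,\bar a) \ \Longleftrightarrow \ (q,\bar x) \to_k (\D,\bar a),
\]
for every finite database $\D$ and every tuple $\bar a$ over $\D$.

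Next I would argue that any $\HW(k)^\infty$-overapproximation $q'$ of $q$ is equivalent to $q^*$ on finite databases, so the same characterization transfers to $q'$. The argument is the infinite analogue of Corollary \ref{coro:over-unique}: by the proof of Theorem \ref{theo:overapp-infty}, $q^*$ satisfies $q \subseteq q^*$ and $q^* \subseteq q''$ for every $q'' \in \HW(k)^\infty$ with $q \subseteq q''$. Applying this to $q'' = q'$ yields $q^* \subseteq q'$; by minimality of $q'$ we cannot have $q^* \subset q'$, hence $q' \equiv q^*$ on finite databases, so $q'(\D) = q^*(\D)$ for every finite $\D$ and the displayed equivalences above hold verbatim with $q'$ in place of $q^*$.

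Finally, the algorithmic consequence is immediate: to decide whether $\bar a \in q'(\D)$ it suffices to decide whether $(q,\bar x) \to_k (\D,\bar a)$, and Proposition \ref{prop:games-poly} shows the latter is in polynomial time in $|q|$ and $|\D|$ for fixed $k$. There is no real obstacle here; the only subtlety I want to be careful about is the uniqueness step, since $q'$ and $q^*$ are in general both infinite and containment is only quantified over finite databases. As noted above, this is handled by the lower-bound property of $q^*$ inherited from Theorem \ref{theo:overapp-infty}, which already uses Proposition \ref{prop:games-tw-infty} in its proof.
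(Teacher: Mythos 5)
Your proposal is correct and follows essentially the same route as the paper: the corollary is obtained by combining Lemma \ref{lemma:games-infty} (whose constructed CQ is shown in the proof of Theorem \ref{theo:overapp-infty} to be a $\HW(k)^\infty$-overapproximation and to be evaluated by $\to_k$) with Proposition \ref{prop:games-poly}. Your explicit uniqueness step, showing any $\HW(k)^\infty$-overapproximation $q'$ is equivalent to the canonical one via the lower-bound property and minimality, is exactly the implicit justification the paper relies on (cf.\ its remark that Corollary \ref{coro:over-unique} extends to $\HW(k)^\infty$), so it is a faithful filling-in of detail rather than a different argument.
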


Recall from Section \ref{preliminaries} that the ``approximation'' of the notion of homomorphism 
provided by the existential $k$-cover game suffices for evaluating those CQs $q(\bar x)$ whose core is in $\HW(k)$. That is, for every database $\D$ and tuple $\bar a$ of elements in $\D$, we have that 
$\bar a \in q(\D) \Leftrightarrow (q,\bar x) \to (\D,\bar a) \Leftrightarrow (q,\bar x) \to_k (\D,\bar a)$. 
For other CQs the existential $k$-cover game always provides an 
``overestimation" of the exact result. Interestingly, Corollary \ref{coro:eval-overapp-infty}  
establishes that such an 
overestimation is not completely arbitrary, as it is the one defined by the CQ in 
$\HW(k)^\infty$ that better approximates $q$ over the set of all databases.

\section{Existence of Overapproximations} \label{decidability}

CQs always have $\HW(k)^\infty$-overapproximations, but not necessarily 
finite ones. Here we study when a 
CQ $q$ has a finite overapproximation. We start with the case $k = 1$, 
which we show to be decidable in {\sc 2Exptime} (we do not know if this is optimal). 
For $k > 1$ we
leave the decidability open, but provide some explanation about where the difficulty lies.

\subsection{The acyclic case}  
\label{sec:exist-acyclic}

We start with the case of $\HW(1)$-overapproximations. Recall that $\HW(1)$ is an important class, as it consists precisely of 
the well-known {\em acyclic} CQs. Our main result is the following: 

\begin{theorem}\label{theo:decidability-acyclic} 
The following statements hold: 
\begin{enumerate}[label=(\alph*)]
\item There is a {\sc 2Exptime} algorithm that checks if a CQ $q$ has a 
$\HW(1)$-overapproximation and, if one exists, 
it computes one in triple-exponential time. 
\item If the maximum arity of the schema is fixed, there is an {\sc Exptime} algorithm that does this and 
computes a $\HW(1)$-overapproximation of $q$ in double-exponential time. 
\end{enumerate} 
\end{theorem}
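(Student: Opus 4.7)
The plan is to combine the semantic characterization from Theorem \ref{theo:char} with a tree-automata search. By Theorem \ref{theo:char}, a finite acyclic CQ $q'(\bar x')$ is \emph{the} $\HW(1)$-overapproximation of $q(\bar x)$ iff $(q',\bar x')\to(q,\bar x)$ and $(q,\bar x)\to_1(q',\bar x')$. Since every acyclic CQ admits a tree decomposition where each bag is covered by a single atom, we can represent any candidate $q'$ by a finite tree $T$ whose nodes are labeled by ``atom templates'' of the form $(R,\pi)$, where $R$ is a relation symbol occurring in $q$ and $\pi$ records which positions of $R$ are shared with the parent bag (thus encoding the equalities between variables of $q'$). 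The alphabet $\Sigma_q$ of such templates has size $O(|q|\cdot 2^r)$, where $r$ is the maximum arity, so it is polynomial for fixed arity and singly exponential in general.

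The heart of the proof is the construction of an alternating tree automaton $\mathcal{A}_q$ over $\Sigma_q$-labelled trees that accepts exactly those $T$ whose induced CQ $q'$ is a $\HW(1)$-overapproximation of $q$. The state of $\mathcal{A}_q$ at a node $v$ carries two pieces of information: (i) a \emph{commitment} to an atom of $q$ together with a mapping of the variables of $v$ into that atom, which propagates down the tree to witness the homomorphism $q'\to q$; and (ii) a \emph{game summary} describing, for every atom $\alpha$ of $q$ that Spoiler could currently be playing (and every partial assignment of its variables into $v$), a Duplicator response anchored at or reachable from $v$. Universal transitions of $\mathcal{A}_q$ then check: for every Spoiler move (i.e., switch to a neighbouring atom $\beta$ of $q$ sharing some variables with $\alpha$), there exists a direction in $T$ and a child state such that Duplicator's partial homomorphism extends consistently. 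Since in the $1$-cover game all pebbled elements lie inside a single atom of $q$, each game summary is of size $O(|q|\cdot |q|^r)$, singly exponential in $|q|$ (polynomial for fixed $r$). The acyclicity of $q'$ ensures that these local checks assemble into a genuine winning strategy: Duplicator's responses move through $T$ in lockstep with Spoiler's navigation of the ``atom graph'' of $q$, and tree-like structure rules out cyclic consistency obligations.

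Emptiness of alternating tree automata is decidable in single-exponential time in the number of states, yielding a \textsc{2Exptime} algorithm for the existence problem in general and an \textsc{Exptime} algorithm when the arity is fixed, establishing~(a) and~(b) of Theorem~\ref{theo:decidability-acyclic} on the decision side. When $\mathcal{A}_q$ is nonempty, standard constructions yield a finite accepting tree whose size is at most exponential in the size of $\mathcal{A}_q$; translating this tree back gives a $\HW(1)$-overapproximation $q'$, explaining the additional exponential in the computation bounds (\textsc{3Exptime} in general, \textsc{2Exptime} for fixed arity). Correctness then rests on Theorem~\ref{theo:char}: the homomorphism commitment yields $q'\to q$, the universal game-summary transitions yield $q\to_1 q'$, and uniqueness (Corollary~\ref{coro:over-unique}) together with acyclicity of the produced $q'$ confirm that the output is indeed the $\HW(1)$-overapproximation.

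The main obstacle I expect is engineering the game-summary component so that the universal branching of $\mathcal{A}_q$ faithfully captures Spoiler's right to backtrack between atoms of $q$ that share variables: an ill-designed summary would either force $T$ to be unbounded (missing finite overapproximations that do exist) or accept trees where Duplicator's strategy is only locally consistent but fails globally (e.g., when Spoiler exploits a cycle in the atom-sharing graph of $q$ even though $q'$ is acyclic). Handling this likely requires a two-way or appropriately alternating automaton, so that ``returning'' to a previously committed variable in $q'$ can be simulated in the tree; once this bookkeeping is in place, the rest of the argument is routine automata-theoretic plumbing.
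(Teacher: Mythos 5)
Your route is the paper's own: encode candidate acyclic CQs as labelled trees, build a two-way alternating tree automaton whose two components implement the characterization of Theorem~\ref{theo:char} (one for $q'\to q$, one for $q\to_1 q'$ via the compact winning-strategy characterization), decide existence by 2ATA emptiness, and pay one more exponential to extract a witness tree. However, as written there are two genuine gaps. First, the complexity accounting does not deliver the stated bounds: you place the entire ``game summary'' --- a table with one entry per pair (atom of $q$, partial assignment of its variables into the current node) --- inside the automaton's \emph{state}. With unbounded arity this table already has exponentially many entries, so the number of possible summaries, and hence of states, is doubly exponential, and emptiness (exponential in the number of states) lands in \textsc{3Exptime} rather than \textsc{2Exptime}; for fixed arity the same count gives exponentially many states and \textsc{2Exptime} instead of \textsc{Exptime}. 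The paper avoids this by making the witnesses part of the \emph{input}: the tree is annotated both with the intended homomorphism $h:q'\to q$ and with the intended family of partial homomorphisms (one set per node, ranges inside that node's variables), so each state of $\A_1$/$\A_2$ stores only a single partial map, or a single pair of $1$-unions plus the shared image variables --- polynomially many bits, hence exponentially many states in general and polynomially many for fixed arity (Proposition~\ref{prop:tree-automata}). Your universal transitions ``per Spoiler move'' are the right instinct, but then a state should carry one configuration, not the whole summary; as stated, the two parts of your description are in tension.

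Second, the step you yourself identify as the main obstacle --- guaranteeing that Duplicator's extension move refers to \emph{the same} variable of $q'$ even though the names $u_1,\dots,u_{2r}$ are reused throughout the encoding --- is precisely the nontrivial content of the construction, and you leave it at ``likely requires a two-way automaton''. The paper's $\A_2$ does the bookkeeping concretely: for a pair of $1$-unions $(U,U')$ and a map $g$ with domain $U$ annotating node $t$, it searches the tree (two-way) for a node $t'$ carrying an extension $g'$, while remembering the set $\{g(x)\mid x\in U\cap U'\}$ and checking that these variables occur in the label of every node traversed, so that the occurrences are connected and genuinely denote the same variables of $q'$. Note also that the witness $t'$ need not be adjacent to $t$, so quantifying over ``a direction and a child state'' is too weak. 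Finally, your encoding and checks are set up for Boolean $q$ only; free variables need the additional condition the paper adds at the end of the proof of Proposition~\ref{prop:tree-automata}.
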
 

The general idea behind the proof is as follows. 
From a CQ $q$ we build 
a {\em two-way alternating tree automaton} \cite{CGKV88}, or 2ATA,   
$\mathcal A_q$, such that the language $L(\mathcal A_q)$ of trees accepted by $\A_q$ is nonempty if and only if 
$q$ has a $\HW(1)$-overapproximation. Intuitively, $\A_q$  
accepts those trees that encode a $\HW(1)$-overapproximation $q'$ of $q$. 
The emptiness problems for languages defined by 2ATAs can be solved in 
exponential time in the number of states \cite{CGKV88}.   
As our automaton $\mathcal A_q$ 
will have exponentially many states, its emptiness can be tested in double-exponential time. 
In addition, if the maximum arity of the schema is fixed, the number of states in $\mathcal A_q$ 
is polynomial, and hence emptiness can be tested in exponential time.

Before describing the details of the construction, let us shortly recapitulate two-way alternating tree automata. We closely follow the presentation from \cite{CGKV88}. The input of an 2ATA is a ranked tree over an alphabet $\Sigma$. In each computation step, a 2ATA is in one of finitely many states and visits a node $v$ of the input tree. Depending on the state and the label of the node, it can recursively start a bunch of processes; each of them starting from some state and either one of $v$'s neighbors or from $v$ itself. Whether the computation step is successful depends on a boolean combination of the outcomes of these processes.

Formally, a 2ATA is a tuple $(\Sigma, S, S_0, \delta, F)$ where $\Sigma$ is the tree alphabet, $S$ is a finite set of states, $S_0 \subseteq S$ is a set of initial states, $F \subseteq S$ is a set of accepting states, and $\delta$ is a transition function defined on $S \times \Sigma$ such that if $\sigma \in \Sigma$ has arity $\ell$ then $\delta(s, \sigma)$ is a propositional formula with variables from $S \times [\ell]$. Here, $[\ell]$ denotes the set $\{-1, 0, 1, \ldots, \ell\}$ of directions the automaton can take, where $-1$ denotes moving to the parent node, 0 denotes staying in the same node, and $j > 0$ denotes moving to the $j$th child. A proposition $(s', i) \in \delta(s, \sigma)$ represents that the automaton transitions into state $s'$ and moves to the node represented by $i$. As an example, if $\delta(s, \sigma) =  (r, 2) \wedge  \big((p, 1) \vee (q, -1)\big)$ then, when being in state $s$ and reading $\sigma$, the automaton starts two processes. One of them starting in state $r$ in the second child; the other starting in state $p$ in the first child or in state $q$ in the parent node. Thus, in particular, using the propositional formula of a transition, a 2ATA can universally or existentially choose a next state. A run of the 2ATA starts in the root of an input tree and in a state from $S_0$. Starting from there, a computation tree is spanned by applying the transitions. The input tree is accepted if the automaton has a computation tree whose leaves are accepting and the propositional formulas of transitions taken by the automaton are satisfied. We refer to~\cite{CGKV88} for the details of the semantics. 


We now show how the problem of existence of $\HW(1)$-overapproximations  can be reduced 
to the emptiness problem for 2ATA.

\begin{proposition} \label{prop:tree-automata}
There exists an algorithm that takes as input a CQ $q$  and returns a 2ATA $\A_q$ such that $q$ has a $\HW(1)$-overapproximation iff
\mbox{$L(\A_q) \neq \emptyset$}. Furthermore, the algorithm needs double-exponential time and $\A_q$ has exponentially many states. Furthermore: 
\begin{itemize} 
\item 
From every tree $T$ in $L(\A_q)$ one can construct in polynomial time a $\HW(1)$-overapproximation of $q$. 
\item 
If the maximum arity of the schema is fixed, then the algorithm needs exponential time and $\A_q$ has polynomially many states.
\end{itemize} 
\end{proposition}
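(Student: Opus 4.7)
The plan is to encode candidate $\HW(1)$-overapproximations of $q$ as labeled trees over a finite alphabet and to build $\A_q$ as a product of three sub-automata each verifying one component of the characterization in Theorem~\ref{theo:char}. Specialized to $k=1$, that theorem says an acyclic CQ $q'(\bar x')$ is the $\HW(1)$-overapproximation of $q(\bar x)$ iff $(q',\bar x') \to (q,\bar x)$ (which, since $q' \in \HW(1)$, is equivalent via Equation~\eqref{eq:tw} to $(q',\bar x') \to_1 (q,\bar x)$) and $(q,\bar x) \to_1 (q',\bar x')$. Both conditions are amenable to two-way tree-automaton reasoning provided $q'$ is presented by a join tree, which is possible precisely because $q' \in \HW(1)$.

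First, I would fix an alphabet $\Sigma$ whose letters describe a single atom together with a marking of which positions hold free variables (i.e., variables from $\bar x'$), and use a structural sub-automaton $\A_{\text{struct}}$ with polynomially many states to verify that the input is the join tree of an acyclic CQ: each variable of $q'$ induces a connected subtree, which is checkable by two-way navigation. A second sub-automaton $\A_{\text{hom}}$ witnesses $(q',\bar x') \to (q,\bar x)$ by existentially labeling each tree node $v$ with atom $R(\bar y_v)$ by a tuple $\bar a_v$ of elements of $q$ with $R(\bar a_v) \in q$, consistent with $\bar x' \mapsto \bar x$ and with the labels at neighboring nodes on shared variables; again polynomially many states suffice, since the guess per node has size bounded by the maximum arity $r$.

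The main work is the sub-automaton $\A_{\text{strat}}$ verifying $(q,\bar x) \to_1 (q',\bar x')$. I would exploit the standard characterization of Duplicator's winning strategies in the existential $1$-cover game as a non-empty family $\mathcal{F}$ of partial homomorphisms with domain $\bar x \cup A$ for an atom $A$ of $q$, extending $\bar x \mapsto \bar x'$, with image contained in an atom of $q'$ of the same relation symbol, and satisfying the back-and-forth property: for every $f \in \mathcal{F}$ with domain $\bar x \cup A$ and every atom $A'$ of $q$ there is $g \in \mathcal{F}$ with domain $\bar x \cup A'$ agreeing with $f$ on $\bar x \cup (A \cap A')$. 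The state of $\A_{\text{strat}}$ at node $v$ records the set $S_v$ of partial homomorphisms whose image is the atom labeling $v$; for each $f \in S_v$ and each atom $A'$ of $q$, the automaton universally spawns an alternating process that existentially navigates the join tree to find a node $v'$ and some $g \in S_{v'}$ witnessing the back-and-forth requirement, propagating the identity of shared variables through bags using the join-tree property. Since $S_v$ is a subset of partial maps from an atom of $q$ to an $r$-tuple of variables, the state space has size $2^{|q|^{O(r)}}$, i.e., single exponential in $|q|$ and polynomial when $r$ is fixed.

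Taking the product yields $\A_q$ with exponentially (respectively, polynomially) many states, whose transition function is explicitly computable in double-exponential (respectively, exponential) time. If $L(\A_q) \neq \emptyset$, the standard small-tree property of 2ATAs gives an accepted finite tree of size exponential in the number of states, and reading off the atom labels on this tree yields a finite CQ of the claimed size that, by construction, is a $\HW(1)$-overapproximation of $q$. The main obstacle I anticipate is making $\A_{\text{strat}}$ correctly enforce the global back-and-forth property through purely local transitions and tree navigation; this hinges on acyclicity, because the join-tree property guarantees that any two atoms of $q'$ sharing a variable are connected by a tree path through bags containing that variable, so locally consistent labels $\{S_v\}$ glue into a bona fide winning family, and conversely any winning family restricts to such labels.
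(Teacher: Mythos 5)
Your overall architecture is the paper's: encode a candidate acyclic overapproximation as a tree over a bounded set of reused variable names, invoke Theorem~\ref{theo:char} with $k=1$, and check the two conditions $(q',\bar x')\to(q,\bar x)$ and $(q,\bar x)\to_1(q',\bar x')$ (the latter via the compact family-of-partial-homomorphisms characterization) with a two-way alternating tree automaton. However, there is a genuine gap in how you verify the winning-strategy condition, and it breaks exactly the bounds the proposition asserts. You let the \emph{state} of $\A_{\text{strat}}$ at a node $v$ record the set $S_v$ of partial homomorphisms from $1$-unions of $q$ into the atom at $v$. The number of such partial maps is already $|q|\cdot(2r)^{O(r)}$, so the number of \emph{sets} of them is $2^{\Theta(|q|)}$ even for fixed arity $r$, and doubly exponential when $r$ is unbounded. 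Your claimed count ``$2^{|q|^{O(r)}}$, i.e., single exponential in $|q|$ and polynomial when $r$ is fixed'' is therefore miscalculated: with your construction $\A_q$ has exponentially many states for fixed arity and doubly exponentially many in general, which would only yield 3EXPTIME (resp.\ 2EXPTIME) after the exponential emptiness test -- not the 2EXPTIME/EXPTIME bounds of the proposition that feed into Theorem~\ref{theo:decidability-acyclic}. There is also a correctness worry lurking behind the same choice: when an existentially navigating branch reaches a node $v'$ and needs ``some $g\in S_{v'}$,'' nothing forces different branches of the alternating computation to agree on what $S_{v'}$ is, since each branch would have to re-guess it in its own state; and replacing the set by an on-the-fly recursive check of ``$g$ is itself extendable'' turns the verification into a coinductive (greatest-fixpoint) condition whose soundness depends on the acceptance condition of the 2ATA model, which you do not address.

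The paper's construction avoids both problems by moving the guessed data out of the states and into the input tree: the tree is annotated both with the intended homomorphism $h:q'\to q$ (for condition (1)) and with the strategy family $\mathcal F$, i.e., each node carries the set of partial mappings whose range lies among that node's variables (for condition (2)). Since the input tree is existentially quantified by the nonemptiness test, these annotations are guessed once and are globally consistent for free. The automaton then only navigates and performs local checks, remembering at most a single restricted mapping, namely the values $\{g(x)\mid x\in U\cap U'\}$ that must persist along the path to the witnessing node -- a polynomial amount of information, constant for fixed arity. This keeps the state count exponential in general and polynomial for fixed arity (the doubly exponential cost is pushed into the alphabet and the time to write down the transition function, which is all the proposition allows). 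If you rework $\A_{\text{strat}}$ (and, for the same reason, $\A_{\text{hom}}$, whose state count is exponential, not polynomial, for unbounded arity) so that the annotations live on the tree rather than in the states, your argument matches the paper's and the stated bounds follow.
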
 

\begin{proof}
For simplicity we assume that $q$ is Boolean; towards the end of the proof we explain how the construction can be adapted for non-Boolean queries. 

Before describing the construction of $\A_q$, we explain how input trees for $\A_q$ encode CQs in $\HW(1)$. Suppose that the maximum arity of $q$ is $r$. Then an \emph{encoding} of a $\HW(1)$-tree decomposition is a tree whose nodes are labeled with (a) variables from the set $\{u_1, \ldots, u_{2r}\}$ and (b) atoms over these variables whose arity is at most $r$; the only condition being that all variables of a node are covered by one of the atoms. 

A CQ $q'$ from $\HW(1)$ with tree decomposition $(T_{q'}, \chi)$ of width one 
can be encoded as follows. Even though $q'$ can have unbounded many variables,  in each node of $T_{q'}$ at most $r$ variables appear, where $r$ is the maximum arity of an atom in~$q'$. 
Thus, by reusing variables, $(T_{q'}, \chi)$ can be encoded by using $2r$ variables: 
the encoding $\Enc(T_{q'}, \chi)$ of $(T_{q'}, \chi)$ is obtained by traversing the nodes of $(T_{q'}, \chi)$ in a top-down fashion. 
Fresh variables in a node $v$, i.e. variables not used by its parent node, are encoded by fresh variables from $\{u_1, \ldots, u_{2r}\}$. 
On the other hand, an encoding of a $\HW(1)$-tree decomposition can be decoded in a top-down manner into a $\HW(1)$-tree decomposition 
by assigning a fresh variable name to each \emph{new} occurrence of a variable $u_i$, that is, an occurrence of $u_i$ that does not appear in the parent node.  
Observe that decoding $\Enc(T_{q'}, \chi)$ yields the decomposition of a query identical to $q'$ up to renaming of variables. 


The 2ATA $\A_q$ needs to verify that the CQ $q'$ encoded by $T' = \Enc(T_{q'}, \chi)$ is a $\HW(1)$-overapproximation of $q$. By Theorem \ref{theo:char}, we need to check: (1) $q' \to_1 q$, 
and (2) $q \to_{1} q'$. The 2ATA $\A_q$ will be defined as the intersection of 2ATAs $\A_1$ and $\A_2$, that check 
conditions (1) and (2), respectively.  



Condition (1) is equivalent to $q' \to q$ (since $q' \in \HW(1)$). The 2ATA $\A_1$ can guess and verify a homomorphism from $q'$ to $q$. More precisely, it assumes that $T'$ is annotated by an intended homomorphism $h : q' \to q$, that is, each variable $x'$ in a node of $T'$ is annotated by a variable $x$ in $q$. The automaton then checks that this annotation encodes a homomorphism, i.e. that (a) all connected occurrences of $x'$ are annotated by the same variable of $q$, and (b) for each atom $R(\bar x')$ labeling a node of $T'$, the image of $R(\bar x')$ defined by the annotation  is in $q$. For (a), when processing a node $t'$ of $T'$, the automaton stores the partial homomorphism for the variables of $t'$ and tests that it is consistent with the partial homomorphism of each neighboring node of $t'$. In particular, $\A_1$ requires no alternation and  has at most exponentially many states. 
If the maximum arity of the schema is fixed then only polynomially many states are needed, as then the stored partial homomorphisms are over constantly many elements and thus can be stored in $O(\log |q|)$ bits.

We now describe how the automaton $\mathcal A_2$ works. 
First, as mentioned in Section~\ref{sec:compact}, $q \to_1 q'$ can be characterized as Duplicator having a 
compact winning strategy, which in turn can be characterized as follows~\cite{CD05}. 
Duplicator has a compact winning strategy on $q$ and $q'$, i.e. $q \to_1 q'$, iff there is a non-empty family ${\cal F}$ of partial homomorphisms from $q$ to $q'$ such that:
(a) the domain of each $f\in \cal F$ is a $1$-union of $q$, 
and (b) if $U$ and $U'$ are $1$-unions of~$q$, then each $f\in \cal F$ with domain $U$ can be {\em extended} to $U'$, 
i.e., there is $f'\in \cal F$ with domain $U'$ such that $f(x)=f'(x)$ for every $x\in U\cap U'$.

The 2ATA $\A_2$ assumes an annotation of $T' = \Enc(T_{q'}, \chi)$ that encodes the intended strategy $\cal F$. 
This annotation labels each node $t'$ of $T'$ by the set of partial mappings from 
$q$ to $q'$ whose domain is a $1$-union of $q$, and
whose range is contained in the variables from $\{u_1,\dots,u_{2r}\}$ labeling~$t'$.
It can be easily checked from the labelings of $T'$ if each mapping in this annotation is a partial homomorphism. 

To check condition (2), the 2ATA $\A_2$ makes a universal transition for each pair $(U,U')$ of $1$-unions and 
each partial mapping $g$ with domain $U$ annotating a node $t$ of $T'$.  
Then it checks the existence of a node $t'$ in $T'$ that is annotated with a mapping $g'$ that extends 
$g$ to $U'$.
The latter means that, for each $x\in U\cap U'$, both $g(x)$ and $g'(x)$ are the same variable of $q'$, that is, 
$g(x)$ and $g(x')$ are connected occurrences of the same variable in $\{u_1,\dots,u_{2r}\}$.
Thus to check the consistency of $g$ and $g'$, the automaton can store the variables in $\{g(x)\mid x\in U\cap U'\}$, 
and check that these are present in the label of each node guessed before reaching~$t'$. 
As this is a polynomial amount of information, $\A_2$ can be implemented using exponentially many states. 
Again, if the maximum arity of the schema is fixed then only polynomially many states suffice. 

  The construction above can be easily extended from Boolean to non-Boolean queries $(q, \bar x)$ and $(q', \bar x')$.  In this case, the encoding $T' = \Enc(T_{q'}, \chi)$ of $q'$ includes atoms that may contain free variables. The automaton $\mathcal A_1$ additionally checks that whenever a node in $T'$ is annotated by an atom $R(\bar y')$ then there is an atom $R(\bar y)$ in $q$ such that $h'(\bar y') = \bar y$ where $h'$ is the extension of the intended homomorphism $h$ that also maps $\bar x'$ to $\bar x$. The automaton $\mathcal A_2$ does an analogous check for the partial homomorphisms.\qed
\end{proof}

It is easy to see how Theorem \ref{theo:decidability-acyclic} follows from Proposition 
\ref{prop:tree-automata}. Checking if a CQ $q$ has a $\HW(1)$-overapproximation amounts to checking if 
$L(\A_q) \neq \emptyset$. The latter can be done in exponential time in the number of states of $\A_q$ \cite{CGKV88}, 
and thus in double-exponential time in the size of $q$. If $L(\A_q) \neq \emptyset$, one can construct a tree $T \in L(\A_q)$ 
in double-exponential time in the size of $\A_q$, and thus in triple-exponential time in the size of $q$. From $T$ one then gets 
in polynomial time (i.e., in {\sc 3EXPTIME} in the size of $q$) a $\HW(1)$-overapproximation of $q$. 

If the maximum arity is fixed, the 2ATA $\A_q$ has polynomially many states and, therefore, $L(\A_q) \neq \emptyset$ can be checked in single-exponential time. If $L(\A_q) \neq \emptyset$, one can then construct a tree $T \in L(\A_q)$ 
in double-exponential time in the size of $q$. From $T$ one then gets 
in polynomial time (i.e., in {\sc 2EXPTIME} in the size of $q$) a $\HW(1)$-overapproximation of $q$.

\subsection{The case of Boolean CQs over binary schemas}  
For Boolean CQs over schemas
of maximum arity two the existence and computation of $\HW(1)$-overapproximations 
can be solved in polynomial time. This is of practical importance since data models such as 
{\em graph databases} \cite{Bar13} and description logic {\em ABoxes} \cite{DL-handbook}
can be represented using schemas of this kind. It is worth noticing 
that in this context 
$\HW(1)$ coincides with the class of CQs of treewidth one \cite{DKV02}.  
 
 \begin{theorem}\label{theo:treewdith-one} 
 There is a {\sc Ptime}  
 algorithm that checks if a Boolean
CQ $q$ over a schema of maximum arity two 
has a $\HW(1)$-overapproximation $q'$, and computes such a $q'$ if it exists. 
 \end{theorem}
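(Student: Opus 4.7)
The plan is to exploit Theorem~\ref{theo:char}: a $\HW(1)$ CQ $q'$ is the $\HW(1)$-overapproximation of a Boolean CQ $q$ iff $q'\to_1 q$ and $q\to_1 q'$. Since $q'\in\HW(1)$, the first condition is equivalent to $q'\to q$ by Equation~\eqref{eq:tw}. Over a schema of maximum arity two, a Boolean CQ in $\HW(1)$ is (up to equivalence) a disjoint conjunction of edge-labeled trees, and the $1$-unions of $q$ are simply its atoms together with its single vertices. The algorithm produces $q'$ by a greedy unfolding of $q$, guided by a canonical winning strategy of Duplicator in the compact existential $1$-cover game of Section~\ref{sec:compact}.

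The first step is to compute, in polynomial time, a compact representation of Duplicator's maximum winning strategy from $q$. Concretely, I would compute the maximal family $\mathcal{F}$ of partial homomorphisms $h:U\to V(q)$, with $U$ a $1$-union of $q$, closed under the extension property: for every $h\in\mathcal{F}$ and every $1$-union $U'$ of $q$, some $h'\in\mathcal{F}$ with domain $U'$ agrees with $h$ on $U\cap U'$. Since the schema is binary, there are $O(|q|^3)$ candidate partial maps, so this greatest-fixed-point computation runs in polynomial time. By Proposition~\ref{prop:games-tw} and the compact-game semantics, $\mathcal{F}$ encodes Duplicator's most informative responses when $q$ is played against \emph{any} target, and any target $q'$ admitting $q\to_1 q'$ is realised by pushing $\mathcal{F}$ forward through a homomorphism $q'\to q$.

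The second step is to construct $q'$ by greedy unfolding driven by $\mathcal{F}$. Starting from each suitable element of $\mathcal{F}$ (one per connected tree-component produced), I would iteratively grow a tree whose nodes are labeled by elements of $\mathcal{F}$, attaching, to each leaf labeled $h$, one child for each necessary extension $h'\in\mathcal{F}$, with new tree edges corresponding to the atoms of $q$ dictated by the labeling (using fresh variables). Termination is enforced by a bisimulation check on tree nodes based on their $\mathcal{F}$-label together with the already-constructed subtree, closing off a leaf whenever it is bisimilar to a previously expanded node. Since the bisimulation has index at most $|\mathcal{F}|=O(|q|^3)$, the unfolding either stabilizes after polynomially many steps yielding a finite forest $q'$ of polynomial size, or the algorithm reports that no finite overapproximation exists.

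The main obstacle is proving correctness in both directions. For \emph{soundness}, when the algorithm outputs $q'$, the $\mathcal{F}$-labels of the tree nodes immediately induce a homomorphism $q'\to q$ and a Duplicator winning strategy for $q\to_1 q'$, so by Theorem~\ref{theo:char} $q'$ is the $\HW(1)$-overapproximation of $q$. For \emph{completeness of non-existence}, if the bisimulation never stabilizes, one must argue that $q$ admits no finite $\HW(1)$-overapproximation. This relies on the uniqueness statement of Corollary~\ref{coro:over-unique}: any putative finite overapproximation $q^\star$ would, via its canonical Duplicator strategy on $(q,q^\star)$, project to a polynomial-index bisimulation on the greedy unfolding, contradicting non-stabilization. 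Formalising this ``strategy-to-bisimulation'' transfer is the technically delicate point; in the binary case it becomes tractable precisely because $|\mathcal{F}|$ is polynomial, which simultaneously bounds the size of any finite overapproximation and makes the regularity of the unfolding polynomially decidable.
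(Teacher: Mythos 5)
There is a genuine gap, and it lies exactly at the point you flag as ``technically delicate''. Your termination criterion is a bisimulation quotient of the unfolding of $q$, and you argue that either the unfolding stabilizes after polynomially many steps or no finite overapproximation exists. But since you bound the bisimulation index by $|\mathcal{F}|=O(|q|^3)$, by pigeonhole \emph{every} sufficiently deep node is bisimilar to an earlier one, so the unfolding always ``stabilizes'' in your sense: the infinite $\HW(1)^\infty$-overapproximation of Lemma \ref{lemma:games-infty} is always a regular tree, because it is the unfolding of a finite structure. Regularity of this tree is therefore not the right criterion; the existence of a \emph{finite} $\HW(1)$-overapproximation is a boundedness condition (Theorem \ref{theo:boundedness}), namely that a finite-depth truncation $q_c$ already satisfies $q\to_1 q_c$, and this is not implied by the unfolding being eventually periodic. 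The directed triangle $q$ of Figure \ref{fig:fig2} is a concrete counterexample to your procedure: its unfolding is an eventually periodic path, so your bisimulation check closes off after constantly many levels and outputs a finite directed path $P_n$; yet $q\not\to_1 P_n$ for any finite $n$ (Spoiler keeps walking around the cycle and pushes Duplicator off the end of the path), and indeed $q$ has no $\HW(1)$-overapproximation at all. Truncating the unfolding at a bisimulation-repeat preserves $q'\to q$ but destroys $q\to_1 q'$, so soundness fails, and the ``strategy-to-bisimulation transfer'' you would need for the non-existence direction cannot work either, because a finite overapproximation controls the truncation \emph{depth}, which is not bounded by the bisimulation index. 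A further sign that something is missing is that your argument nowhere uses that $q$ is Boolean, whereas the paper explicitly notes that its proof does not extend to non-Boolean CQs.

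The paper's proof takes a different, structural route. After reducing to connected Boolean CQs (Lemma \ref{claim:overconnected}), it analyzes the endomorphism monoid of a connected acyclic core: such a core has at most one non-identity endomorphism, and it must be a ``swapping'' endomorphism of two adjacent variables (Lemmas \ref{lemma:corehomomorphisms} and \ref{lemma:twohomomorphisms}). The key consequence (Lemma \ref{lemma-games}) is that any $\HW(1)$-overapproximation may be assumed to be, up to renaming, a subquery of $q$ itself or of the doubled query $q_u\# q_v$ for some adjacent pair $u,v$. This is what makes a greedy polynomial-time algorithm possible: one repeatedly deletes an atom $e$ as long as $q_i\to_1 q_i\setminus e$ (respectively, the relativized condition over $q_u\# q_v$), and acyclicity of the surviving subquery certifies an overapproximation. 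If you want to salvage your approach, you would have to replace the bisimulation test by a genuine boundedness test on the game, and the structural lemmas above are precisely the tool the paper uses to avoid having to decide boundedness directly.
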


 We devote the rest of this section to prove Theorem~\ref{theo:treewdith-one}.  
 Let $q$ be a Boolean CQ. We define the \emph{Gaifman graph} $\G(q)$ of $q$ to be 
 the undirected graph whose nodes are the variables of $q$ and the edges are the pairs $\{z,z'\}$ 
 such that $z\neq z'$ and $z$ and $z'$ appear together in some atom of $q$. 
 A \emph{connected component} of $q$ is a Boolean CQ associated with a connected component $C=(V(C),E(C))$ of $\G(q)$, 
 i.e., one whose set of variables is $V(C)$ and contains precisely all the atoms in $q$ induced by variables in $V(C)$. 
 The Boolean CQ $q$ is \emph{connected} if it has only one connected component, that is, if $\G(q)$ is connected.

When the maximum arity is two, we have that a Boolean CQ $q$ is in $\HW(1)$ iff $\G(q)$ is an \emph{acyclic} (undirected) graph. 
In particular, if $q$ is connected then $q\in\HW(1)$ iff $\G(q)$ is a tree. 

To prove the theorem, we first show how the problem can be solved in polynomial time for connected 
Boolean CQs, 
and then explain how to reduce in polynomial time 
the problem for general Boolean CQs to connected ones.

\subsubsection{A polynomial time algorithm for connected Boolean CQs}
\label{sec:connected-case}


We start with the following observation:

\begin{lemma}
\label{claim:overconnected}
Let us assume that $q$ is a connected Boolean CQ that 
has an $\HW(1)$-overapproximation. 
Then it is the case that 
$q$ has a connected $\HW(1)$-overapproximation. 
\end{lemma}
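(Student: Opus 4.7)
The plan is to lift the pebble-game characterization (Theorem \ref{theo:char}) through the connected components of $q'$, showing that one of those components is already an $\HW(1)$-overapproximation of $q$. Let $q'$ be an $\HW(1)$-overapproximation of $q$ and write $q' = q'_1 \wedge \cdots \wedge q'_m$ as the disjoint conjunction of its Gaifman-connected components (the binary-arity assumption guarantees that every atom lies in a unique component, so this decomposition is well defined as a CQ). Each $q'_j \in \HW(1)$, and the homomorphism witnessing $q' \to q$ restricts to a homomorphism $q'_j \to q$, hence $q \subseteq q'_j$. By Theorem \ref{theo:char} it will suffice to exhibit some $i$ with $q \to_1 q'_i$, for then $q'_i$ is itself a (connected) $\HW(1)$-overapproximation of $q$.

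Since $q'$ is an overapproximation, Theorem \ref{theo:char} gives $q \to_1 q'$. Using the compact-game reformulation recalled in Section \ref{sec:compact} (and the characterization of winning strategies via extensible families employed in the proof of Proposition \ref{prop:tree-automata}), fix a non-empty family $\mathcal F$ of partial homomorphisms from $q$ to $q'$ whose domains are the $1$-unions of $q$ and which is closed under the extensibility condition. Pick any $f_0 \in \mathcal F$ with non-empty image; because a $1$-union is covered by a single atom and, under binary arity, every atom of $q'$ lies in a unique component, the image of $f_0$ is entirely contained in some $q'_i$. This $i$ is the component I propose to use; let $\mathcal F_i \df \{f \in \mathcal F : \text{image}(f) \subseteq q'_i\}$.

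The core step is the following \emph{propagation claim}: for every $1$-union $U$ of $q$, $\mathcal F_i$ contains a partial homomorphism with domain $U$. This uses the connectedness of $q$ in an essential way. Given $U$, choose a sequence $U_0, U_1, \ldots, U_\ell = U$ of $1$-unions of $q$ where $U_0$ is the domain of $f_0$ and consecutive pairs $U_j, U_{j+1}$ share a variable; such a sequence exists because $\mathcal G(q)$ is connected. Iteratively applying the extensibility of $\mathcal F$, one obtains $f_{j+1} \in \mathcal F$ with domain $U_{j+1}$ that agrees with $f_j$ on the shared variable $v$. Since $f_j(v) \in q'_i$, the atom of $q'$ covering $f_{j+1}(U_{j+1})$ contains a vertex of $q'_i$, and hence by connectedness of $q'_i$ lies entirely within $q'_i$. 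So $f_{j+1} \in \mathcal F_i$, and after $\ell$ steps we get the desired witness for $U$.

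Having established propagation, extensibility of $\mathcal F_i$ is immediate: if $f \in \mathcal F_i$ has domain $U$ and $U'$ is any other $1$-union, then when $U \cap U' \neq \emptyset$ any $\mathcal F$-extension $f'$ is forced (as above) to land in $q'_i$, while when $U \cap U' = \emptyset$ the agreement condition is vacuous and the propagation claim directly supplies a witness in $\mathcal F_i$ with domain $U'$. Thus $\mathcal F_i$ witnesses $q \to_1 q'_i$, which combined with $q'_i \to q$ and $q'_i \in \HW(1)$ yields via Theorem \ref{theo:char} that $q'_i$ is an $\HW(1)$-overapproximation of $q$, and by construction connected. The main subtlety to watch is the binary-arity hypothesis, which underlies both the clean ``one atom lives in one component'' argument and the identification $\HW(1) =$ treewidth one used implicitly when passing between $q'$ and its Gaifman-graph components; outside the binary setting a $1$-union can straddle several components and the whole argument collapses.
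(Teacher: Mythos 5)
Your proof is correct and rests on the same key idea as the paper's: because $q$ is connected, a Duplicator strategy witnessing $q \to_1 q'$ is forced to stay inside a single connected component $q'_i$ of $q'$ (consecutive, overlapping $1$-unions of $q$ have images lying in single atoms of $q'$, hence in one component), and that component is then the desired connected $\HW(1)$-overapproximation. The differences are minor — the paper plays the compact game and, assuming $q'$ is a core, derives a contradiction from $q' \equiv q'_0$ having fewer atoms, whereas you phrase the localization through the extensible-family characterization and conclude positively via Theorem \ref{theo:char} without any core argument; also, your caveat about binary arity is unnecessary for this step, since the variables of any atom lie in a single component of the Gaifman graph regardless of arity.
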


\begin{proof}
Let $q'$ be an $\HW(1)$-overapproximation of $q$. 
Without loss of generality, we can assume that $q'$ is a core. 
By contradiction, suppose that $q'$ is not connected. 
By Theorem \ref{theo:char}, we have that $q\to_1 q'$ and $q'\to_1 q$. 
Since $q'\in \HW(1)$, the latter is equivalent to $q'\to q$. 
We claim that there is a connected component $q'_0$ of $q'$ such that $q\to_1 q'_0$. 
Recall from Section \ref{sec:compact} that $q\to_1 q'$ can be witnessed by a compact winning strategy $\H$ of the Duplicator.  
We make the Spoiler play in an arbitrary non-empty $1$-union $S_0$ of $q$.
Note that in this case, a $1$-union is either a singleton or an edge of $\G(q)$. 
The Duplicator can respond, following $\H$, with a partial homomorphism $h_0$ from $q$ to $q'$ with domain $S_0$. 
The elements in $h_0(S_0)$ must belong to some connected component of $q'$, say $q'_0$. 
Starting from this configuration of the game, 
we assume that Spoiler plays in a connected manner, i.e., if $S$ and $S'$ are two consecutive moves for Spoiler then $S\cap S'\neq \emptyset$. 
Then the Duplicator can play indefinitely by following $\H$. By the way Spoiler plays, all responses of Duplicator must fall in $q_0'$. 
Also, since $q$ is connected, each $1$-union of $q$ is eventually played by the Spoiler. 
This implies that $q\to_1 q'_0$. 

We have on the other hand that $q'_0\to q$, and hence $q'\equiv q'_0$. 
Since $q'$ is not connected, $q'_0$ has fewer atoms than $q'$. 
This contradicts the fact that $q'$ is a core. 
We conclude that $q'$ must be connected.  
\qed
\end{proof}

The high-level idea of the construction is to show that whenever a connected Boolean CQ $q$ has an $\HW(1)$-overapproximation $q'$, 
then we can assume that $q'$ is  
a ``subquery" of $q$, or of a slight modification of $q$. 
This will allow us to design a polynomial time algorithm that greedily looks for an $\HW(1)$-overapproximation of $q$.
Note that Lemma \ref{claim:overconnected} tells us that $q'\in \HW(1)$ can be assumed to be a connected Boolean CQ. 
In order to show that $q'$ is a ``subquery" of $q$, 
we first show a key lemma (see Lemma \ref{lemma:twohomomorphisms} below) 
about the structure of the \emph{endomorphisms} of a connected core in $\HW(1)$. 
In particular, we prove that besides the identity mapping, there can be only one extra endomorphism of a very particular form.  
Recall that an endomorphism is a homomorphism from the Boolean CQ to itself. 
For a core, any endomorphism $h$ is actually an \emph{isomorphism}, i.e., a bijection such that $h^{-1}$ is a homomorphism.

We first need to establish the following auxiliary lemma. 

\begin{lemma}\label{lemma:corehomomorphisms} 
Let $q$ be a connected Boolean CQ in $\HW(1)$ that is a core. Let $u$ and $v$ be variables of $q$. 
Then there is at most one endomorphism of $q$ mapping $u$ to $v$.
\end{lemma}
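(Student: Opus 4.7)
The plan is to reduce the lemma to a statement about automorphisms fixing a vertex, and then exploit the tree structure of $\G(q)$ together with coreness to rule out any nontrivial such automorphism.

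First, since $q$ is a core, every endomorphism of $q$ is an automorphism (by Proposition~\ref{prop:cores}). So if $h_1,h_2$ are endomorphisms with $h_1(u)=h_2(u)=v$, then $g := h_2^{-1}\circ h_1$ is an automorphism of $q$ with $g(u)=u$. It therefore suffices to prove that the only automorphism of $q$ fixing $u$ is the identity. Because the schema is binary and $q$ is connected and in $\HW(1)$, the Gaifman graph $\G(q)$ is a tree, and every automorphism of $q$ induces an automorphism of $\G(q)$.

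Suppose, towards a contradiction, that some automorphism $g$ of $q$ fixes $u$ but is not the identity, and let $F\subseteq \Var(q)$ be the set of vertices fixed by~$g$. Then $u\in F$, $F\neq \Var(q)$, and by connectedness of $\G(q)$ there exist $\G(q)$-adjacent $w\in F$ and $w'\notin F$. Set $w'' := g(w')\neq w'$; since $g$ is an automorphism of $\G(q)$ and $g(w)=w$, $w''$ is also a neighbor of $w$. Removing $w$ from the tree $\G(q)$ breaks it into components, each containing exactly one neighbor of $w$; let $C$ be the component containing $w'$ and $C''$ the one containing $w''$. Then $g$ permutes the components of $\G(q)\setminus\{w\}$ and sends $w'$ to $w''$, so $g$ maps $C$ bijectively onto $C''$, and in particular $C\cap C''=\emptyset$.

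Now I will define a ``folding'' endomorphism $h$ of $q$ by $h(x)=g(x)$ if $x\in C$ and $h(x)=x$ otherwise, and derive a contradiction with coreness. The crucial verification is that $h$ preserves every atom of $q$. Since the schema is binary, atoms have the form $R(x)$ or $R(x,y)$. Atoms wholly inside $C$ are preserved because $g$ is an endomorphism; atoms wholly outside $C$ are preserved because $h$ is the identity there. For a mixed atom $R(x,y)$ with $x\in C$ and $y\notin C$, the edge $\{x,y\}$ of $\G(q)$ must cross from $C$ to its complement; but since $\G(q)$ is a tree and $C$ is a component of $\G(q)\setminus\{w\}$, the only vertex outside $C$ with a neighbor inside $C$ is $w$, forcing $y=w$. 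Then $h(R(x,w))=R(g(x),w)=R(g(x),g(w))$, which lies in $q$ because $g$ is an endomorphism. Hence $h$ is an endomorphism of $q$. By construction $h$ sends $C$ into $C''$ and leaves $\Var(q)\setminus C$ pointwise fixed, so no variable of $C$ (in particular, $w'$) appears in $h(q)$. Thus every atom of $q$ involving $w'$, which includes the atom witnessing the edge $\{w,w'\}$ of $\G(q)$, is absent from $h(q)$, making $h(q)$ strictly smaller than $q$ but equivalent to it, contradicting that $q$ is a core.

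The delicate step is the atom-preservation check for $h$, which uses both the tree structure of $\G(q)$ (so that $w$ is the unique cut between $C$ and the rest of the tree, ensuring every mixed atom passes through $w$) and the binary-schema hypothesis (ruling out atoms of arity $\geq 3$ that could straddle multiple components of $\G(q)\setminus\{w\}$). Everything else is bookkeeping.
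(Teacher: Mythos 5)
Your proof is correct, and it takes a somewhat different route than the paper's. The paper argues directly with the two endomorphisms $h_1 \neq h_2$ agreeing at $u$: it roots the tree $\G(q)$ at $u$, picks a disagreement vertex $w$ of minimal distance to $u$, locates the vertex $w'$ with $h_2(w') = h_1(w)$, proves that $w$ and $w'$ are siblings under a common parent $z$ (using minimality, i.e.\ $h_1(z)=h_2(z)$, and that $h_1,h_2$ are isomorphisms), and then glues $h_2$ on the subtree rooted at $w'$ with $h_1$ elsewhere, obtaining an endomorphism with $h(w)=h(w')$, i.e.\ a non-injective endomorphism of a core. You instead first reduce to showing that the only automorphism fixing $u$ is the identity (using that endomorphisms of a core are invertible, so $g = h_2^{-1}\circ h_1$ is an automorphism fixing $u$), and then fold the single map $g$ against the identity along a cut vertex $w$ chosen on the boundary of the fixed-point set of $g$; since $g$ fixes $w$ and permutes the components of $\G(q)\setminus\{w\}$, the folded map is an endomorphism whose image misses $w'$, so $h(q)$ is a strictly smaller equivalent query, contradicting coreness. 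Both arguments rest on the same two facts -- endomorphisms of a core are isomorphisms (hence induce automorphisms of the tree $\G(q)$), and any atom crossing the cut must pass through the cut vertex because $\G(q)$ is a tree over a binary schema -- and both derive the contradiction by mixing two homomorphisms along a tree separator. What your version buys is the elimination of the rooted-tree, minimal-distance and sibling bookkeeping in favor of the cleaner component-permutation argument, at the mild cost of invoking $h_2^{-1}$ explicitly; your explicit appeal to the binary schema in the atom-preservation check mirrors the paper's implicit use of the same assumption (the lemma lives in the section where $\G(q)$ being a tree is equivalent to $q\in\HW(1)$), so no generality is lost.
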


\begin{proof}
By contradiction, assume that there are two distinct endomorphisms $h_1$ and $h_2$ of $q$ with $h_1(u) = h_2(u) = v$. 
Recall that, since $q$ is a core, $h_1$ and $h_2$ are isomorphisms. 
By hypothesis, $\G(q)$ is a tree and we root it at $u$. 
Since $h_1\neq h_2$, there must be a variable $w\neq u$ of $q$ with $h_1(w) \neq h_2(w)$, 
which we choose to have minimal distance to $u$ in the tree $\G(q)$. 
As $h_2$ is a bijection, there is a unique $w'$ such that $h_2(w')=h_1(w)$. 
Note that $w'\neq w$. 
We claim that $w$ and $w'$ have the same parent in $\G(q)$. 
Let $z$ be the parent of $w$. 
Since $h_1$ is an isomorphism, $h_1(w)$ and $h_1(z)$ are adjacent in $\G(q)$. 
As $h_1(w)=h_2(w')$ and, by minimality of $w$, $h_1(z)=h_2(z)$, we have that $h_2(w')$ and $h_2(z)$ are also adjacent. 
Since $h_2$ is an isomorphism, it follows that $w'$ and $z$ are adjacent in $\G(q)$. 
However, $w'$ cannot be the parent of $z$ since $h_1(w')\neq h_2(w')$. 
Therefore $z$ is the parent of $w'$.

We define a mapping $h$ from $q$ to itself 
such that $h(t)=h_2(t)$ if $t$ is a variable that 
belongs to the subtree of $\G(q)$ rooted at $w'$; 
otherwise, $h(t)=h_1(t)$. 
Note that $h$ is an endomorphism of $q$. 
Indeed, the only atoms that in principle are not satisfied by $h$ are those mentioning $w'$ and $z$. 
However, since $h_2$ is an endomorphism and $h_1(z)=h_2(z)$, these atoms are actually satisfied. 
Finally, observe that
$$h(w) \, = \, h_1(w) \, = \, h_2(w') \, = \, h(w'),$$ and then 
$h$ is not injective. This is a contradiction to $q$ being a core.
\qed
\end{proof}

Let $q\in \HW(1)$ be a Boolean connected core. 
We say that an endomorphism $h$ of $q$ is a \emph{swapping endomorphism} for $u$ and $v$ if 
$h(u)=v$ and $\{u,v\}$ is an edge in $\G(q)$, i.e., $u$ and $v$ are adjacent variables in $\G(q)$. 
Using the fact that $\G(q)$ is a tree and $h$ is an isomorphism, 
it can be seen that $h(v)=u$ must also hold in this case (and hence the name ``swapping"). 
Moreover, if such $h$ exists for $u$ and $v$, by Lemma~\ref{lemma:corehomomorphisms}, it must be unique and then 
we can speak about \emph{the} swapping endomorphism for $u$ and $v$. We then have the following:

\begin{lemma}\label{lemma:twohomomorphisms}
Let $q$ be a Boolean connected CQ in $\HW(1)$ that is a core. 
Then $q$ has at most one endomorphism besides the identity mapping. 
If such endomorphism exists, it is the swapping endomorphism for some $u$ and $v$.
\end{lemma}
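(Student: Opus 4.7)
My plan is to reduce the statement to a classical fact about automorphisms of finite trees, bridged by Lemma~\ref{lemma:corehomomorphisms}. Since $q$ is a connected core in $\HW(1)$ over a binary schema, $\G(q)$ is a finite tree and every endomorphism $h$ of $q$ is bijective (hence an isomorphism). In particular, $h$ induces a graph automorphism of the tree $\G(q)$ that additionally respects the relation names and argument orders of the atoms of $q$.

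First I would invoke (or sketch a minimal-displacement proof of) the following standard fact: every automorphism of a finite tree preserves the tree's center, and the center is either a single vertex or a single edge; hence every automorphism either fixes some vertex, or setwise preserves the central edge while swapping its two endpoints. Applied to the automorphism of $\G(q)$ induced by $h$, this gives two cases.

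In the first case, $h$ fixes some variable $u$, so both $h$ and the identity endomorphism map $u$ to $u$; Lemma~\ref{lemma:corehomomorphisms} then forces $h = \mathrm{id}$. In the second case, there is an edge $\{u,v\}$ of $\G(q)$ with $h(u) = v$ and $h(v) = u$, i.e., $h$ is a swapping endomorphism for $u$ and $v$. Since any non-identity endomorphism must fall in the second case, and all such endomorphisms must swap the \emph{same} central edge of $\G(q)$, one more application of Lemma~\ref{lemma:corehomomorphisms} (to any variable on that edge) shows that at most one such endomorphism exists, establishing both claims.

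The main obstacle I anticipate is making sure, cleanly, that the second case really produces a genuine swapping endomorphism of $q$ and not merely a graph automorphism of the underlying tree: an abstract automorphism of $\G(q)$ need not respect relation names or the positions in which variables occur inside atoms. This is handled transparently here because $h$ is assumed to be an endomorphism of $q$ from the outset, so any fixed vertex or swapped edge produced by the tree-theoretic classification is automatically compatible with the full atom structure of $q$, and the ``swapping'' of the endpoints on the structural level really means $h(u)=v$ and $h(v)=u$ in the sense of the definition.
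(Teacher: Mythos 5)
Your proof is correct and follows essentially the same route as the paper's: both reduce the statement to the fact that an automorphism of the tree $\G(q)$ must fix a vertex or invert an edge, and then apply Lemma~\ref{lemma:corehomomorphisms} twice, once to collapse the vertex-fixing case to the identity and once to get uniqueness of the swapping endomorphism. The only difference is presentational: you invoke the classical center-preservation theorem for tree automorphisms, while the paper proves that fact inline by taking a simple path of maximal length in $\G(q)$ and arguing about its middle vertex (even length) or middle edge (odd length).
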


\begin{proof}
Fix a simple path $P=w_0,w_1,\dots,w_m$ in $\G(q)$ of maximal length (recall that in a simple path all vertices are distinct). 
Suppose that there exists an endomorphism $h$ of $q$ different from the identity. 
As $q$ is a core, $h$ is an isomorphism. 
Then the path $P' = h(w_0),h(w_1),\dots,h(w_m)$ is a simple path of the same length. 
Furthermore, $P$ and $P'$ share a vertex. 
Indeed, if this is not the case, since $\G(q)$ is connected, 
one can pick $w$ in $P$ and $w'$ in $P'$ such that $w$ and $w'$ are connected by a simple path $P''$ which is 
vertex-disjoint  from $P$ and $P'$ (except for $w$ and $w'$), and construct a longer path than $P$.

We claim that $m$ is odd (recall that $m$ is the size of $P$). 
By contradiction, suppose $m$ is even and let $u = w_{m/2}$ be the middle vertex of $P$. 
It must be the case that $u$ is also the middle vertex of $P'$, as otherwise $\G(q)$ would contain a path longer than $P$. 
In particular, $h(u) = u$ and then $h$ must be the identity mapping by Lemma \ref{lemma:corehomomorphisms}; a contradiction. 

Let $u = w_{\lfloor m/2 \rfloor}$ and $v = w_{\lceil m/2 \rceil}$ be the middle vertices of $P$. 
Again, by maximality of $P$, we have that $u$ and $v$ are also the middle vertices of $P'$, i.e., $\{u,v\}=\{h(u), h(v)\}$. 
Since $h$ is not the identity and by Lemma \ref{lemma:corehomomorphisms}, it follows that $h(u)=v$ (and hence, $h(v)=u$). 
Since $u$ and $v$ are adjacent in $\G(q)$, we conclude that $h$ must be the swapping endomorphism for $u$ and~$v$.
\qed
\end{proof}

Let $q\in \HW(1)$ be a connected Boolean CQ. 
Let $u$ and $v$ be variables adjacent in $\G(q)$.
Since $\G(q)$ is a tree, 
if we remove from $q$ all the atoms 
that mention $u$ and $v$ simultaneously, we obtain two connected Boolean CQs, one containing $u$ and the 
other containing $v$. We denote these CQs by $t_u^q$ and $t_v^q$, respectively. 

We need to introduce some notation. 
Suppose that $q$ and $q'$ are Boolean CQs and $X$ and $X'$ are subsets of the variables of $q$ and $q'$, respectively. 
We denote by $(q,X)\rightarrow_1 (q',X')$ the fact that 
the Duplicator has a winning strategy in the existential $1$-cover game on $q$ and $q'$ with the property that whenever 
the Spoiler places a pebble on an element of $X$ in $q$, 
then the Duplicator responds with some element of $X'$ in $q'$. 
It can be seen that checking whether $(q,X)\rightarrow_1 (q',X')$ can still be done in polynomial time. 

The following lemma formalizes the idea that an $\HW(1)$-overapproximation can be assumed to be essentially a subquery 
of the original query.  

 \begin{lemma}\label{lemma-games}
 Suppose $q$ is a Boolean CQ and suppose $q'$ is a connected core that is a $\HW(1)$-overapproximation of $q$. 
 Then we have the following:
 \begin{enumerate}
 \item If the only endomorphism of $q'$ is the identity mapping, then any homomorphism from $q'$ to $q$ is injective. 
 \item If $q'$ has the swapping endomorphism for some $u'$ and $v'$, then for any homomorphism $h$ from $q'$ to $q$, we have that
 \begin{enumerate}
 \item $(q,\{h(u'),h(v')\})\rightarrow_1 (q',\{u',v'\})$, and
 \item $h(z')\neq h(z'')$ for all pairs of variables $z',z''$, except maybe for $z'\neq u'$ in $t_{u'}^{q'}$ and 
 $z''\neq v'$ in $t_{v'}^{q'}$. 
 \end{enumerate}
 \end{enumerate}
  \end{lemma}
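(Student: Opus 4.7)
The plan is a composition-endomorphism argument used uniformly for both parts. By Theorem~\ref{theo:char}, Duplicator has a compact winning strategy $\mathcal{H}=\{f_S\}$ witnessing $q\to_1 q'$. Given any homomorphism $h:q'\to q$, I compose $h$ with $\mathcal{H}$ to build a strategy for $q'\to_1 q'$: for each $1$-union $T$ of $q'$ (either a singleton or the two endpoints of a binary atom), $h(T)$ is a $1$-union of $q$ because $h$ preserves atoms, so I set $g_T(z)=f_{h(T)}(h(z))$ for $z\in T$. Consistency of $\mathcal{H}$ at the element $h(z)\in h(T)\cap h(T')$ transports to consistency of the family $\{g_T\}$, so the partial maps glue into an endomorphism $g:q'\to q'$ with $g(z)=f_{h(T)}(h(z))$ for any $T\ni z$. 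Since $q'$ is a core, $g$ is an automorphism; by Lemma~\ref{lemma:twohomomorphisms}, $g$ must be the identity in Case~(1), and is either the identity or the swapping endomorphism $\sigma$ in Case~(2).

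For part (1), if $h(x)=h(y)$ with $x\neq y$, taking the singleton $1$-unions $T=\{x\}$ and $T'=\{y\}$ yields $g(x)=f_{\{h(x)\}}(h(x))=f_{\{h(y)\}}(h(y))=g(y)$, which forces $x=y$ since $g=\mathrm{id}$; a contradiction. For part (2)(a), first note that $h(u')\neq h(v')$: otherwise $g(u')=g(v')$, and injectivity of $g$ (either the identity or $\sigma$) gives $u'=v'$. Whichever of the two options $g$ realizes, $\{g(u'),g(v')\}=\{u',v'\}$. By consistency of $\mathcal{H}$ between the singleton $\{h(u')\}$ and any larger $1$-union of $q$ containing $h(u')$, $f_S(h(u'))=f_{\{h(u')\}}(h(u'))=g(u')\in\{u',v'\}$ for every such $S$, and analogously for $h(v')$. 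Hence $\mathcal{H}$ itself already witnesses $(q,\{h(u'),h(v')\})\to_1(q',\{u',v'\})$.

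For part (2)(b), suppose $h(z')=h(z'')$ with $z'\neq z''$. The same composition argument gives $g(z')=g(z'')$; since $g=\mathrm{id}$ would force $z'=z''$, we must have $g=\sigma$ and $z'=\sigma(z'')$. By Lemma~\ref{lemma:corehomomorphisms}, $\sigma$ has no fixed points (any endomorphism that fixes a vertex would collapse to the identity), and $\sigma$ is an involution swapping the two subtrees $t_{u'}^{q'}$ and $t_{v'}^{q'}$; therefore $z'$ and $z''$ lie in different subtrees. Finally, $z'=u'$ would imply $z''=\sigma(u')=v'$ and hence $h(u')=h(v')$, which was already excluded in (2)(a). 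Thus $z'\neq u'$ and $z''\neq v'$, as required.

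The main obstacle is the setup of the composed strategy: verifying that the family $\{g_T\}$ is both well-defined and a valid compact winning strategy, which hinges on the fact that $h$ sends $1$-unions of $q'$ into $1$-unions of $q$ (in particular when a binary atom of $q'$ is mapped to a self-loop in $q$), and that consistency of $\mathcal{H}$ is transported through $h$. Once this is in place, Lemma~\ref{lemma:twohomomorphisms} reduces all three claims to the short case analyses above.
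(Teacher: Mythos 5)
Your overall plan -- compose $h$ with Duplicator's strategy to obtain endomorphisms of $q'$ and then invoke Lemma~\ref{lemma:twohomomorphisms} -- is exactly the paper's plan, but the central step of your execution does not go through. You treat the compact winning strategy as a single family $\{f_S\}$ with one partial homomorphism per $1$-union and with globally consistent values, and you glue these into one endomorphism $g$ with $g(z)=f_{h(T)}(h(z))$ for every $T\ni z$. A winning strategy is only a family $\cal F$ of partial homomorphisms with the \emph{existential} forth property (each $f\in\cal F$ with domain $U$ admits \emph{some} extension to $U'$); there may be several members with the same domain, and pre-selected members need not agree on overlaps. In fact, a globally consistent choice of one partial homomorphism per $1$-union of $q$ would glue into a full homomorphism $q\to q'$, which is false in precisely the situations this lemma is about (e.g.\ the pair $q,q'$ of Figure~\ref{fig:fig1}, where $q\to_1 q'$ but $q\not\to q'$). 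So the glued $g$ is not available, and everything you derive from it is unjustified: in part (1) the step ``$h(x)=h(y)$ implies $g(x)=g(y)$''; in (2)(a) the claims that $h(u')\neq h(v')$ and that \emph{every} $f_S$ with $h(u')$ in its domain maps it into $\{u',v'\}$; and in (2)(b) the step ``$h(z')=h(z'')$ implies $g(z')=g(z'')$''. Even with the correct way of extracting an endomorphism (a top-down construction along the tree $\G(q')$, which works because $q'\in\HW(1)$, with position-dependent choices), one cannot in general force $g(z')=g(z'')$ for two vertices that merely share an $h$-image, since the strategy's responses depend on the current pebble position, not on the element alone.

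The paper's proof avoids this by arguing per ``bad event''. For part (1), given $h(z')=h(z'')=b$, it takes \emph{any} possible Duplicator response $b'$ to $b$, notes that $b'$ differs from at least one of $z',z''$, and builds (top-down, using acyclicity of $q'$) an endomorphism moving that vertex, contradicting that the identity is the only endomorphism -- note it never needs an endomorphism identifying $z'$ and $z''$. For (2)(a), it assumes some strategy response sends $h(u')$ to some $z'\notin\{u',v'\}$ and turns that specific response into an endomorphism mapping $u'$ to $z'$, contradicting Lemma~\ref{lemma:twohomomorphisms}; this shows every winning strategy already witnesses $(q,\{h(u'),h(v')\})\to_1(q',\{u',v'\})$. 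Part (2)(b) is handled by a case analysis in the same spirit. Your later case analyses (fixed-point-freeness of the swap, the subtrees $t^{q'}_{u'},t^{q'}_{v'}$, the exclusion of $z'=u'$ or $z''=v'$) are fine, but they all sit on top of the unavailable glued $g$, so the proof as written has a genuine gap at its core.
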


 \begin{proof}
 Suppose the only endomorphism of $q'$ is the identity and let $h$ be a homomorphism from $q'$ to $q$. 
 Towards a contradiction, suppose $h(z')=h(z'')=b$ for distinct variables $z'$ and $z''$ in $q'$. 
 Let $\H$ be a winning strategy of Duplicator witnessing $q\rightarrow_1 q'$. 
 We choose any variable $b'$ in $q'$ such that $b'$ is a possible response of Duplicator according to $\H$, 
  when Spoiler starts playing on $b$ in $q$.
Suppose that $b'\neq z'$ (the case $b'\neq z''$ is analogous). 
Then by composing $h$ with $\H$, 
we obtain a winning strategy for Duplicator in the game on $q'$ and $q'$ such that $b'$ is a possible response of Duplicator
 when Spoiler starts playing on $z'$. 
 Since $q'\in \HW(1)$, we can define an endomorphism $g$ of $q'$ that maps $z'$ to $b'$. 
 Then $g$ is an endomorphism different from the identity, which is a contradiction.
 
 Suppose now that $q'$ has the swapping endomorphism for some $u'$ and $v'$, and let $h$ be a homomorphism from $q'$ to $q$.
 First, assume by contradiction that Duplicator's strategy witnessing $q\rightarrow_1 q'$ is such that 
 for $h(u')$ (the case for $h(v')$ is analogous), Duplicator responds with $z'\not\in \{u',v'\}$. By composing $h$ 
 with this strategy, and using the fact that $q'\in \HW(1)$, 
 it follows that there is an endomorphism $g$ of $q'$ that maps $u'$ to $z'$. This endomorphism is different 
 from the identity and from the swapping endomorphism for $u'$ and $v'$, 
 which contradicts Lemma \ref{lemma:twohomomorphisms}. 
 Finally, suppose towards a contradiction that item (2.b) does not hold for some pair $z'\neq z''$ with $h(z')=h(z'')$. 
We have two cases. 
 \begin{itemize} 
 \item[(i)] $z',z''\in t^{q'}_{u'}$ or $z',z''\in t^{q'}_{v'}$; or 
 \item[(ii)] $z'\in t^{q'}_{u'}$ and $z''\in t^{q'}_{v'}$, and either $z'=u'$ or $z''=v'$. 
 \end{itemize} 
In any case, we can again compose $h$ with the strategy witnessing $q\rightarrow_1 q'$ and use the fact that $q'\in \HW(1)$, 
to derive an endomorphism of $q'$ that is neither the identity nor the swapping endomorphism for $u'$ and~$v'$, which is a contradiction. \qed
\end{proof}

Let $q$ be a Boolean CQ and $u$ and $v$ be adjacent variables in $\G(q)$. 
We define a CQ $q_u \# q_v$ as follows. 
Denote by $q \setminus v$ the CQ obtained from $q$ by removing all atoms that contain $v$. 
Let $q_u$ be the query constructed from $q \setminus v$ by replacing each variable $z$ in $q \setminus v$ by a fresh variable $z_u$. 
Similarly, let $q_v$ be the CQ where each variable $z$ in $q \setminus u$ is replaced by a fresh variable $z_v$. 
The CQ $q_u \# q_v$ contains all the atoms of $q_u$ and $q_v$, 
and additionally, all atoms $R(u_u, v_v)$ or $R(v_v, u_u)$ whenever $R(u,v)$ or $R(v,u)$ is an atom in $q$, respectively. 
Note that by mapping variables $z_u$ and $z_v$ to $z$, we have that $q_u \# q_v\to q$.

Now we are ready to present our algorithm. 
Observe that Lemma \ref{lemma-games} implies that whenever $q'$ is an $\HW(1)$-overapproximation of $q$, 
we can assume that either $q'$ is a subquery of $q$ (item (1)), or a subquery of $q_u \# q_v$ for some $u$ and $v$ (item (2)). 
The algorithm then greedily searches through the subqueries of $q$ and $q_u \# q_v$, for all $u$ and $v$, 
to find an $\HW(1)$-overapproximation of $q$.

\paragraph{{\em {\bf The algorithm.}}}
Let $q$ be a connected Boolean CQ.
The algorithm first checks whether a subquery of $q$ is an $\HW(1)$-overapproximation. This is Step $1$. 
 In Step $2$, the algorithm checks whether a subquery of $q_u \# q_v$ is an $\HW(1)$-overapproximation, for some $u$ and $v$ in $q$. 
If neither step succeed then the algorithm rejects.
For a Boolean CQ $p$ and an atom $e$ of $p$, we denote by $p\setminus e$ the Boolean CQ obtained from $p$ by removing $e$. 
Step $1$ is as follows:

  \begin{enumerate}
   \item Set $q_0$ to be $q$. 
   \item While $q_i \notin \HW(1)$, search for an atom $e$ such that \mbox{$q_i \rightarrow_1 q_i \setminus e$}. 
   If there is no such atom then continue with Step $2$. Otherwise, set $q_{i+1}$ to be~$q_i\setminus e$.
   \item If $q_i \in \HW(1)$, for some $i$, then accept and output $q_i$.
     \end{enumerate}

For Step $2$, let $\cal P$ be an enumeration of the pairs $(u,v)$ such that $u,v$ are adjacent in $\G(q)$ and $q\rightarrow_1 q_u\# q_v$. 
Step~$2$ is as follows:

  \begin{enumerate}
  \item Let $(u,v)$ be the first pair in $\cal P$.
   \item Set $q_0$ to be $q_{u}\# q_v$. 
   \item While $q_i \notin \HW(1)$, search for an atom $e$ that does not mention 
   $u_u$ and $v_v$ simultaneously such that $(q_i, \{u_u,v_v\}) \rightarrow_1 (q_i \setminus e, \{u_u,v_v\})$. 
   If there is no such atom, let $(u,v)$ be the next pair in $\cal P$ and repeat from item $2$. 
   Otherwise, set $q_{i+1}$ to be $q_i\setminus e$.
   \item If $q_i \in \HW(1)$, for some $i$, then accept and output $q_i$.
  \end{enumerate}
  
\medskip
  
Notice that the described algorithm can be implemented in polynomial time. 
Below we argue that it is correct. 

Suppose first that the algorithm, on input $q$, accepts with output $q^*$. By construction~\mbox{$q^*\in \HW(1)$}. 
 Assume first that the algorithm accepts in the $m$-th iteration of Step~$1$, and thus $q^*=q_m$. 
 By construction, for each $0\leq i<m$, we have that $q_i\rightarrow_1 q_{i+1}$ and~\mbox{$q_{i+1}\rightarrow_1 q_i$}.
 In particular, $q\rightarrow_1 q^*$ and $q^*\rightarrow_1 q$, and thus $q^*$ is a $\HW(1)$-overapproximation of $q$.
 Suppose now that the algorithm accepts in Step~$2$ for a pair $(u,v)\in \cal P$, in the $m$-th iteration. 
 Again we have that $q_i\rightarrow_1 q_{i+1}$ and $q_{i+1}\rightarrow_1 q_i$, for each $0\leq i<m$, and thus 
 $$\text{$q_u\#q_v\rightarrow_1q^* \quad$ and $\quad q^*\rightarrow_1 q_u\#q_v$.}$$ Since $(u,v)\in \cal P$, it follows that $q\rightarrow_1 q_u\#q_v$, and 
 then $q\rightarrow_1 q^*$. Using the fact that $q_u\#q_v\to q$, we have that $q^*\rightarrow_1 q$. 
 Hence, $q^*$ is a $\HW(1)$-overapproximation~of~$q$.

 It remains to show that if $q$ has an $\HW(1)$-overapproximation $q'$ then the algorithm accepts. 
 Since $q$ is connected, by Lemma \ref{claim:overconnected}, we can assume that $q'$ also is. Moreover, we can assume without loss of generality that $q'$ is a core.
 By Lemma \ref{lemma:twohomomorphisms}, we have two cases:
 \begin{itemize}
 \item[(1)] the only endomorphism of $q'$ is the identity, or 
 \item[(2)] $q'$ has two endomorphisms, namely, the identity and the swapping endomorphism for some variables $u'$ and $v'$.
\end{itemize} 

First suppose case (1) applies. We show that the algorithm accepts in Step~$1$. 
By definition, $q_{i}\rightarrow_1 q_{i+1}$ and $q_{i+1}\rightarrow_1 q_i$ (actually $q_{i+1}\to q_i$), 
for each $0\leq i\leq m$, where $m$ is the number of iteration in Step~$1$. 
It follows that $$\text{$q_0=q\rightarrow_1 q_m \quad $ and $\quad q_m\rightarrow_1 q$.}$$ 
Since the relation $\rightarrow_1$ composes, $q'$ is a $\HW(1)$-overapproximation of $q_m$ 
and by using Lemma \ref{lemma-games}, $q'$ is a subquery of $q_m$.
Now for the sake of contradiction assume that the algorithm does not accept in Step~$1$. 
Then $q_m\not\in \HW(1)$ and there is no edge $e$ in $q_m$ such that $q_m\rightarrow_1 q_m\setminus e$. 
Since $q'$ is $\HW(1)$-overapproximation of $q_m$, we have that $q_m\rightarrow_1 q'$ and, since $q'\in \HW(1)$, 
$q'$ is a proper subquery of $q_m$. 
It follows that there is an edge $e$ in $q_m$ such that $q_m\rightarrow_1 q_m\setminus e$, which is a contradiction.

Suppose case (2) holds. In this case the algorithm accepts in Step~$2$. 
Let $h$ be a homomorphism from $q'$ to $q$, and let $u=h(u')$ and $v=h(v')$.
By Lemma \ref{lemma-games}, $u\neq v$ and then $u$ and $v$ are adjacent. 
Also, by Lemma \ref{lemma-games}, $q'$ is a subquery of $q_u\#q_v$. 
Since $q\rightarrow_1 q'$, it follows that $q\rightarrow_1 q_u\#q_v$, and then $(u,v)\in \cal P$. 
We claim that the algorithm accepts when $(u,v)$ is chosen from $\cal P$.
First, note that $q'$ is a $\HW(1)$-overapproximation of $q_m$.
Indeed, by definition, 
$$q_m\to q_u\#q_v, \quad q_u\#q_v\to q, \quad  
\text{and} \quad q\rightarrow_1 q'.$$ It follows that $q_m\rightarrow_1 q'$. On the other hand, 
we have that $$(q',(u',v'))\to (q_u\#q_v, (u_u,v_v)),$$ as $q'$ is a subquery of $q_u\#q_v$, 
and 
$$(q_u\#q_v, \{u_u,v_v\})\rightarrow_1 (q_m,\{u_u,v_v\}).$$  
It follows that $(q',\{u',v'\})\rightarrow_1 (q_m,\{u_u,v_v\})$, 
which implies that $(q',(u',v'))\to (q_m,(u_u,v_v))$ via a homomorphism $g$. 
Then $q'$ is a $\HW(1)$-overapproximation of $q_m$. 
By applying Lemma \ref{lemma-games} to $q_m,q'$ and $g$, 
we obtain that $$(q_m,\{u_u,v_v\})\rightarrow_1 (q',\{u',v'\}),$$ and 
$g$ satisfies item (2.b). Observe that $g(z')\neq g(z'')$ for all $z'\neq u'$ in $t_{u'}^{q'}$ and 
 $z''\neq v'$ in $t_{v'}^{q'}$, since $\{u_u,v_v\}$ is a \emph{bridge} of $\G(q_m)$, i.e., its removal disconnects $\G(q_m)$.
 We conclude that $g$ is injective and then $q'$ is a subquery of $q_m$.

Towards a contradiction, assume that the algorithm do not accept when $(u,v)$ is chosen from $\cal P$. 
Then $q_m\not\in \HW(1)$ and there is no edge $e$ that does not mention both $u_u,v_v$ 
such that $(q_m, \{u_u,v_v\})\rightarrow_1 (q_m\setminus e, \{u_u,v_v\})$. 
Since $$\text{$(q_m, \{u_u,v_v\})\rightarrow_1 (q',\{u',v'\}) \quad$ \text{and}  
$\quad (q', (u',v'))\to (q_m, (u_u,v_v))$}$$ via the injective homomorphism $g$ and $q'\in \HW(1)$,  
it follows that there is an edge $e$ that does not mention both $u_u,v_v$ 
such that $(q_m, \{u_u,v_v\})\rightarrow_1 (q_m\setminus e, \{u_u,v_v\})$. 
This is a contradiction.

\subsubsection{Reduction to the connected case}

Now we consider the non-connected case. 
Given a Boolean CQ $q$ with connected components $q_1,\dots,q_m$, the algorithm proceeds as follows:

\begin{enumerate}
\item Start by simplifying $q$: Compute a minimal subset of Boolean CQs $\cal Q$ in $\{q_1,\dots,q_m\}$ such that 
for each $1\leq i\leq m$, there is a $p\in \cal Q$ with $q_i\rightarrow_1 p$.
\item Check whether each $p\in \cal Q$ has an $\HW(1)$-overapproximation $p'$ using the algorithm described for connected Boolean CQs in Section \ref{sec:connected-case}. 
If this is the case then accept and output the disjoint conjunction $\bigwedge_{p\in \cal Q} p'$.
\end{enumerate}

Clearly, the algorithm can be implemented in polynomial time. For the correctness, 
suppose first that the algorithm accepts and outputs $q'=\bigwedge_{p\in \cal Q} p'$. 
Then $$q' \, \to \, \bigwedge_{p\in \cal Q}p \, \to \, q.$$ 
We also have that $q\rightarrow_1  \bigwedge_{p\in \cal Q}p$ (by definition of $\cal Q$), 
and $\bigwedge_{p\in \cal Q}p\rightarrow_1 q'$. 
This implies that $q'$ is a $\HW(1)$-overapproximation of $q$. 

Suppose now that $q$ has an $\HW(1)$-overapproximation $q'$. 
Since both $q\rightarrow_1 \bigwedge_{p\in \cal Q}p$ and $\bigwedge_{p\in \cal Q}p\rightarrow_1 q$ hold, 
it is the case that $q'$ is also an $\HW(1)$-overapproximation of $\bigwedge_{p\in \cal Q}p$.
By the minimality of $\cal Q$, 
we have that $p\not\rightarrow_1 \hat p$, for each pair of distinct CQs $p,\hat p\in \cal Q$. 
Let $p$ be a CQ in $\cal Q$. Since $p\rightarrow_1 q'$ and $p$ is connected, it follows that $p\rightarrow_1 p^*$, 
where $p^*$ is a connected component of $q'$ 
(note that this follows by using the same argument we use in the proof of Lemma \ref{claim:overconnected} to show that $q\to_1 q'_0$, for some component $q'_0$). 
Also, since $q'\to \bigwedge_{p\in \cal Q}p$, 
there is $p_0\in \cal Q$ such that $p^*\to p_0$. In particular,~\mbox{$p\rightarrow_1 p_0$}. 
It follows that $p_0=p$, and then $p^*$ is an $\HW(1)$-overapproximation of~$p$.
We conclude that each $p\in \cal Q$ has an $\HW(1)$-overpproximation, and thus the algorithm accepts.
This finishes the proof of Theorem \ref{theo:treewdith-one}.

\paragraph{Remark}
The restriction to Boolean CQs in Theorem \ref{theo:treewdith-one} 
 is used in an essential way in our proof, and it is not clear whether this extends to non-Boolean CQs. 
The issue is that we do not have an analog of Lemma \ref{claim:overconnected} for the non-Boolean case, and hence, it is not clear how to translate Lemma \ref{lemma:twohomomorphisms} and Lemma \ref{lemma-games}
(or some modifications thereof) into a polynomial time algorithm. 

\subsection{Size of overapproximations} 
Over binary schemas $\HW(1)$-overapproximations are 
of polynomial size. This is optimal as over schemas of arity three there is an 
exponential lower bound for the size of $\HW(1)$-overapproximations:  

\begin{proposition} \label{prop:size} 
There is a schema $\sigma$ with a single ternary relation symbol
 and a family $(q_n)_{n \geq 1}$ of Boolean CQs over~$\sigma$, such that 
 (1) $q_n$ is of size $O(n)$, and (2) the size of every $\HW(1)$-overapproximation of $q_n$ 
 is $\Omega(2^n)$.  
 \end{proposition}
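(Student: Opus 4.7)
The plan is to invoke Theorem~\ref{theo:char}: a CQ $q' \in \HW(1)$ is the $\HW(1)$-overapproximation of $q_n$ precisely when $q' \to q_n$ and $q_n \to_1 q'$. I then construct a family $(q_n)_{n \geq 1}$ over a single ternary relation $R$ such that \emph{any} $q' \in \HW(1)$ satisfying these two conditions must have $\Omega(2^n)$ atoms.

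For the construction, I take $q_n$ to be a Boolean CQ of size $O(n)$ that succinctly encodes a structure whose canonical $1$-cover ``strategy tree'' (in the sense of Lemma~\ref{lemma:games-infty}) stabilizes only after $\Omega(2^n)$ distinct nodes. The key point is to exploit the ternary arity to \emph{couple} two degrees of freedom inside a single atom $R(x,y,z)$, which allows $q_n$ to behave as a succinctly-encoded ``cycle'' of effective length $2^n$ while using only $O(n)$ atoms. Concretely, I iterate a doubling gadget whose single atom pebbles a ``high bit'' together with a ``low pair,'' so that each of the $n$ gadget-levels doubles the number of inequivalent Spoiler play-histories. A binary version of the same idea collapses, in accordance with Theorem~\ref{theo:treewdith-one}. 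The existence of \emph{some} finite $\HW(1)$-overapproximation for this $q_n$ is established either by writing down one explicitly (the truncated cycle unfolding) or by invoking Theorem~\ref{theo:decidability-acyclic}.

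For the lower bound, fix an arbitrary $\HW(1)$-overapproximation $q'$ of $q_n$, a width-$1$ tree decomposition $(T_{q'},\chi)$, and a homomorphism $h\colon q' \to q_n$ (which exists because $q' \to q_n$). From $q_n \to_1 q'$ we obtain a Duplicator winning strategy~$\mathcal{H}$ in the compact existential $1$-cover game. I assign to each node $t$ of $T_{q'}$ a \emph{Spoiler-history signature} obtained by reading, along the path from the root to $t$, the sequence of $1$-unions of $q_n$ that $\mathcal{H}$ forces Spoiler to have played (this sequence is determined by $h$ applied to $\chi(t)$). The heart of the argument is that two nodes $t \neq t'$ of $T_{q'}$ with distinct signatures must carry distinct variables: identifying them would either create a cycle in~$q'$ (contradicting $q' \in \HW(1)$) or would let us remove atoms from $q'$ to obtain a strictly smaller $q'' \in \HW(1)$ still satisfying both game conditions, contradicting minimality of $q'$. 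Since there are $\Omega(2^n)$ distinct signatures corresponding to Spoiler play prefixes of length $n$, we conclude $|q'| = \Omega(2^n)$.

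The main obstacle is the lower bound, specifically formalizing the signature invariant so that it forces distinct Spoiler histories to occupy distinct nodes of $T_{q'}$. Duplicator is free to ``forget'' history, so the rigidity must be imported entirely from $q_n$: the ternary gadget is designed so that any two play prefixes that induce different states of the succinct encoding determine $1$-unions of $q_n$ whose $h$-preimages cannot simultaneously live in a single branch of $q'$ consistent with the game. Once this rigidity is established the counting step and conversion from variable count to atom count are routine.
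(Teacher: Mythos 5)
Your write-up is a plan rather than a proof, and the two places where the real work lies are both left open. First, $q_n$ is never actually defined: a ``doubling gadget whose single atom pebbles a high bit together with a low pair'' is not a construction anyone can check, and nothing is verified about it. In particular you never establish that your $q_n$ has a finite $\HW(1)$-overapproximation at all; invoking Theorem~\ref{theo:decidability-acyclic} cannot do this, since that theorem only provides a decision procedure, not existence for a concrete family, and the ``truncated cycle unfolding'' you allude to is not written down either. Note also that a CQ behaving like a long cycle typically has \emph{no} finite $\HW(1)$-overapproximation (cf.\ Figure~\ref{fig:fig2} and Theorem~\ref{theo:boundedness}), so the choice of gadget is delicate and cannot be waved through.

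Second, the lower bound is exactly the step you concede is not formalized, and as sketched it does not go through. The ``Spoiler-history signature'' is not well defined: Duplicator's strategy does not force any particular sequence of $1$-unions, and the rigidity claim --- that nodes of $(T_{q'},\chi)$ with distinct signatures must carry distinct variables --- is precisely the content of the lower bound, not a routine consequence. Moreover your intended contradiction is backwards: deleting atoms from $q'$ yields a query $q''$ with $q' \subseteq q''$, i.e.\ a \emph{weaker} query, which is perfectly compatible with $q'$ being a minimal element of the set of CQs in $\HW(1)$ containing $q_n$, so no contradiction with minimality arises. The paper avoids all of this machinery: it writes $q_n$ explicitly (a ladder of paired triples $R(x^j_i,x^1_{i+1},x^2_{i+1})$, $R(x^j_i,x^2_{i+1},x^1_{i+1})$) together with its binary-tree unfolding $q'_n$ of size $\Omega(2^n)$, checks $q'_n \to q_n$ and $q_n \to_1 q'_n$ so that $q'_n$ is the $\HW(1)$-overapproximation by Theorem~\ref{theo:char}, verifies that $q'_n$ is a core, and then uses uniqueness (Corollary~\ref{coro:over-unique}): any overapproximation is equivalent to $q'_n$, and if it had fewer atoms, composing $q'_n \to q' \to q'_n$ would map the core $q'_n$ onto a proper subset of its own atoms. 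This uniqueness-plus-core argument is the missing idea that would replace your signature counting; without it, or a worked-out substitute, the proposal does not prove the proposition.
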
 
 
 
  \begin{proof}
  The CQ $q_n$ contains the atoms:  
  \begin{itemize} 
  \item 
  $R(x_0, x^1_{1}, x^2_{1})$, $R(x_0, x^2_{1}, x^1_{1})$, as well as 
  \item $R(x^j_i, x^1_{i+1}, x^2_{i+1})$ and $R (x^j_i, x^2_{i+1}, x^1_{i+1})$, for each $1 \leq i \leq n-1$ and $j \in \{1, 2\}$.
     \end{itemize}
   Consider now the CQ $q'_n$ with the atoms:  
     \begin{itemize}
       \item $R(y_0, y^1_{1}, y^2_{1})$, $R(y_0, y^2_{1}, y^1_{1})$, and
       \item $R(y^w_{|w|}, y^{w1}_{|w|+1}, y^{w2}_{|w|+1})$ and $R(y^w_{|w|}, y^{w2}_{|w|+1}, y^{w1}_{|w|+1})$, for each word $w$ over  $\{1, 2\}$ of length $1 \leq |w| \leq n-1$.
     \end{itemize}
Figure \ref{figure:exampleAcyclicBlowupA} depicts the CQs $q_3$ and $q'_3$.
  
  \begin{figure}
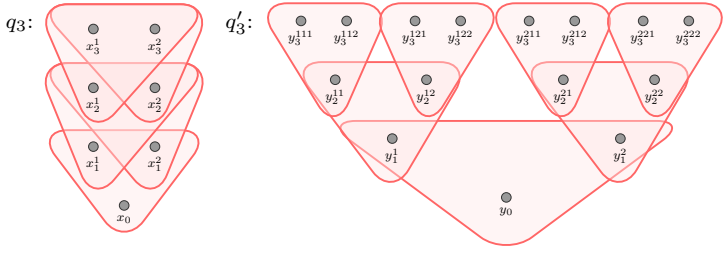

  \centering  
    $q_3$:\hspace{-1mm}\raisebox{-0.9\height}{\centering \scalebox{0.6}{
      \picExampleAcyclicBlowupA
    }}
    $q'_3$:\hspace{-2mm}\raisebox{-0.9\height}{ \scalebox{0.6}{
      \picExampleAcyclicBlowupB
    }}
    \vspace{-3mm}
    \caption{Illustration of the CQs $q_3$ and $q'_3$ from Proposition \ref{prop:size}. Each triple of variables represents 
    two atoms in the query; e.g., $\{y_0,y_1^1,y_1^2\}$ represents atoms $R(y_0, y^1_{1}, y^2_{1})$ and $R(y_0, y^2_{1}, y^1_{1})$ in $q'_3$.}
    \label{figure:exampleAcyclicBlowupA}
  \end{figure}

  Clearly, the mapping $h: q'_n \rightarrow q_n$ defined as 
  $h(y_0) = x_0$ and $$h(y^{wj}_{|w|+1}) \, = \, x^j_{|w|+1},$$ for each word $w$ over $\{1, 2\}$ of length $0 \leq 
  |w| \leq n-1$ and $j \in \{1,2\}$, 
  is a homomorphism. 
  We now show that $q_n\rightarrow_1 q'_n$ by building a compact winning strategy for Duplicator 
  (see the proof of Proposition \ref{prop:tree-automata}) which 
  basically ``inverts'' the homomorphism $h$. It contains: 
(a) Partial homomorphisms $(x_0, x^1_{1}, x^2_{1}) \to (y_0, y^1_{1}, y^2_{1})$ and 
    $(x_0, x^1_{1}, x^2_{1}) \to (y_0, y^2_{1}, y^1_{1})$, and (b)   
   for each word $w$ over $\{1, 2\}$ of length $1\leq |w| \leq n-1$ and $j\in \{1,2\}$, the 
    partial homomorphisms: 
    \begin{itemize}
    \item $(x^j_{|w|}, x^1_{|w|+1}, x^2_{|w|+1}) \to (y^w_{|w|}, y^{w1}_{|w|+1}, y^{w2}_{|w|+1})$, and
    \item $(x^j_{|w|}, x^1_{|w|+1}, x^2_{|w|+1}) \to (y^w_{|w|}, y^{w2}_{|w|+1}, y^{w1}_{|w|+1})$.    
  \end{itemize}
  It can be seen that this is a winning strategy for Duplicator.
  
Observe that the size of $q'_n$ is $\Omega(2^n)$. 
A straightforward case-by-case analysis shows that $q'_n$ is a core, 
i.e., there is no homomorphism from $q_n'$ to a {\em proper} subset of its atoms. 
We claim that $q'_n$ is the smallest $\HW(1)$-overapproximation of $q_n$, from which the proposition follows. 
Assume, towards a contradiction, that $q'$ is a $\HW(1)$-overapproximation of $q_n$ with fewer atoms than $q'_n$. Then, by Corollary \ref{coro:over-unique}, 
we have that $q'_n\equiv q'$. Composing the homomorphisms $h_1: q'_n \to q'$ and $h_2: q' \to q'_n$ yields a homomorphism from $q'_n$ to a proper subset of the atoms of $q'_n$. This is a contradiction since $q'_n$ is a core.
\qed
\end{proof}

\subsection{Beyond acyclicity} 
\label{sec:beyond-acyclic}

Theorem \ref{theo:char} characterizes when a CQ has a 
$\HW(k)$-overapproximation.  We provide an alternative  characterization 
in terms of a {\em boundedness} condition for the existential cover game. 
This helps understanding where lies the difficulty of determining the 
decidability status of the problem of existence of $\HW(k)$-overapproximations, for $k > 1$. 

Recall that we write $(\D,\bar a) \to_{k}^c (\D',\bar b)$, for $k \geq 1$ and $c \geq 1$, if Duplicator 
has a winning strategy {\em in the first $c$ rounds} of the (compact) existential $k$-cover 
game on $(\D,\bar a)$ and $(\D',\bar b)$. 
The next result establishes that a CQ $q$ has a $\HW(k)$-overapproximation iff the existential $k$-cover 
game played from $q$ is ``bounded'', i.e., if there is a constant $c \geq 1$ that bounds the number of rounds this game
needs to be played in order to determine if
Duplicator wins. 

\begin{theorem}  \label{theo:boundedness} Fix $k \geq 1$. The CQ $q(\bar x)$ has a $\HW(k)$-overapproximation iff 
there is an integer $c \geq 1$ such that  
$$(q,\bar x) \to_{k} (\D,\bar a) \ \ \Longleftrightarrow \ \ (q,\bar x) \to_{k}^c (\D,\bar a),$$ 
for each database $\D$ and tuple $\bar a$ of elements in $\D$. 
\end{theorem}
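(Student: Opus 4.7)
My plan is to interpret the boundedness condition through the CQ $q_c \in \HW(k)$ given by Lemma~\ref{lemma:tedious}: since $\bar a \in q_c(\D) \Leftrightarrow (q,\bar x) \to_k^c (\D,\bar a)$, the boundedness with constant $c$ says exactly that $q_c$ defines the set $\{\bar a \mid (q,\bar x) \to_k (\D,\bar a)\}$ on every $\D$. By Theorem~\ref{theo:games-overapp}, this set is also defined by the $\HW(k)$-overapproximation $q^*$, when one exists. Hence boundedness with constant $c$ is equivalent to $q^* \equiv q_c$, and the two directions of the theorem will be proved by chasing this equivalence.

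For the ($\Leftarrow$) direction, assume some $c$ witnesses boundedness. Instantiating at $(\D,\bar a) = (q,\bar x)$, where Duplicator wins trivially, gives $(q_c,\bar x_c) \to (q,\bar x)$ and hence $q \subseteq q_c$. Instantiating at $(\D,\bar a) = (q_c,\bar x_c)$ gives $(q,\bar x) \to_k (q_c,\bar x_c)$. The other half of the game, $(q_c,\bar x_c) \to_k (q,\bar x)$, is automatic: $q_c \in \HW(k)$ and the map $\Phi_t^{-1}$ from the proof of Lemma~\ref{lemma:tedious} yields a homomorphism $q_c \to q$. Theorem~\ref{theo:char} then identifies $q_c$ as the $\HW(k)$-overapproximation of $q$.

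For the ($\Rightarrow$) direction, suppose $q^*$ is the $\HW(k)$-overapproximation. The containment $q^* \subseteq q_c$ is immediate for all $c$: Theorem~\ref{theo:char} gives $(q,\bar x) \to_k (q^*,\bar x^*)$, which trivially implies $(q,\bar x) \to_k^c (q^*,\bar x^*)$, and Lemma~\ref{lemma:tedious} converts this into a homomorphism $(q_c,\bar x_c) \to (q^*,\bar x^*)$. The real task is to produce $c$ for which also $q_c \subseteq q^*$, i.e.\ a homomorphism $\hat\rho : (q^*,\bar x^*) \to (q_c,\bar x_c)$. I would build $\hat\rho$ from a tree decomposition $(T^*,\chi^*)$ of $q^*$ of width $k$ and depth $D$, together with a homomorphism $\rho : q^* \to q$ with $\rho(\bar x^*) = \bar x$ witnessing $q \subseteq q^*$. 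Without loss of generality one can refine $(T^*,\chi^*)$ so that every atom of $q^*$ belongs to the cover of some bag (by attaching, for each atom, a new leaf covered by just that single atom, increasing the depth by at most one without increasing the width). For each bag $\chi^*(v^*)$ with cover $A_1^*,\ldots,A_\ell^*$, let $S_{v^*}$ be the $k$-union of $q$ consisting of the union of variables of the atoms $\rho(A_1^*),\ldots,\rho(A_\ell^*)$. Taking $c$ larger than the depth of the refined decomposition, walk $T^*$ top-down and descend in parallel through the tree $T_c$ underlying $q_c$ by following the labels $S_{r^*}, S_{v_1^*}, \ldots$, obtaining a node $t_{v^*} \in T_c$ with $\lambda_c(t_{v^*}) = S_{v^*}$ for each $v^*$; this is possible because every $k$-union of $q$ appears as a label of some child at every internal node of $T_c$. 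Set $\hat\rho(u) := \Phi_{t_{v^*}}(\rho(u))$ for $u \in \chi^*(v^*)$ and $\hat\rho(x_i^*) := x_i$ for free variables.

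The main obstacle will be verifying that $\hat\rho$ is well-defined and atom-preserving, which comes down to the compatibility properties of the bijections $\Phi_t$. Well-definedness on elements shared by neighbouring bags of $T^*$ follows from the identity $\Phi_t(d) = \Phi_{t'}(d)$ for $d \in \lambda_c(t) \cap \lambda_c(t')$ with $t'$ the parent of $t$ (from the construction of $q_c$ in Lemma~\ref{lemma:tedious}), since parent/child pairs in $T^*$ are mapped to parent/child pairs in $T_c$ along our descent. Atom preservation is where the refinement of the decomposition pays off: for any atom $A^* = R(\bar z^*)$ of $q^*$ with witness node $v_A^*$ whose cover contains $A^*$, the vector $\rho(\bar z^*)$ lies inside $S_{v_A^*} = \lambda_c(t_{v_A^*})$, so $R(\Phi_{t_{v_A^*}}(\rho(\bar z^*))) = R(\hat\rho(\bar z^*))$ is one of the atoms generated in $q_c$. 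Combining both containments yields $q^* \equiv q_c$, and translating back via Theorem~\ref{theo:games-overapp} and Lemma~\ref{lemma:tedious} gives the equivalence $(q,\bar x)\to_k (\D,\bar a) \Leftrightarrow (q,\bar x)\to_k^c (\D,\bar a)$ for this $c$.
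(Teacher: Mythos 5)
Your proposal is correct. The ($\Leftarrow$) direction is essentially the paper's: both identify $q_c$ from Lemma~\ref{lemma:tedious} as the $\HW(k)$-overapproximation via Theorem~\ref{theo:char} (your extra remark that $(q_c,\bar x_c)\to(q,\bar x)$ via the de-renaming maps is fine, and also follows from the boundedness equivalence itself). Your ($\Rightarrow$) direction, however, takes a genuinely different route. The paper argues through the infinite setting: it extends Theorem~\ref{theo:games-overapp} to countably infinite databases using Proposition~\ref{prop:games-tw-infty}, invokes the infinite overapproximation $q'$ of Lemma~\ref{lemma:games-infty}, and composes $q'\to q^*\to q'$ to conclude by a compactness-style argument that the image is finite and hence lands in some $q_c$. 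You instead construct the homomorphism $(q^*,\bar x^*)\to(q_c,\bar x_c)$ directly, by unfolding a width-$k$ tree decomposition of $q^*$ into the Spoiler-move tree $T_c$ along the homomorphism $\rho:q^*\to q$, taking $c$ larger than the decomposition's depth; well-definedness and atom preservation indeed reduce to the parent--child compatibility $\Phi_t(d)=\Phi_{t'}(d)$ stated in the proof of Lemma~\ref{lemma:tedious}, since adjacent bags of $T^*$ are sent to adjacent nodes of $T_c$ and shared variables of adjacent bags have $\rho$-images in both corresponding $k$-unions. This buys two things: the argument stays entirely within finite CQs and finite databases (no $\HW(k)^\infty$, no infinite-database version of Proposition~\ref{prop:games-tw}), and it gives an explicit bound on $c$ --- roughly the depth of a width-$k$ decomposition of $q^*$ plus two --- which the paper's existence argument does not make explicit; the paper's route, in exchange, reuses machinery it needs anyway for its $\HW(k)^\infty$ results. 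When writing your argument up, spell out two small points: each new leaf must be attached below a bag containing the existential variables of its atom (such a bag exists by condition (1) of tree decompositions), so that connectivity --- and hence well-definedness of $\hat\rho$ --- is preserved; and the degenerate case of an empty bag should be given an arbitrary $k$-union as its label.
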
 

\begin{proof} 
Assume first that there is an integer $c \geq 1$ such that  
$(q,\bar x) \to_{k} (\D,\bar a)$ iff $(q,\bar x) \to_{k}^c (\D,\bar a)$, for each database $\D$ and tuple $\bar a \in  \D$. 
Therefore, 
$(q,\bar x) \to_{k} (\D,\bar a)$ iff $(q_c,\bar x_c) \to (\D,\bar a)$,
for each database $\D$ and $\bar a \in  \D$, 
where $q_c(\bar x_c)$ is the CQ in $\HW(k)$ which is defined 
in Lemma \ref{lemma:tedious}. 
But then, since $q_c \in \HW(k)$, we obtain from Equation \eqref{eq:tw} 
that for each database $\D$ and tuple $\bar a \in  \D$: 
\begin{equation} \label{eq:bound} 
(q,\bar x) \to_{k} (\D,\bar a) \ \ \Longleftrightarrow \ \ (q_c,\bar x_c) \to_k (\D,\bar a).
\end{equation}  
We show next that $q_c$ is the $\HW(k)$-overapproximation of 
$q$. From Theorem \ref{theo:char}, we need to show that $(q_c,\bar x_c) \to_k (q,\bar x)$ and $(q,\bar x) \to_k (q_c,\bar x_c)$.  
Both conditions follow from Equation \eqref{eq:bound}: the former by choosing $(\D,\bar a)$ to be $(q,\bar x)$, 
and the latter by choosing $(\D,\bar a)$ to be $(q_c,\bar x_c)$. 
%
 
Assume on the other hand that $q(\bar x)$ has a $\HW(k)$-overapproximation $q^*(\bar x^*)$.
From Theorem \ref{theo:games-overapp} we obtain that for every (finite) 
database $\D$ and tuple $\bar a$ of elements in $\D$: 
 $$(q^*,\bar x^*) \to (\D,\bar a) \ \ \Longleftrightarrow \ \ (q,\bar x) \to_{k} (\D,\bar a).$$ 
As a first step we prove that this continues to hold for countably infinite databases.
We do so by refining the proof of Theorem \ref{theo:games-overapp}. 

Let $\D$ be a countably infinite database. Assume first that $(q,\bar x) \rightarrow_{k} (\D,\bar a)$. 
Due to the fact that $q^*$ is a $\HW(k)$-overapproximation of $q$, we have that 
$(q^*,\bar x^*) \to (q,\bar x)$. 
Proposition \ref{prop:games-tw-infty} then implies 
that $(q^*,\bar x^*) \to (\D,\bar a)$ since $q^*$ is in $\HW(k)$.   
Assume, on the other hand, that $(q^*,\bar x^*) \to (\D,\bar a)$. 
Suppose, for the sake of contradiction, that 
$(q,\bar x) \not\to_{k} (\D,\bar a)$. Proposition \ref{prop:games-tw-infty} then establishes that 
there is a CQ $q^{\#}(\bar x^{\#})$ in 
$\HW(k)^\infty$ such that 
\begin{equation} 
\label{eq:forcont} 
\text{$(q^{\#},\bar x^{\#}) \to (q,\bar x)$, \ but $ \ (q^{\#},\bar x^{\#}) \not\to (\D,\bar a)$.}
\end{equation} 
The first fact in Equation \eqref{eq:forcont} implies that $q \subseteq q^{\#}$, as this direction of Equation \ref{eq:cont} 
continues being true as long as the CQ $q$ in the left-hand side of the containment relation is finite. Therefore,  
$(q^{\#},\bar x^{\#}) \to (q^*,\bar x^*)$ 
since Corollary \ref{coro:over-unique} continues being true for CQs in $\HW(k)^\infty$. 
We conclude that $(q^{\#},\bar x^{\#}) \to (\D,\bar a)$ since $(q^*,\bar x^*) \to (\D,\bar a)$. 
This contradicts the second fact in Equation \eqref{eq:forcont}. 

Therefore, if $q'(\bar x')$ is the CQ given by Lemma \ref{lemma:games-infty} for CQ $q(\bar x)$, 
then for every countable database $\D$ and tuple $\bar a$ in $\D$ it is the case that 
$$(q',\bar x') \to (\D,\bar a) \ \ \Longleftrightarrow \ \ (q,\bar x) \to_{k} (\D,\bar a) \ \ \Longleftrightarrow \ \ (q^*,\bar x^*) \to (\D,\bar a).$$
In particular, $(q',\bar x') \to (q^*,\bar x^*)$ 
and $(q^*,\bar x^*) \to (q',\bar x')$ by choosing $(\D,\bar a)$ as $(q^*,\bar x^*)$ and $(q',\bar x')$, respectively, 
in the previous equation.  
By composing $(q',\bar x') \to (q^*,\bar x^*)$ 
with $(q^*,\bar x^*) \to (q',\bar x')$, we 
obtain that $(q',\bar x') \to (q'_{\rm fin},\bar x')$, 
where $q'_{\rm fin}$ is a CQ with finitely many atoms, all of which belong to $q'$.  

Recall from the proof of Lemma \ref{lemma:games-infty} that the atoms of the CQ 
$q'$ are obtained as the union of the $Q_c$s, for $c \geq 0$, where $Q_c$ is 
the set of atoms in the CQ $q_c$, i.e., 
the one that describes the first $c$ rounds of the existential $k$-cover game played from $q$.   
Therefore, there must be an integer $c \geq 0$ such that $(q',\bar x') \to (q_c,\bar x_c)$.
For the same reason, it holds that 
$(q'_c,\bar x_c) \to (q',\bar x')$. We conclude that for every database $\D$ and tuple $\bar a$ of elements 
in $\D$: 
$$(q',\bar x') \to (\D,\bar a) \ \ \Longleftrightarrow \ \ (q_c,\bar x_c) \to (\D,\bar a).$$ 
In other words,  over every database $\D$ and tuple $\bar a$ of elements in $\D$,  it is the case that
$$(q,\bar x) \to_{k} (\D,\bar a) \ \ \Longleftrightarrow \ \ (q,\bar x) \to_{k}^c (\D,\bar a).$$
This finishes the proof of the theorem. 
\qed
\end{proof} 

Boundedness conditions 
are a difficult area of study, with a delicate decidability boundary; e.g., boundedness 
is decidable for Datalog programs if all intensional predicates are monadic
\cite{CGKV88}, but  undecidable if binary intensional predicates are allowed \cite{GMSV93}. 
For {\em least fixed point logic} (LFP), undecidability results for boundedness abound with the exception of a few 
restricted fragments \cite{Otto2006,BOW14}. 
Although the existence of winning Duplicator 
strategies in existential pebble games is expressible in LFP \cite{KV-jcss}, 
no result obtained in such context seems to be directly applicable to 
determining the decidability status of the boundedness condition in Theorem \ref{theo:boundedness}. 

\section{Overapproximations under Constraints}
\label{constraints} 

It has been observed that 
semantic information about the data -- in particular, in the form of constraints --  
enriches the quality of approximations \cite{BGP16}. This is exemplified next. 

\begin{example} \label{ex:constraints} 
As mentioned before, the Boolean CQ 
$$q \ = \ \exists x \exists y \exists z \big(E(x,y) \wedge E(y,z) \wedge E(z,x)\big),$$ 
shown in Figure ~\ref{fig:fig2}, has no $\HW(1)$-overapproximation. 
On the other hand, if we know that the data satisfies the constraint 
$$\forall x \forall y \forall z \big(E(x,y) \wedge E(y,z) \, \rightarrow \, E(z,x)\big),$$ 
then $q$ becomes equivalent to the CQ 
$q' = \exists x \exists y \exists  z (E(x,y) \wedge E(y,z))$, which is in $\HW(1)$.  
\qed
\end{example} 

In this section we study the notion of $\HW(k)$-overapproximation under constraints. 
We consider the two most important classes of database constraints; namely:

\begin{enumerate}
\item {\em Tuple-generating dependencies} (tgds), i.e., expressions of
  the form \begin{equation} 
  \label{eq:tgd} \forall \bar x \forall \bar y \, \big(\phi(\bar x,\bar y)
  \, \rightarrow \, \exists \bar z  \psi(\bar x,\bar z)\big),\end{equation}  
  where $\phi$ and
  $\psi$ are conjunctions of atoms. 
  Notice that the constraint in Example \ref{ex:constraints} is a tgd.

  Tgds subsume the central class of {\em inclusion dependencies} (IDs) \cite{Fagin81}. 
  For example, assuming that $R$ and $P$ are binary relations, the ID $R[1] \subseteq P[2]$, which states that the set of values occurring in the first attribute of $R$ is a subset of the set of values in the second attribute of $P$, is expressed via the tgd $\forall x \forall y (R(x,y) \rightarrow \exists z\, P(z,x))$. 
  
  There is a particular subclass of tgds that is expressive enough to subsume IDs and has received considerable attention in the literature; 
  namely, the class of {\em guarded} tgds \cite{CaGK08a}. 
A tgd is {\em guarded} if its
body $\phi(\bar x,\bar y)$ contains an atom, called the {\em guard},
that mentions all the variables in $(\bar x \cup \bar y)$. Notice that tgds that represent IDs are trivially guarded, as their body consists of a single atom.

\item {\em Equality-generating dependencies} (egds), i.e., expressions of
  the form 
  \begin{equation} \label{eq:egd} 
  \forall \bar x \, \big(\phi(\bar x) \, \rightarrow \, y = z \big),
  \end{equation}  
  where
  $\phi$ is a CQ and $y,z$ are variables in $\bar
  x$. Egds subsume the important classes of {\em keys} and {\em functional dependencies} (FDs). 
  
  For example, assuming that $R$ is a ternary relation, the FD $R : \{1\} \to \{3\}$, 
  i.e., the first attribute of $R$ functionally determines the third attribute of $R$, is expressed via the egd 
  $$\forall x \forall y \forall z \forall y' \forall z' \, (R(x,y,z) \wedge R(x,y',z') \, \rightarrow \, z = z').$$ 
  Notice that FDs that have more than one attribute in the right-hand side, can be expressed via a {\em set} of egds.

%
\end{enumerate}



A (potentially infinite) 
database $\D$ {\em satisfies} a tgd of the form given in Equation \eqref{eq:tgd}
if the following holds: for each tuple $(\bar a,\bar b)$ of elements such that $(\phi, \bar x,\bar y)\to(\D,\bar a,\bar b)$, there is a tuple $\bar c$ of elements
such that $(\psi,\bar x,\bar z)\to (\D,\bar a,\bar c)$.
Analogously, $\D$ satisfies an egd as in Equation \eqref{eq:egd} if, for each tuple $\bar a$ of elements such that $(\phi,\bar x)\to (\D,\bar a)$ via a homomorphism $h$, 
it is the case that $h(y)=h(z)$. Finally, $\D$ satisfies a set $\Sigma$ of constraints if it satisfies every tgd and egd in $\Sigma$.

\paragraph{{\em {\bf CQ containment under constraints.}}}
The right notion of containment, under constraints, is measured over those databases that satisfy the constraints only (as we know that our datasets satisfy such constraints). 
Formally, let $q,q'$ be CQs and $\Sigma$ a set of constraints. Then {\em $q$ is contained in $q'$ under $\Sigma$}, denoted $q \subseteq_\Sigma q'$, if and only if $q(\D) \subseteq q'(\D)$ for each database $\D$ that satisfies $\Sigma$. 
It is worth remarking that, as before, containment is defined over finite databases only.  
The notion of equivalence is defined analogously, and we write $q \equiv_{\Sigma} q'$.

The {\em chase procedure} is a canonical tool for reasoning about CQ 
containment under constraints~\cite{MaMS79}.  
We start by defining a single chase step for tgds.
Let $q$ be a CQ and $\tau$ a tgd of the form $\forall \bar x \forall \bar y (\phi(\bar x,\bar y) \rightarrow \exists \bar z \, \psi(\bar x,\bar z))$. 
We say that $\tau$ is \emph{applicable} with respect to $q$, if there exists a tuple $(\bar a,\bar b)$ of elements in $q$ such that 
$(\phi, \bar x,\bar y) \to (q,\bar a,\bar b)$. In this case, {\em the result of applying $\tau$ over $q$ with $(\bar a,\bar b)$} is the CQ $q'$ 
that extends $q$ with every atom in $\psi(\bar a,\bar c)$, where $\bar c$ is the tuple obtained by simultaneously replacing each  
$z \in \bar z$ with a fresh element not occurring in $q$. For such a single chase step we write
$q \xrightarrow{\tau,(\bar a,\bar b)} q'$. 

Let us now assume that $q$ is a CQ and $\Sigma$ a set of tgds. A {\em chase sequence for $q$ under $\Sigma$} is a sequence
\[
q_0 \xrightarrow{\tau_0,(\bar a_0,\bar b_0)} q_1 \xrightarrow{\tau_1,(\bar a_1,\bar b_1)} q_2 \dots
\]
of chase steps such that:
\begin{enumerate}
\item $q_0 = q$.

\item For each $i \geq 0$ we have that
$\tau_i$ is a tgd in $\Sigma$.

\item $q' \models \Sigma$, where $q'$ is the CQ formed by the union of the atoms in the $q_i$s, for $i \geq 0$.
\end{enumerate}
The (potentially infinite) CQ $q'$ is the {\em result} of this chase sequence, which always exists.

Although the result of a chase sequence is not unique (up to isomorphism), each such result is equally useful for our purposes since it can be homomorphically embedded into every other result.
%
%
This is a consequence of the fact that the result $q'$ of a chase sequence for $q$ under $\Sigma$ is {\em universal}, i.e., 
for every (potentially infinite) CQ $q''$ such that $q \subseteq q''$ and $q'' \models \Sigma$, there is a homomorphism from $q'$ to $q''$~\cite{FKMP05,DeNR08}.
%
Henceforth, we write $chase_\Sigma(q)$ for the result of an arbitrary chase sequence for $q$ under $\Sigma$. 

As for tgds, the chase is a useful tool when reasoning with egds. Let us first define a single  chase step for egds.
Consider a CQ $q$ and an egd $\epsilon$ of the form
$\forall \bar x (\phi(\bar x) \rightarrow x_i = x_j)$.
We say that $\epsilon$ is \emph{applicable} with respect to $q$, if there exists a homomorphism $h$ that witnesses $\phi \to q$ 
for which it holds that $h(x_i) \neq h(x_j)$. In this case, {\em the result of applying $\epsilon$ over $q$ with $h$} is the CQ $q'$ that obtained from $q$ by identifying 
$h(x_i)$ and $h(x_j)$ everywhere.
We can define the notion of the chase sequence for a CQ $q$ under a set $\Sigma$ of egds analogously as we did for tgds. 
Notice that such a sequence is finite and unique; thus, we refer to {\em the} chase for $q$ under $\Sigma$, denoted $chase_\Sigma(q)$.
Observe that the chase sequence that leads to $chase_\Sigma(q)$ gives rise to a homomorphism $h_{q,\Sigma} : q \to chase_\Sigma(q)$ 
such that $h_{q,\Sigma}(q) = chase_\Sigma(q)$.
 It is well-known that an extended notion of containment under constraints -- which is defined over 
both finite and infinite databases -- can be characterized in terms of the notion of homomorphism and the chase procedure. Formally, let $q,q'$ be CQs and $\Sigma$ a set of constraints. We write 
$q \subseteq^\infty_\Sigma q'$ iff $q(\D) \subseteq q'(\D)$ for each countable 
database $\D$ that satisfies $\Sigma$. T


\begin{lemma}\label{lemma:cq-equiv-tgds}
Let $q(\bar x), q'(\bar x')$ be CQs. 
\begin{enumerate} 
\item If $\Sigma$ is a set of tgds, then 
\begin{equation*}
\label{eq:chase-tgds}
q \subseteq^\infty_\Sigma q' \, \, \Longleftrightarrow \, \, (q',\bar x') \to (chase_\Sigma(q),\bar x). 
\end{equation*}
\item If $\Sigma$ is a set of egds, then 
\begin{equation*}
\label{eq:chase-egds}
q \subseteq^\infty_\Sigma q' \, \, \Longleftrightarrow \, \, 
(q',\bar x') \to (chase_\Sigma(q),h_{q,\Sigma}(\bar x)). 
\end{equation*}
\end{enumerate} 
\end{lemma}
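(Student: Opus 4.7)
The plan is to prove both items in parallel by adapting the standard chase-based characterization of CQ containment (cf.\ \cite{FKMP05,DeNR08}) to the infinite-database semantics of $\subseteq^\infty_\Sigma$. The engine driving each direction will be the universal property of the chase, which has already been recalled in the excerpt.

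For part (1), I would first handle the forward direction by viewing $chase_\Sigma(q)$ itself as a (countable) database, which satisfies $\Sigma$ by construction. The identity-like embedding of $q$ into $chase_\Sigma(q)$ witnesses $\bar x \in q(chase_\Sigma(q))$, so applying $q \subseteq^\infty_\Sigma q'$ at this database and tuple yields $\bar x \in q'(chase_\Sigma(q))$, which is precisely the desired homomorphism $(q',\bar x') \to (chase_\Sigma(q),\bar x)$. For the backward direction, I would fix an arbitrary countable $\D \models \Sigma$ and $\bar a \in q(\D)$, pick a witnessing $h:(q,\bar x)\to(\D,\bar a)$, and invoke the universality of the tgd chase to extend $h$ to a homomorphism $\hat h:chase_\Sigma(q)\to \D$ with $\hat h(\bar x)=\bar a$. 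Composing $\hat h$ with the homomorphism $g:(q',\bar x')\to(chase_\Sigma(q),\bar x)$ assumed on the right-hand side then gives $\hat h\circ g:(q',\bar x')\to (\D,\bar a)$, so $\bar a \in q'(\D)$, establishing $q(\D)\subseteq q'(\D)$.

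For part (2), the argument is analogous but simpler, since chasing with egds only identifies variables and therefore terminates in a finite database; no appeal to infinite models is needed in the forward direction. That direction again instantiates $q \subseteq^\infty_\Sigma q'$ at $\D := chase_\Sigma(q)$, this time at the tuple $h_{q,\Sigma}(\bar x)$: the chase homomorphism $h_{q,\Sigma}$ itself witnesses $h_{q,\Sigma}(\bar x)\in q(chase_\Sigma(q))$, and $chase_\Sigma(q)\models \Sigma$ holds by termination. The backward direction will rest on the universal property specific to egds, namely that every homomorphism $h:(q,\bar x)\to (\D,\bar a)$ with $\D\models \Sigma$ factors as $h = \hat h \circ h_{q,\Sigma}$ for a unique $\hat h$. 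I would obtain this factorization by inducting along the chase sequence: at every step where the chase identifies two variables $y,z$ because some egd body has been matched, the hypothesis $\D\models\Sigma$ forces $h(y)=h(z)$, so the quotient effected by $h_{q,\Sigma}$ is coarser than the one induced by $h$, which is exactly the condition for $\hat h$ to be well-defined. Composing $\hat h$ with the homomorphism assumed on the right-hand side then delivers $\bar a\in q'(\D)$.

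I do not anticipate any serious obstacle, since both universal properties of the chase are folklore and appear in the citations already used in the preceding paragraph of the paper. The only spot requiring mild care is the backward direction of part (1): one must check that the extension $\hat h$ supplied by universality actually agrees with $h$ on $\bar x$, but this is automatic because $\bar x$ consists of elements already present in the initial stage $q_0 = q$ of the chase sequence and the universal extension necessarily extends $h$ on $q_0$. Everything else is routine diagram-chasing that I would relegate to a short paragraph per direction.
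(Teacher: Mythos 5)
Your proposal is correct, and it is essentially the argument the paper has in mind: the paper states this lemma without proof, treating it as the well-known chase characterization of containment (citing the universality of the chase from \cite{MaMS79,FKMP05,DeNR08}), and your two directions -- instantiating $\subseteq^\infty_\Sigma$ at the chase itself, and extending/factoring an arbitrary witnessing homomorphism through $chase_\Sigma(q)$ via universality (for tgds) or via the step-by-step compatibility of egd identifications with any homomorphism into a model of $\Sigma$ -- are exactly that standard argument, including the needed observation that the extension agrees with the original map on $\bar x$ (resp.\ on $h_{q,\Sigma}(\bar x)$).
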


The sets $\Sigma$ of constraints for which the notions $\subseteq_\Sigma$ and $\subseteq_\Sigma^\infty$
coincide are called {\em finitely controllable}. It is easy to see that any set $\Sigma$ 
of egds is finitely controllable, as $chase_\Sigma(q)$ is always finite in such a case. On the other hand, 
arbitrary sets of tgds are not necessarily finitely controllable. An important exception corresponds to the case when $\Sigma$ is a set of guarded tgds. In fact, a deep result shows that sets of guarded tgds are finitely controllable, i.e., for any CQs $q,q'$ and set $\Sigma$ of guarded tgds, it holds that $q \subseteq_\Sigma q' \, \Leftrightarrow \, q \subseteq_\Sigma^\infty q'$ \cite{BGO13}.

\subsection{$\HW(k)$-overapproximations under constraints}


In the absence of constraints, $\HW(k)$-overapproximations may not exist. 
However, as discussed in Section \ref{overapproximations}, by considering the class $\HW(k)^\infty$ of CQs with countable many atoms and generalized hypertreewidth bounded by $k$, we can provide 
each CQ with a $\HW(k)^\infty$-overapproximation. We show that this good behavior generalizes to the 
case when constraints expressed as egds or guarded tgds are present. 

Formally, given CQs $q$ and $q'$ such that $q' \in \HW(k)^\infty$, 
we say that $q'$ is a {\em $\HW(k)^\infty$-overapproximation of $q$ under $\Sigma$}, if (i) $q\subseteq_\Sigma q'$, 
and (ii) there is no $q''\in \HW(k)^\infty$ 
such that $q\subseteq_\Sigma q''\subsetneq_\Sigma q'$.

\begin{theorem}
\label{theo:cons-exist}
Fix $k\geq 1$. For every CQ $q(\bar x)$ and set $\Sigma$ consisting 
exclusively of egds or guarded tgds, there is a CQ $q'(\bar x') \in \HW(k)^\infty$ that is a 
$\HW(k)^\infty$-overapproximation of $q$ under $\Sigma$.  
\end{theorem}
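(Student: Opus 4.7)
The plan is to construct, for any CQ $q(\bar x)$ and set $\Sigma$ as in the statement, a CQ $q^*(\bar x^*) \in \HW(k)^\infty$ that is the \emph{minimum}, with respect to $\subseteq_\Sigma$, among the CQs in $\HW(k)^\infty$ that contain $q$ under $\Sigma$; any such minimum is in particular a $\HW(k)^\infty$-overapproximation of $q$ under $\Sigma$. The construction will parallel Theorem~\ref{theo:overapp-infty}, but with $q$ replaced by the chase $C := chase_\Sigma(q)$, and with distinguished tuple $\bar x_C := h_{q,\Sigma}(\bar x)$ in the egd case and $\bar x_C := \bar x$ in the guarded-tgd case, so that in both cases $(C,\bar x_C)$ is a countable pointed database with $C \models \Sigma$ and $(q,\bar x) \to (C,\bar x_C)$. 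I will then take $q^*(\bar x^*) \in \HW(k)^\infty$ such that, for every countable $\D$ and tuple $\bar a$,
\[
(q^*,\bar x^*) \to (\D,\bar a) \,\,\Longleftrightarrow\,\, (C,\bar x_C) \to_k (\D,\bar a).
\]
For egds $C$ is finite and this is a direct instance of Lemma~\ref{lemma:games-infty}; for guarded tgds $C$ is only countable, but the tree-of-$k$-unions construction from the proofs of Lemmas~\ref{lemma:tedious} and \ref{lemma:games-infty} carries over to a countable source essentially verbatim.

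The containment $q \subseteq_\Sigma q^*$ is then easy to verify: if $\D \models \Sigma$ is finite and $\bar a \in q(\D)$, universality of the chase (Lemma~\ref{lemma:cq-equiv-tgds}, combined with finite controllability of guarded tgds~\cite{BGO13} in that case) yields $(C,\bar x_C) \to (\D,\bar a)$, hence $(C,\bar x_C) \to_k (\D,\bar a)$, hence $\bar a \in q^*(\D)$. For minimality, I would fix $q''(\bar x'') \in \HW(k)^\infty$ with $q \subseteq_\Sigma q''$ and aim to prove $q^* \subseteq_\Sigma q''$. The key intermediate step is to establish $(q'',\bar x'') \to (C,\bar x_C)$. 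When $C$ is finite (egd case) this is immediate, since $C \models \Sigma$ and $\bar x_C \in q(C) \subseteq q''(C)$. When $C$ is countable (guarded-tgd case) I plan to argue as follows: each finite sub-CQ $q''_S$ of $q''$ is a genuine CQ and still satisfies $q \subseteq_\Sigma q''_S$; finite controllability then upgrades this to $q \subseteq^\infty_\Sigma q''_S$, so $\bar x_C \in q''_S(C)$ for every such $S$, and compactness of propositional logic glues the resulting partial homomorphisms into a global homomorphism witnessing $(q'',\bar x'') \to (C,\bar x_C)$. Once $(q'',\bar x'') \to_k (C,\bar x_C)$ is in hand, for any finite $\D \models \Sigma$ and $\bar a \in q^*(\D)$ the defining property of $q^*$ gives $(C,\bar x_C) \to_k (\D,\bar a)$, composition of winning strategies yields $(q'',\bar x'') \to_k (\D,\bar a)$, and Proposition~\ref{prop:games-tw-infty} applied with $q''$ itself playing the role of the source $\HW(k)^\infty$ CQ finally delivers $(q'',\bar x'') \to (\D,\bar a)$, i.e. $\bar a \in q''(\D)$.

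The hard part will be the guarded-tgd case of the minimality step: neither Lemma~\ref{lemma:cq-equiv-tgds} nor finite controllability is stated in the excerpt for infinite CQs in $\HW(k)^\infty$, so the implication ``$q \subseteq_\Sigma q'' \Rightarrow (q'',\bar x'') \to (C,\bar x_C)$'' has to be lifted by hand from finite sub-CQs of $q''$ via the compactness/gluing argument sketched above. All other ingredients -- the extension of Lemma~\ref{lemma:games-infty} to a countable source, the chase-based characterization of $\subseteq_\Sigma$, and the game composition step -- are either already in the excerpt or are mild adaptations, and uniqueness of $q^*$ up to $\equiv_\Sigma$ will follow automatically from its minimality.
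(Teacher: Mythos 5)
Your overall architecture is the same as the paper's: build $q^*$ from the existential $k$-cover game played on $chase_\Sigma(q)$ (extending Lemma~\ref{lemma:games-infty} to a countable source), get $q \subseteq_\Sigma q^*$ via universality of the chase, and get minimality by turning $q \subseteq_\Sigma q''$ into a homomorphism from $q''$ into the chase and then composing games via Proposition~\ref{prop:games-tw-infty}. The egd case of your argument is fine. The problem is exactly at the step you flag as hard, and your proposed fix does not work: from ``every finite sub-CQ $q''_S$ of $q''$ satisfies $\bar x_C \in q''_S(C)$'' you cannot conclude $(q'',\bar x'') \to (C,\bar x_C)$ by ``compactness of propositional logic.'' Propositional compactness requires finite clauses, but the clause ``variable $v$ of $q''$ has some image in $C$'' is an infinite disjunction when $C = chase_\Sigma(q)$ is infinite (which is the only case where you invoke this); first-order compactness would only yield a homomorphism into \emph{some} model, not into $C$ itself. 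Moreover the implication is simply false for general countable targets: let $C$ be the disjoint union of all finite directed paths and $q''$ the one-way infinite directed path (a member of $\HW(1)^\infty$); every finite subquery of $q''$ maps into $C$, yet $q''$ does not map into $C$ (and Duplicator does not even win the $1$-cover game from $q''$ to $C$, so you also cannot salvage the argument by aiming only for $(q'',\bar x'') \to_k (C,\bar x_C)$). So the finite-to-infinite gluing needs a genuinely different justification, e.g.\ one exploiting structural properties of the guarded chase, not bare compactness.

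For comparison, the paper does not glue at all: it applies finite controllability directly at the level of the (possibly infinite) $q''$, arguing that $q \subseteq_\Sigma q''$ implies $q \subseteq^\infty_\Sigma q''$, and then evaluates this containment on the single countable database $chase_\Sigma(q)$, which satisfies $\Sigma$ and has $\bar x \in q(chase_\Sigma(q))$, to obtain $\bar x \in q''(chase_\Sigma(q))$, i.e.\ the homomorphism $(q'',\bar x'') \to (chase_\Sigma(q),\bar x)$ in one stroke. (You are right that \cite{BGO13} is stated for finite CQs, and the paper takes this extension to infinite right-hand queries for granted; but your attempt to derive it from the finite statement via compactness is the step that fails, so as written your proof of the guarded-tgd case is incomplete.)
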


\begin{proof} We only consider the case when $\Sigma$ is a set of guarded 
tgds, as for egds the proof is even simpler. 
As in Lemma \ref{lemma:games-infty}, we can show 
that there is a CQ $q'(\bar x')$ such that for every countable database $\D$ and tuple
 $\bar a$ in $\D$:  
\begin{equation} 
\label{eq:inf}
\bar a \in q(\D) \ \ \Longleftrightarrow \ \ (q',\bar x') \to (\D,\bar a) \ \ \Longleftrightarrow \ \ 
(chase_\Sigma(q),\bar x) \to_k (\D,\bar a).
\end{equation}  
Notice that $q'$ is not guaranteed to exist a priori, since Lemma \ref{lemma:games-infty} 
is stated only for a finite CQ $q$, while $chase_\Sigma(q)$ 
may be infinite. However, the same proof of Lemma \ref{lemma:games-infty} 
applies to the infinite case. 
We show next that $q'$ is the $\HW(k)^\infty$-overapproximation of $q$ under $\Sigma$. 

First, $(q',\bar x') \to (chase_\Sigma(q),\bar x)$ by choosing $(\D, \bar a)$ as 
$(chase_\Sigma(q), \bar x)$ in Equation \eqref{eq:inf}, and hence  
$q \subseteq^\infty_\Sigma q'$ as this direction of 
Lemma \ref{lemma:cq-equiv-tgds} holds even for CQs with countable many atoms. 
Therefore, also $q \subseteq_\Sigma q'$. 
On the other hand, $(chase_\Sigma(q),\bar x) \to_{k} (q',\bar x')$ 
by choosing $(\D, \bar a)$ as $(q', \bar x')$ in Equation \eqref{eq:inf}. 
Proposition \ref{prop:games-tw-infty} then tells us that for each $q''(\bar x'')$ in $\HW(k)^\infty$, 
if $(q'',\bar x'') \to (chase_\Sigma(q),\bar x)$ then $(q'',\bar x'') \to (q',\bar x')$.
But then 
$q \subseteq_\Sigma q''$ implies $q' \subseteq q''$. 
This is because $q \subseteq_\Sigma q''$ implies $(q'',\bar x'') \to (chase_\Sigma(q),\bar x)$ by finite controllability of $\Sigma$. Indeed, in this case $q \subseteq_\Sigma q''$ implies 
$q \subseteq^\infty_\Sigma q''$, and thus $q(\D) \subseteq q''(\D)$ over every countable database 
$\D$ that satisfies $\Sigma$. Since $chase_\Sigma(q)$ is one such a database and $\bar x \in q(chase_\Sigma(q))$, we have that $\bar x \in q''(chase_\Sigma(q))$. By definition then,
$(q'',\bar x'') \to (chase_\Sigma(q),\bar x)$.  


In summary, $q \subseteq_\Sigma q'$ and for each CQ $q''$ in $\HW(k)^\infty$ we have that 
$$q \subseteq_\Sigma q'' \ \ \Longrightarrow \ \ q' \subseteq q''.$$    
This implies that $q'$ is the $\HW(k)^\infty$-overapproximation of $q$ under $\Sigma$. \qed
\end{proof} 

The only property of the sets of guarded tgds that we used in the previous proof is finite controllability. 
Since finite controllability does not hold for general tgds, 
we cannot extend Theorem \ref{theo:cons-exist} to arbitrary sets of constraints. 
On the other hand, if we change the notion of $\HW(k)^\infty$-overapproximation under $\Sigma$ to be defined 
in terms of $\subseteq_\Sigma^\infty$, we can mimic the proof of 
Theorem \ref{theo:cons-exist} and provide each CQ with a $\HW(k)^\infty$-overapproximation under $\Sigma$, 
where $\Sigma$ is an arbitrary set of tgds. While considering countable databases 
is not standard in databases, it is a common choice for the semantics of 
containment and related problems 
in the area of ontology-mediated query answering; cf., \cite{CaGK08a,BGO13}.  

As a corollary to the proof of Theorem \ref{theo:cons-exist} we obtain that the $\HW(k)^\infty$-overapproximation of a CQ 
under $\Sigma$, 
where $\Sigma$ is a set of egds or guarded tgds, can be evaluated 
by applying the existential $k$-cover game on $chase_\Sigma(q)$. 

\begin{corollary} 
\label{coro:chase-inf-eval} 
Fix $k\geq 1$. Consider a CQ $q(\bar x)$ and a set $\Sigma$ consisting 
exclusively of egds or guarded tgds, such that the 
$\HW(k)^\infty$-overapproximation of $q$ under $\Sigma$ 
is $q'(\bar x')$. Then for every database 
$\D$ and tuple $\bar a$ in $\D$ it is the case that 
$$\bar a \in q'(\D) \ \ \Longleftrightarrow \ \ (q',\bar x') \to (\D,\bar a) \ \ 
\Longleftrightarrow \ \ (chase_\Sigma(q),\bar x) \to_k (\D,\bar a).$$  
\end{corollary}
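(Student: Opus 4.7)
The plan is to read the corollary off directly from equation~\eqref{eq:inf} inside the proof of Theorem~\ref{theo:cons-exist}. That proof constructs an explicit $\HW(k)^\infty$-overapproximation $q'(\bar x')$ of $q$ under $\Sigma$ by applying the infinite-source variant of Lemma~\ref{lemma:games-infty} to $chase_\Sigma(q)$, and as a byproduct establishes the three-way equivalence $\bar a \in q'(\D) \Leftrightarrow (q',\bar x') \to (\D,\bar a) \Leftrightarrow (chase_\Sigma(q),\bar x) \to_k (\D,\bar a)$ for every countable database $\D$ and tuple $\bar a$ in $\D$. Since finite databases are in particular countable, specializing to finite $\D$ yields exactly the statement of the corollary for this particular $q'$.

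To match the phrasing of the corollary, which speaks of \emph{the} $\HW(k)^\infty$-overapproximation, I would next invoke uniqueness. The disjoint-conjunction construction from Section~\ref{sec:uniqueness} carries over to CQs in $\HW(k)^\infty$ with containment replaced by $\subseteq_\Sigma$: if $q'_1$ and $q'_2$ are both $\HW(k)^\infty$-overapproximations of $q$ under $\Sigma$, then $q'_1 \wedge q'_2 \in \HW(k)^\infty$ satisfies $q \subseteq_\Sigma q'_1 \wedge q'_2 \subseteq_\Sigma q'_i$, forcing $q'_1 \equiv_\Sigma q'_2$ by minimality. Hence every $\HW(k)^\infty$-overapproximation of $q$ under $\Sigma$ computes the same answers on every finite database as the canonical $q'$ built in the proof of Theorem~\ref{theo:cons-exist}, and so inherits the stated equivalences.

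The middle equivalence $\bar a \in q'(\D) \Leftrightarrow (q',\bar x') \to (\D,\bar a)$ needs no argument beyond unfolding the homomorphism-based definition of CQ evaluation, which is unchanged for (potentially infinite) CQs with finitely many free variables. The substantive content is therefore the right-hand equivalence with the existential $k$-cover game on $chase_\Sigma(q)$, and that is precisely equation~\eqref{eq:inf}. I do not foresee any real obstacle: the corollary is essentially a bookkeeping consequence of the construction used to prove Theorem~\ref{theo:cons-exist}. Its value is algorithmic rather than mathematical, namely that the possibly infinite query $q'$ can be evaluated on a finite $\D$ by playing the existential $k$-cover game between $chase_\Sigma(q)$ and $\D$, without ever materializing $q'$ itself.
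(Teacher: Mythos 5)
Your first paragraph is exactly the paper's argument: the corollary is read off Equation \eqref{eq:inf} in the proof of Theorem \ref{theo:cons-exist}, which already gives the three-way equivalence for every countable (hence every finite) database, and the middle equivalence is just the definition of evaluation for (possibly infinite) CQs. So the substance of your proposal is correct and follows the same route as the paper.

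One caveat about your second paragraph: the disjoint-conjunction argument only yields $q'_1 \equiv_\Sigma q'_2$, i.e.\ agreement on databases that \emph{satisfy} $\Sigma$, whereas the corollary quantifies over all databases $\D$. Two $\Sigma$-equivalent queries in $\HW(k)^\infty$ can disagree on a $\Sigma$-violating database, so the claim that \emph{every} $\HW(k)^\infty$-overapproximation of $q$ under $\Sigma$ ``computes the same answers on every finite database as the canonical $q'$'' is too strong, and a non-canonical representative of the $\equiv_\Sigma$-class could indeed falsify the right-hand equivalence on some $\D \not\models \Sigma$. This does not damage the corollary under its intended reading: the $q'$ in question is the query built in the proof of Theorem \ref{theo:cons-exist}, which moreover satisfies the stronger lower-bound property $q \subseteq_\Sigma q'' \Rightarrow q' \subseteq q''$ (plain containment) and is therefore determined up to plain equivalence; the stated equivalences are attributes of that canonical $q'$, and your first paragraph already establishes them for it. So either drop the uniqueness step or phrase it with respect to $\subseteq$ rather than $\subseteq_\Sigma$.
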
 


\paragraph{{\em {\bf Evaluating overapproximations under constraints.}}}
While in the absence of constraints the $\HW(k)^\infty$-overapproximation of $q$ can be evaluted in 
polynomial time by applying the existential $k$-cover game on $q$,
 the situation is more complex in the presence 
of constraints. In fact, as stated in Corollary \ref{coro:chase-inf-eval} to evaluate 
the $\HW(k)^\infty$-overapproximation of $q$ under $\Sigma$ by using the existential $k$-cover game, we first
 need to compute the result of the chase on $q$. For arbitrary egds this might take exponential time, as 
checking whether $chase_\Sigma(q) = q$, when $\Sigma$ consists of a single egd $\epsilon$, is an \np-complete problem (since we need to detect whether the CQ that defines the body of $\epsilon$ is 
applicable on $q$). For sets of guarded tgds the result of the chase might be infinite, and thus it is not even computable. 

We show that, in spite of the previous observation, if 
$\Sigma$ is a set of guarded tgds then the 
$\HW(k)^\infty$-overapproximation of $q$ under $\Sigma$ can be evaluated in polynomial time. This is because when a database $\D$ satisfies $\Sigma$, 
applying the existential $k$-cover game on $q$ or $chase_\Sigma(q)$ is the same.    


\begin{theorem}
Fix $k\geq 1$. Given a CQ $q$, a set of $\Sigma$ of guarded tgds, a database $\D$ satisfying $\Sigma$, and a tuple $\bar a$, 
checking whether $\bar a\in q'(\D)$, where 
$q'$ is the $\HW(k)^\infty$-overapproximation of $q$ under $\Sigma$, can be solved in 
polynomial time by simply verifying if $(q,\bar x) \to_k (\D,\bar a)$.  
\end{theorem}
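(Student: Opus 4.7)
The plan is to reduce the theorem, via Corollary \ref{coro:chase-inf-eval}, to establishing the game-theoretic equivalence
\[
(chase_\Sigma(q),\bar x) \to_k (\D,\bar a) \;\Longleftrightarrow\; (q,\bar x) \to_k (\D,\bar a)
\]
under the hypothesis $\D \models \Sigma$. The polynomial time bound then follows immediately from Proposition \ref{prop:games-poly} applied to the right-hand side, which involves only the finite input $q$ and $\D$.

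The $(\Rightarrow)$ direction is routine: since every atom of $q$ is also an atom of $chase_\Sigma(q)$, every $k$-union of $q$ is a $k$-union of $chase_\Sigma(q)$, so any Duplicator winning strategy on the left-hand game restricts -- by considering only Spoiler's moves that stay inside $q$ -- to a winning strategy on the right-hand game.

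The main direction is $(\Leftarrow)$, and the idea is to lift a Duplicator winning strategy on $(q,\bar x) \to_k (\D,\bar a)$ to one on $(chase_\Sigma(q),\bar x) \to_k (\D,\bar a)$ by exploiting $\D \models \Sigma$. Using the compact-strategy characterization (see the proof of Proposition \ref{prop:tree-automata}), it suffices to exhibit a non-empty family $\cal F$ of partial homomorphisms from $chase_\Sigma(q)$ to $\D$, each with domain a $k$-union of $chase_\Sigma(q)$, such that any $f \in \cal F$ with domain $U$ extends to any other $k$-union $U'$ via some $f' \in \cal F$ agreeing with $f$ on $U \cap U'$. One starts from the compact strategy $\cal F_0$ witnessing $(q,\bar x) \to_k (\D,\bar a)$ and fills in responses for new $k$-unions $S$ touching chase-generated atoms as follows. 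Each such atom arose by applying some guarded tgd $\tau = \phi \to \exists \bar z\,\psi$ to earlier atoms; its guard atom, being a single atom, sits in an earlier chase stage as a $1$-union of $chase_\Sigma(q)$. By guardedness, a response to this $1$-union already fixes the image of every body variable of $\tau$, and the extension property of the compact strategy promotes it to a full homomorphism of the body $\phi$ into $\D$. Applying $\D \models \tau$ then supplies witnesses in $\D$ for the existentially quantified variables of $\psi$, from which $f$ can be defined on the chase atoms of $S$.

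The main obstacle is the mutual consistency required by the extension property across arbitrary pairs of $k$-unions of $chase_\Sigma(q)$, which may reference nulls introduced along entirely different chase branches. Guardedness is precisely what makes this tractable: the chase has a tree-like shape in which each null is owned by a unique ancestor guard atom, so witness choices in $\D$ can be propagated down independent branches without interference, and coherence needs to be enforced only along a single branch at a time. Since $chase_\Sigma(q)$ may be infinite, the construction of $\cal F$ is carried out lazily, committing to witnesses in $\D$ only for $k$-unions actually played; equivalently, one can invoke Lemma \ref{lemma:tedious} and finite controllability of guarded tgds to reduce to winning every $c$-round prefix of the game on $chase_\Sigma(q)$, which depends only on a finite chase approximation.
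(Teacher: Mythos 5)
Your reduction is exactly the paper's: by Corollary \ref{coro:chase-inf-eval}, everything hinges on the equivalence $(chase_\Sigma(q),\bar x) \to_k (\D,\bar a) \Leftrightarrow (q,\bar x) \to_k (\D,\bar a)$ for $\D \models \Sigma$, after which Proposition \ref{prop:games-poly} gives the polynomial bound. The difference is in how that equivalence is handled: the paper's proof consists of citing \cite{BGP16} for the case $k=1$ and remarking that a slight modification of that proof works for all $k$, whereas you sketch a self-contained argument, lifting a compact winning strategy through the guarded chase steps. Your sketch has the right ingredients and can be made rigorous, so it buys a proof that does not lean on an external reference; the paper buys brevity. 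Three small points if you flesh it out. First, it is not the extension property that turns the response on the guard's $1$-union into a homomorphism of the body: since the guard covers all body variables and the trigger maps every body atom to an atom of the current chase stage, the partial-homomorphism condition on that $1$-union already preserves all those atoms, so $\D \models \tau$ applies directly. Second, the consistency bookkeeping has a clean finitary core you should state: any $k$-union of the extended query that meets the fresh nulls of a chase step uses at most $k-1$ old atoms, so its old part together with the (single) guard atom is still a $k$-union of the previous stage; defining the images of the fresh nulls as a function of the strategy's value on the guard then makes the forth property go through, since any two relevant domains share the guard. Third, the closing appeal to finite controllability is misplaced — finite controllability concerns containment over finite versus arbitrary databases and is already absorbed into Corollary \ref{coro:chase-inf-eval}; what you need for the infinite chase is either the lazy/direct-limit construction you mention first (take the union of the stage-wise strategy families), or a pigeonhole compactness argument using that $\D$ is finite to pass from winning every $c$-round prefix to winning the unbounded game; also note that a $c$-round prefix on $chase_\Sigma(q)$ does not literally live in a fixed finite chase approximation, since Spoiler's first move already ranges over all $k$-unions of the infinite chase.
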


\begin{proof} 
As stated in \cite{BGP16}, if $\Sigma$ is a set of guarded tgds and $\D$ satisfies $\Sigma$, then  
$(chase_\Sigma(q),\bar x) \to_1 (\D,\bar a) \, 
\Longleftrightarrow \, (q,\bar x) \to_1 (\D,\bar a)$ for every CQ $q$.    
A slight modification of this proof shows that this property extends to any $k > 1$. That is, for every 
$k \geq 1$ it is the case that 
$$(chase_\Sigma(q),\bar x) \to_k (\D,\bar a) \ \ 
\Longleftrightarrow \ \ (q,\bar x) \to_k (\D,\bar a).$$
The result then follows from this equivalence and Corollary \ref{coro:chase-inf-eval}.   
\qed
\end{proof} 

For egds, on the other hand, we obtain that the problem 
can be solved in time $|D|^{O(1)} \cdot f(|q|)$, for a
 computable function $f : \mathbb{N} \to \mathbb{N}$. In 
 parameterized complexity terms, this means that the problem
is {\em fixed-parameter tractable} (FPT), with the parameter being 
the size of the CQ. This is a positive result, as no FPT for general CQ evaluation 
is believed to exist \cite{PY99}.    

\begin{theorem}
Fix $k\geq 1$. Given a CQ $q$, a set of $\Sigma$ of egds, 
a database $\D$ satisfying $\Sigma$ and a tuple $\bar a$, 
checking whether $\bar a\in q'(\D)$, where 
$q'$ is the $\HW(k)^\infty$-overapproximation of $q$ under $\Sigma$, 
can be solved by an FPT 
algorithm.   
\end{theorem}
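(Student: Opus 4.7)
The plan is to combine Corollary~\ref{coro:chase-inf-eval} with the fact that the egd chase only identifies existing variables, and hence does not enlarge the query.

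More concretely, I would first recall from Corollary~\ref{coro:chase-inf-eval} that
\[
\bar a \in q'(\D) \ \Longleftrightarrow \ (chase_\Sigma(q),\bar x) \to_k (\D,h_{q,\Sigma}(\bar a)),
\]
so it suffices to compute $chase_\Sigma(q)$ and then run the polynomial-time algorithm of Proposition~\ref{prop:games-poly}. The key observation is that a single egd chase step only merges two elements of the current CQ, hence the number of elements (and thus of atoms) in $chase_\Sigma(q)$ is at most $|q|$, and the total number of chase steps is at most $|q|$ as well.

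Next I would bound the cost of each chase step independently of~$\D$. A single step requires, for some egd $\epsilon \in \Sigma$ with body $\phi(\bar y)$, finding a homomorphism $h\colon \phi \to q_i$ (where $q_i$ is the current intermediate CQ) with $h(y_i)\neq h(y_j)$. Since $|q_i|\le |q|$, this can be done by brute force in time $|q|^{O(|\phi|)}\cdot |\Sigma|$, which is bounded by a computable function $f_1(|q|,|\Sigma|)$ that does not depend on $\D$. Iterating at most $|q|$ times yields $chase_\Sigma(q)$ in time $f(|q|,|\Sigma|)$ for some computable $f$.

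Finally, since $|chase_\Sigma(q)|\le |q|$, Proposition~\ref{prop:games-poly} solves $(chase_\Sigma(q),\bar x) \to_k (\D,h_{q,\Sigma}(\bar a))$ in time $\text{poly}(|\D|,|q|)$. Combining the two phases gives an overall running time of the form
\[
f(|q|,|\Sigma|) \,+\, |\D|^{O(1)},
\]
which is FPT with the size of the CQ (together with $\Sigma$) as parameter. The only subtle point — and the main thing to verify carefully — is the size-preservation property of the egd chase, which is what prevents the exponential blow-up one would get with tgds, and thus makes the $\D$-independent preprocessing compatible with the FPT requirement.
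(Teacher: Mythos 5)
Your proof is correct and follows essentially the same route as the paper: compute $chase_\Sigma(q)$ as a preprocessing step depending only on $q$ and $\Sigma$ (using that egd chase steps only identify variables, so $|chase_\Sigma(q)|\le |q|$ and the chase terminates after at most $|q|$ steps), then apply the polynomial-time game test of Proposition~\ref{prop:games-poly}, giving a running time of the form $f(|q|,|\Sigma|) + (|\D|+|q|)^{O(1)}$, hence FPT. The only slip is notational: the homomorphism $h_{q,\Sigma}$ maps $q$ into $chase_\Sigma(q)$, so it should be applied to the free variables $\bar x$ (yielding the distinguished tuple $h_{q,\Sigma}(\bar x)$ of $chase_\Sigma(q)$), not to the database tuple $\bar a$.
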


\begin{proof} 
The algorithm first computes $chase_\Sigma(q)$ in exponential time in the size of $q$, and then 
checks whether $(chase_\Sigma(q),\bar x) \to_k (\D,\bar a)$ in polynomial time in the combined size
of $chase_\Sigma(q)$ and $\D$. But the size of $chase_\Sigma(q)$ is bounded by that of $q$, and thus the 
whole procedure can be carried out in time 
$$2^{p(|q|)} \, + \, (|D|+|q|)^{O(1)},$$
for $p : \mathbb{N} \to \mathbb{N}$ a polynomial. Hence, the algorithm is FPT. \qed
\end{proof} 

Notice than in case that $\Sigma$ consists exclusively of FDs, then the problem can be solved in 
polynomial time. This is because in such a case $chase_\Sigma(q)$ can be computed in polynomial time. 

\begin{corollary}
Fix $k\geq 1$. Given a CQ $q$, a set of $\Sigma$ of FDs, 
a database $\D$ satisfying $\Sigma$, and a tuple $\bar a$, 
checking whether $\bar a\in q'(\D)$, where 
$q'$ is the $\HW(k)^\infty$-overapproximation of $q$ under $\Sigma$, can be solved in polynomial time.   
\end{corollary}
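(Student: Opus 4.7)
The plan is to reduce this corollary to the preceding theorem for general egds, with the single observation that for FDs the chase can be computed in polynomial time (rather than merely exponential time as in the general egd case). So the only real work is to justify that $chase_\Sigma(q)$ is polynomial-time computable whenever $\Sigma$ is a set of FDs.

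First I would recall the shape of a chase step for an egd $\forall \bar x (\phi(\bar x) \rightarrow x_i = x_j)$: it looks for a homomorphism $h$ from $\phi$ into the current CQ with $h(x_i) \neq h(x_j)$, and if it finds one it identifies $h(x_i)$ and $h(x_j)$. The difficulty for general egds is that detecting applicability requires solving a CQ-homomorphism problem, which is NP-hard and accounts for the exponential blow-up in the FPT bound. For a functional dependency, however, the body is of the form $R(\bar u) \wedge R(\bar v)$ (with the appropriate variable overlap in the key positions), so detecting applicability amounts to scanning pairs of $R$-atoms in $q$, which is polynomial. Moreover, each chase step strictly merges two distinct elements of $q$, so the total number of chase steps is bounded by the number of elements in $q$, hence by $|q|$. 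I would therefore state and quickly justify the standard fact that for a fixed set $\Sigma$ of FDs the chase $chase_\Sigma(q)$ can be computed in time polynomial in $|q|$.

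Once this is established, the rest is a one-line combination of results already at our disposal. Using Corollary~\ref{coro:chase-inf-eval}, the test $\bar a \in q'(\D)$, where $q'$ is the $\HW(k)^\infty$-overapproximation of $q$ under $\Sigma$, is equivalent to $(chase_\Sigma(q), h_{q,\Sigma}(\bar x)) \to_k (\D,\bar a)$. By Proposition~\ref{prop:games-poly}, this latter test runs in time polynomial in the combined size of $chase_\Sigma(q)$ and $\D$. Combining the polynomial bound on computing $chase_\Sigma(q)$ with the polynomial bound on evaluating $\to_k$, the full procedure is polynomial in $|q| + |\D|$.

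There is essentially no obstacle here beyond making the chase-termination argument precise for FDs; the interesting work was done in Theorem~\ref{theo:cons-exist} and Corollary~\ref{coro:chase-inf-eval}. The only mild subtlety is to keep track of the homomorphism $h_{q,\Sigma}$ so that the free variables $\bar x$ of $q$ are correctly mapped into $chase_\Sigma(q)$ before running the $k$-cover game, but this can be maintained incrementally during each merging step at no extra asymptotic cost.
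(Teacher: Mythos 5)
Your proposal is correct and follows essentially the same route as the paper: the paper's proof is exactly the observation that for FDs $chase_\Sigma(q)$ is computable in polynomial time, after which the evaluation reduces, via Corollary~\ref{coro:chase-inf-eval} and Proposition~\ref{prop:games-poly}, to a polynomial-time check of $(chase_\Sigma(q),\bar x) \to_k (\D,\bar a)$. The only difference is that you spell out the routine justification of polynomial chase termination for FDs (applicability detectable by scanning pairs of atoms, at most $|q|$ merging steps), which the paper asserts without proof.
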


\section{Beyond Under- and Overapproximations: $\Delta$-Approximations}\label{beyond} 

We now turn to $\Delta$-approximations. 
Recall that a $\HW(k)$-$\Delta$-approximation of $q$ is a maximal element in $\HW(k)$ with respect to the partial 
order $\sqsubseteq_q$, where $q'\sqsubseteq_q q''$, for CQs $q',q''\in \HW(k)$, iff $\Delta(q(\D),q''(\D))\subseteq \Delta(q(\D),q'(\D))$ for all databases $\D$. 
It is worth noticing that $\HW(k)$-$\Delta$-approximations generalize over- and underapproximations. 

\begin{proposition}
\label{prop:general1}
Fix $k\geq 1$. Let $q,q'$ be CQs such that $q'\in \HW(k)$. 
If $q\subseteq q'$ (resp., $q'\subseteq q$), then $q'$ is a $\HW(k)$-$\Delta$-approximation of $q$ if and only if $q'$ is an 
$\HW(k)$-overapproximation (resp., $\HW(k)$-underapproximation) of $q$.
\end{proposition}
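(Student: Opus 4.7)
The plan is to exploit the simple shape the symmetric difference takes once a containment is assumed. Concretely, if $q \subseteq q'$ then for every database $\D$ we have $q(\D) \subseteq q'(\D)$, so $\Delta(q(\D),q'(\D)) = q'(\D) \setminus q(\D)$, a subset disjoint from $q(\D)$. Dually, if $q' \subseteq q$ then $\Delta(q(\D),q'(\D)) = q(\D) \setminus q'(\D) \subseteq q(\D)$. The first step is to cash in these identities to turn $\sqsubseteq_q$-comparisons into ordinary $\subseteq$-comparisons.

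For the overapproximation case, I would prove the following key lemma: if $q \subseteq q'$ and $q' \sqsubseteq_q q''$, then $q \subseteq q''$ and $q'' \subseteq q'$. Indeed, $\Delta(q(\D),q''(\D)) \subseteq q'(\D) \setminus q(\D)$ gives us two pieces: (i) $q(\D) \setminus q''(\D) \subseteq q'(\D) \setminus q(\D)$, but the left side lies in $q(\D)$ while the right side is disjoint from $q(\D)$, forcing $q(\D) \setminus q''(\D) = \emptyset$, i.e.\ $q \subseteq q''$; and (ii) $q''(\D) \setminus q(\D) \subseteq q'(\D) \setminus q(\D)$, which together with $q \subseteq q' $ yields $q''(\D) \subseteq q'(\D)$. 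With this lemma in hand, both implications follow. For ``$\Rightarrow$'': if $q'$ is an overapproximation and some $q''\in\HW(k)$ satisfied $q'\sqsubset_q q''$, the lemma would produce a $q''$ with $q \subseteq q'' \subseteq q'$, which by minimality of $q'$ forces $q'' \equiv q'$ and hence $q'\sqsubseteq_q q''$ and $q''\sqsubseteq_q q'$, contradicting strictness. For ``$\Leftarrow$'': if $q'$ is a $\Delta$-approximation and some $q''\in\HW(k)$ satisfied $q\subseteq q''\subsetneq q'$, then directly $\Delta(q(\D),q''(\D)) = q''(\D)\setminus q(\D) \subseteq q'(\D)\setminus q(\D) = \Delta(q(\D),q'(\D))$, so $q'\sqsubseteq_q q''$; and equality of symmetric differences would force $q''(\D) = q'(\D)$ on every $\D$, contradicting $q''\subsetneq q'$. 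Hence $q'\sqsubset_q q''$, contradicting maximality of $q'$.

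The underapproximation case is entirely symmetric. The analogous lemma states: if $q' \subseteq q$ and $q' \sqsubseteq_q q''$, then $q' \subseteq q''$ and $q'' \subseteq q$. The same disjointness argument, now applied to $\Delta(q(\D),q'(\D)) = q(\D)\setminus q'(\D) \subseteq q(\D)$, yields both inclusions. From here, both directions unfold exactly as before, using the characterization of underapproximations as maximal elements of $\HW(k)$ below $q$ in $\subseteq$.

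I do not expect a serious obstacle; the whole argument is a careful unpacking of the symmetric-difference identity under a one-sided containment. The one spot that requires the most attention is keeping the disjointness argument in step (i) above crisp, since that is what forces the otherwise unrelated $q''$ to actually contain (resp.\ be contained in) $q$, and hence allows the reduction from $\sqsubseteq_q$-maximality to $\subseteq$-extremality. Once that is in place, the two ``if and only if'' statements drop out essentially by rewriting.
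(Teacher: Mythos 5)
Your proposal is correct and follows essentially the same route as the paper: under the assumed containment, the symmetric difference reduces to $q'(\D)\setminus q(\D)$ (resp.\ $q(\D)\setminus q'(\D)$), the disjointness argument forces $q\subseteq q''\subseteq q'$ (resp.\ $q'\subseteq q''\subseteq q$) for any $\sqsubseteq_q$-competitor $q''$, and the equivalence then follows from the characterization of over-/underapproximations as $\subseteq$-extremal elements. Your explicit key lemma is just a slightly more structured packaging of the same argument the paper carries out inline.
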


\begin{proof}
We only prove it for the case when $q \subseteq q'$. The proof for the case when $q' \subseteq q$ is analogous. 

  Suppose first that $q'$ is a $\HW(k)$-$\Delta$-approximation of $q$. Assume, towards a contradiction, that 
  there is a query $q''$ such that $q \subseteq q'' \subset q'$. Then (i) $q(\D) \subseteq q''(\D) \subseteq q'(\D)$ for each database $\D$, and (2) 
  there is a database $\D^*$ such that $q'(\D^*)  \not\subseteq q''(\D^*)$. In particular, $\Delta(q(\D),q''(\D))\subseteq \Delta(q(\D),q'(\D))$ for each database $\D$, while 
  $\Delta(q(\D^*),q'(\D^*))\not\subseteq \Delta(q(\D^*),q''(\D^*))$. This is a contradiction to 
  $q'$ being a $\HW(k)$-$\Delta$-approximation of $q$.
  
  Now suppose that $q'$ is a $\HW(k)$-overapproximation of $q$. Assume, towards a contradiction, that there is a query $q''$ such that (i) $\Delta(q(\D),q''(\D))\subseteq \Delta(q(\D),q'(\D))$ for each database $\D$, while (ii) for some database $\D^*$ it is the case that 
  $\Delta(q(\D^*),q'(\D^*))\not\subseteq \Delta(q(\D^*),q''(\D^*))$. 
  Then $q \subseteq q'' \subseteq q'$ as $\Delta(q(\D),q'(\D))$ may only contain tuples in $q'(\D) \setminus 
  q(\D)$. Also, $q'(\D^*) \not\subseteq q''(\D^*)$. This is a contradiction to $q'$ being  a $\HW(k)$-overapproximation of $q$. \qed
  %
\end{proof}

For this reason, we concentrate on 
the study of $\HW(k)$-$\Delta$-approximations that are neither $\HW(k)$-under- nor $\HW(k)$-overapproximations. 
Evaluating such $\Delta$-approximations can give us useful information that complements the one provided by under- and overapproximations. 
But, do these  $\HW(k)$-$\Delta$-approximations exist at all, i.e., 
are there $\HW(k)$-$\Delta$-approximations that are neither $\HW(k)$-under- nor $\HW(k)$-overapproximations? 
In the rest of this section, we settle this question and study complexity questions associated with such $\HW(k)$-$\Delta$-approximations.

\subsection{Incomparable $\HW(k)$-$\Delta$-approximations}

Let $q$ be a CQ. 
 In view of Proposition \ref{prop:general1}, 
the $\HW(k)$-$\Delta$-approximations $q'$ of $q$  
that are neither $\HW(k)$-under nor $\HW(k)$-overapproximations must be {\em incomparable} with $q$ in terms of containment; i.e., 
both $q\not\subseteq q'$ and $q'\not\subseteq q$ must hold. Incomparable $\HW(k)$-$\Delta$-approximations do not necessarily exist, even when approximating in the set of infinite CQs $\HW(k)^\infty$. A trivial example is any CQ $q$ in $\HW(k)$, as its only $\HW(k)$-$\Delta$-approximation (up to equivalence) is $q$ itself. 
On the other hand, the following characterization will help us to find CQs that do have
 incomparable $\HW(k)$-$\Delta$-approximations.

\begin{theorem}
\label{theo:char-incomp}
Fix $k\geq 1$. Let $q(\bar x),q'(\bar x')$ be CQs such that $q'\in \HW(k)$. 
Then $q'$ is an incomparable $\HW(k)$-$\Delta$-approximation of $q$ 
iff $(q,\bar x)\to_k (q',\bar x')$, and both $q \not\subseteq q'$ and $q' \not\subseteq q$ hold.
\end{theorem}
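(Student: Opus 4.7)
I will prove both directions by contradiction, using the disjoint conjunction $\wedge$ from Section~\ref{sec:uniqueness}; a key tool throughout is the pointwise identity $(p_1\wedge p_2)(\D)=p_1(\D)\cap p_2(\D)$, which follows by unfolding the existential quantifiers after renaming.

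For $(\Rightarrow)$, incomparability is immediate from the definition. To establish $(q,\bar x)\to_k(q',\bar x')$, I proceed by contradiction: if it fails, Proposition~\ref{prop:games-tw} supplies a $p(\bar y)\in\HW(k)$ with $q\subseteq p$ but $q'\not\subseteq p$. Forming $r=q'\wedge p\in\HW(k)$, I show $q'\sqsubset_q r$, contradicting the $\sqsubseteq_q$-maximality of $q'$. The inequality $q'\sqsubseteq_q r$ follows from $q\cap q'\subseteq r$ (combining $q\subseteq p$ with the pointwise description of $r$) and $r\setminus q\subseteq q'$ (from $r\subseteq q'$). Strictness is witnessed by any $\bar a\in q'(\D^*)\setminus p(\D^*)$, which lies in $q'\setminus q\subseteq\Delta(q,q')$ at $\D^*$ (since $q\subseteq p$) but in neither $q$ nor $r$ at $\D^*$, so $\bar a\notin\Delta(q,r)$.

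For $(\Leftarrow)$, I suppose $q'\sqsubset_q q''$ with $q''\in\HW(k)$ and aim at a contradiction. The pointwise inclusion $q''(\D)\subseteq q(\D)\cup q'(\D)$, which is part of $q'\sqsubseteq_q q''$, evaluated at $\D=q''$ with the tuple $\bar x''$, forces $q''\subseteq q$ or $q''\subseteq q'$. I handle the two extremal subcases directly: if additionally $q\subseteq q''$, then $q\equiv q''$ and so $q$'s core lies in $\HW(k)$; applying the game characterization with this core as $p$ yields $q'\subseteq q$, contradicting incomparability. If $q'\subseteq q''$, then $q'\equiv q''$, contradicting the strictness of $q'\sqsubset_q q''$.

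The remaining strict subcases ($q''\subsetneq q$ with $q''\not\subseteq q'$, or $q''\subsetneq q'$ with $q''\not\subseteq q$) are the main obstacle. Here I analyze $r=q'\wedge q''\in\HW(k)$, whose pointwise evaluation is $q'(\D)\cap q''(\D)$: when $q''\subsetneq q'$ one has $r\equiv q''$, and the witness of $q''\not\sqsubseteq_q q'$ must lie in $q'\setminus q\setminus q''$; when $q''\subsetneq q$, combining $q\cap q'\subseteq q''$ (from condition (A) of $q'\sqsubseteq_q q''$) with the pointwise description gives $r\equiv q\wedge q'$, so that the core of $q\wedge q'$ itself is in $\HW(k)$. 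In each subcase my plan is to extract a $p\in\HW(k)$ with $q\subseteq p$ but $q'\not\subseteq p$, contradicting Proposition~\ref{prop:games-tw}; producing such a $p$ from the structural data about $q''$ is the delicate heart of the argument.
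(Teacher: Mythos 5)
Your forward direction is correct and is essentially the paper's own argument: incomparability is definitional, and the game condition is obtained by contradiction via Proposition \ref{prop:games-tw}, forming the disjoint conjunction $q'\wedge p$ and checking $q'\sqsubset_q (q'\wedge p)$ exactly as the paper does.

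The backward direction, however, has a genuine gap, and you have located it yourself: the ``delicate heart'' you leave as a plan is precisely the paper's Lemma \ref{lemma:aux}, namely that if $(q,\bar x)\to_k(q',\bar x')$ and $q''\in\HW(k)$ satisfies $(q'',\bar x'')\to(q'\wedge q,\bar z)$ (equivalently $q\wedge q'\subseteq q''$), then $(q'',\bar x'')\to(q',\bar x')$, i.e. $q'\subseteq q''$. Your dichotomy at $\D_{q''}$ and the two extremal subcases are fine, but the remaining subcases collapse exactly to this statement: e.g. when $q''\subseteq q$, the condition $q'\sqsubseteq_q q''$ is pointwise equivalent to $q\cap q'\subseteq q''$, and strictness is automatic from $q'\not\subseteq q$, so ruling the subcase out amounts to showing that no $q''\in\HW(k)$ can sit between $q\wedge q'$ and $q$ --- which is Lemma \ref{lemma:aux} combined with $q'\not\subseteq q$. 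This cannot be dispatched by a direct appeal to Proposition \ref{prop:games-tw}, because the data only gives a homomorphism from $q''$ into the disjoint conjunction $q'\wedge q$, not into $q$; and your proposed exit (extracting some $p\in\HW(k)$ with $q\subseteq p$ but $q'\not\subseteq p$) is never constructed, nor do the partial facts you derive (such as $q'\wedge q''\equiv q\wedge q'$ in one subcase) visibly yield it. The paper closes this gap with a nontrivial appendix argument: it splits the variables of $q''$ according to whether the homomorphism into $q'\wedge q$ sends them to the $q$-side or the $q'$-side, proves that the $q$-side substructure has generalized hypertreewidth at most $k$ relative to the interface variables, and then transfers it through Duplicator's winning strategy using a relativized version of Proposition \ref{prop:games-tw} (Lemma \ref{lemma:aux1}), finally gluing the two partial homomorphisms into one from $q''$ to $q'$. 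Some argument of this kind (or an equivalent of Lemma \ref{lemma:aux}) is needed; without it the backward direction is unproved.
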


\begin{proof}
Suppose that $q'$ is an incomparable $\HW(k)$-$\Delta$-approximation of $q$ and assume, by contradiction, 
that $(q,\bar x)\not\to_k (q',\bar x')$. By Proposition \ref{prop:games-tw}, there is a $q''(\bar x'')\in \HW(k)$ such that $q\subseteq q''$ and $q'\not\subseteq q''$. 
Recall that $(q''\wedge q')(\bar z)$ denotes the disjoint conjunction of $q''$ and $q'$ (see Section \ref{sec:uniqueness} for the precise definition). 
We show that $q'\sqsubset_q (q''\wedge q')$, which is a contradiction as $(q''\wedge q') \in \HW(k)$. 
Assume that $\bar a\in\Delta(q(\D),(q''\wedge q')(\D))$, for some $\D$ and $\bar a \in \D$. If $\bar a\not\in q(\D)$, then $\bar a\in (q''\wedge q')(\D)\subseteq q'(\D)$, 
and thus, $\bar a\in \Delta(q(\D), q'(\D))$. Otherwise, 
$\bar a\in q(\D)$ and $\bar a\not\in (q''\wedge q')(\D)$.  Since $q\subseteq q''$, we have $\bar a\not\in q'(\D)$, and then $\bar a\in \Delta(q(\D), q'(\D))$. Hence $q'\sqsubseteq_q (q''\wedge q')$. 
Now, since $q'\not\subseteq q''$, there is a database $\D^*$ such that $q'(\D^*)\not\subseteq q''(\D^*)$, i.e., 
$\bar a\in q'(\D^*)$ but $\bar a\not\in q''(\D^*)$, for some tuple $\bar a$ in $\D^*$. 
In particular $\bar a\in \Delta(q(\D^*),q'(\D^*))$ and $\bar a\not\in\Delta(q(\D^*),(q''\wedge q')(\D^*))$, and thus $(q''\wedge q') 
\not\sqsubseteq_q  q'$. 

For the converse, we need the following lemma whose proof can be found in the appendix. 

\begin{lemma}
\label{lemma:aux}
Fix $k\geq 1$. Let $q(\bar x),q'(\bar x'),q''(\bar x'')$ be CQs such that $q''\in \HW(k)$. 
Let $(q'\wedge q)(\bar z)$ be the disjoint conjunction of $q'$ and $q$. 
Suppose that $(q,\bar x)\to_k(q',\bar x')$. Then $(q'',\bar x'')\to (q'\wedge q, \bar z)$ implies $(q'',\bar x'')\to (q', \bar x')$. 
\end{lemma}

Assume now 
that $q\nsubseteq q'$, $q'\nsubseteq q$, and $(q,\bar x)\to_k (q',\bar x')$. By contradiction, 
suppose that there is a CQ $q''\in \HW(k)$ such that $q'\sqsubset_q q''$. We show that $q'\equiv q''$, which is a contradiction. 
Recall that $\D_{(q'\wedge q)}$ denotes the canonical database of $(q'\wedge q)(\bar z)$. 
Clearly, $\bar z\in q(\D_{(q'\wedge q)})$ and $\bar z\in q'(\D_{(q'\wedge q)})$. It follows that $\bar z\not\in \Delta(q(\D_{(q'\wedge q)}), q'(\D_{(q'\wedge q)}))$, 
and by hypothesis, $\bar z\not\in \Delta(q(\D_{(q'\wedge q)}), q''(\D_{(q'\wedge q)}))$. Hence, $\bar z\in q''(\D_{(q'\wedge q)})$. 
By Lemma \ref{lemma:aux}, we have  $(q'',\bar x'')\to (q',\bar x')$, that is, $q'\subseteq q''$. 
For $q''\subseteq q'$, note that $\bar x''\not\in q(\D_{q''})$; otherwise, $q''\subseteq q$ would hold, implying that $q'\subseteq q$, which is a contradiction. 
Since $\bar x''\in q''(\D_{q''})$, we have $\bar x''\in \Delta(q(\D_{q''}),q''(\D_{q''}))$. This implies that 
$\bar x''\in \Delta(q(\D_{q''}),q'(\D_{q''}))$, and then $\bar x''\in q'(\D_{q''})$, i.e., $q''\subseteq q'$. Hence, $q'\equiv q''$. \qed
\end{proof}

\begin{example}
\label{ex:general} 
Consider again the CQ 
$$q \ = \ \exists x\exists y\exists z \big(E(x,y)\wedge E(y,z) \wedge E(z,x)\big)$$ 
from Figure~\ref{fig:fig2}. 
It is easy to prove that 
$q'=\exists x E(x,x)$ is the unique $\HW(1)$-underapproximation of $q$. 
Also, as mentioned in Section \ref{sec:uniqueness}, $q$ has no $\HW(1)$-overapproximations. 
Is it the case, on the other hand, that $q$ has incomparable $\HW(1)$-$\Delta$-approximations? By applying Theorem \ref{theo:char-incomp}, we can give a positive answer to this question. In fact,  
the CQ $$q''  \ = \ \exists x\exists y \big(E(x,y)\wedge E(y,x)\big)$$ is an incomparable $\HW(1)$-$\Delta$-approximation of $q$.  \qed
\end{example}

Therefore, as Example \ref{ex:general} shows, incomparable $\HW(k)$-$\Delta$-approximations may exist for some CQs. 
However, in contrast with overapproximations, they are not unique in general (see the appendix for details). 

\begin{proposition} 
\label{prop:non-unique} 
There is a CQ with infinitely many (non-equivalent) incomparable $\HW(1)$-$\Delta$-approximations.
In fact, this even holds for the CQ $q$ shown in Figure \ref{fig:fig1}.
\end{proposition}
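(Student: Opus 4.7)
The plan is to invoke Theorem~\ref{theo:char-incomp}, which gives that $q'' \in \HW(1)$ is an incomparable $\HW(1)$-$\Delta$-approximation of $q$ iff $q \to_1 q''$, $q \not\subseteq q''$, and $q'' \not\subseteq q$. I would exhibit, for each $n \geq 2$, a Boolean CQ $q_n \in \HW(1)$ satisfying these three conditions, and show the $q_n$ are pairwise non-equivalent. Concretely, take $q_n$ to be the Boolean CQ whose canonical database is the labeled path on $2n+2$ variables $v_1, \dots, v_{2n+2}$, where the edge between $v_i$ and $v_{i+1}$ (represented by atoms in both directions) is labeled $P_b$ if $i$ is odd and $P_a$ if $i$ is even, so the label sequence along the path reads $P_b, P_a, P_b, \dots, P_a, P_b$. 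Since its Gaifman graph is a path, $q_n \in \HW(1)$.

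The condition $q_n \not\subseteq q$ reduces to $q \not\to q_n$: any homomorphism $h \colon q \to q_n$ would either identify two variables of $q$ (yielding a self-loop in $q_n$, which has none) or map the triangle injectively into $q_n$ (impossible since paths contain no triangles). The condition $q \not\subseteq q_n$ reduces to $q_n \not\to q$, and here I use $n \geq 2$: any homomorphism $h \colon q_n \to q$ must send every $P_a$-edge of $q_n$ into the unique $P_a$-edge $\{x, y\}$ of $q$, so $h(v_2), h(v_3), h(v_4), h(v_5), \dots \in \{x, y\}$; but the intermediate $P_b$-edge $\{v_3, v_4\}$ then forces $h(v_4) = z$, a contradiction since $z$ is the only $P_b$-neighbour in $q$ of both $x$ and $y$. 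For $q \to_1 q_n$, I would exhibit a Duplicator strategy maintaining the invariant that pebbles corresponding to $x$ or $y$ always sit on internal vertices of $q_n$, while the $z$-pebble sits on a \emph{safe} vertex --- one whose unique $P_b$-neighbour is internal (which excludes precisely $v_2$ and $v_{2n+1}$). The strategy is well-defined because the $P_a$-neighbour of any internal vertex is itself internal and the $P_b$-neighbour of any internal vertex is safe; a case analysis on Spoiler's moves verifies that the invariant is preserved at every round and Duplicator always has a legal response.

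For pairwise non-equivalence, I would show that each $q_n$ is a core: any endomorphism $\phi$ of $q_n$ sends the sequence $v_1, v_2, \dots, v_{2n+2}$ to a walk of length $2n+1$ in $q_n$ with edge labels $P_b, P_a, \dots, P_b$. Since each vertex has at most one $P_a$-neighbour and at most one $P_b$-neighbour, such a walk is determined by its starting vertex, and one verifies that only the starting vertices $v_1$ and $v_{2n+2}$ avoid getting stuck at an endpoint (where no $P_a$-neighbour exists) before completing all $2n+1$ steps; the two resulting walks are the identity and the reflection automorphism (valid since the label sequence is palindromic). Hence $q_n$ is its own core, so $q_n \equiv q_m$ implies $q_n \cong q_m$, which fails for $n \neq m$ since they have different numbers of variables. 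The main obstacle is the case analysis verifying Duplicator's strategy --- specifically, ensuring that Spoiler cannot force a pebble on $x$ or $y$ to migrate onto an endpoint of $q_n$, where a required $P_a$-move would have no valid response.
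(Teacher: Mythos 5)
Your proposal is correct, and it rests on the same key lemma as the paper -- Theorem~\ref{theo:char-incomp} -- but it uses a genuinely different witness family and different verification arguments, so it is worth contrasting the two. The paper keeps the $\HW(1)$-overapproximation $q'$ of $q$ (the labeled path of length $3$ from Figure~\ref{fig:fig1}) and, for each $n$, forms the disjoint conjunction $q' \wedge p_n$, where $p_n$ is a directed $P_a$-path of length $n$ carrying a $P_b$-self-loop at each endpoint. This buys three things cheaply: $q \to_1 (q' \wedge p_n)$ follows immediately from $q \to_1 q'$ and composition, incomparability is immediate because the self-loops of $p_n$ admit no homomorphism into $q$ while $q \not\to q' \wedge p_n$, and pairwise non-equivalence follows from the easy facts that $p_i \to p_j$ iff $i = j$ and $p_n \not\to q'$. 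You instead take connected alternating $P_b/P_a$-paths $q_n$ of growing length, verify $q \to_1 q_n$ by an explicit Duplicator invariant, and prove non-equivalence via a rigidity argument showing each $q_n$ is a core. All of your steps check out: the invariant (pebbles for $x,y$ on internal vertices, the $z$-pebble on vertices whose $P_b$-neighbour is internal) is preserved because the $P_b$-edges of $q_n$ form a perfect matching and the $P_a$-neighbour of an internal vertex is internal; your use of $n \geq 2$ correctly excludes $q_1$, which is the overapproximation itself; and the walk argument does pin down the identity and the reflection as the only endomorphisms, so $q_n$ is a core and equivalence would force isomorphism. Two remarks: your pebble-game verification could be shortcut, in the spirit of the paper, by composing $q \to_1 q'$ with the evident homomorphism from $q'$ onto the length-$3$ initial subpath of $q_n$; conversely, your family has the mild advantage of consisting of connected cores, whereas the paper's witnesses are disconnected and non-minimized.
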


\paragraph{{\em {\bf Identification, existence and evaluation.}}}
A direct consequence of Theorem \ref{theo:char-incomp} is that the identification problem, i.e.,  
checking if $q' \in \HW(k)$ is an incomparable $\HW(k)$-$\Delta$-approximation of a CQ $q$, is in coNP. 
It suffices to check that $q \not\subseteq q'$ and $q' \not\subseteq q$ -- which are in coNP -- and 
$(q,\bar x) \to_k (q',\bar x')$ -- which is in {\sc Ptime} from Proposition \ref{prop:games-poly}.  
We show next that this bound  is optimal (the proof is in the appendix). 

\begin{proposition}
\label{prop:incomp-ident} 
Fix $k\geq 1$. 
Checking if a given CQ $q'\in \HW(k)$ is an incomparable $\HW(k)$-$\Delta$-approximation of a given CQ $q$, is {\sc coNP}-complete. 
\end{proposition}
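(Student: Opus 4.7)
The upper bound follows from Theorem~\ref{theo:char-incomp}: to check whether $q'$ is an incomparable $\HW(k)$-$\Delta$-approximation of $q$ one verifies (i)~$(q,\bar x)\to_k(q',\bar x')$, (ii)~$q\not\subseteq q'$, and (iii)~$q'\not\subseteq q$. Condition~(i) is polynomial by Proposition~\ref{prop:games-poly}. Condition~(ii) is polynomial as well, since $q'\in\HW(k)$ makes $q\subseteq q'$ equivalent, through Equation~\eqref{eq:tw}, to $(q',\bar x')\to_k(q,\bar x)$. Condition~(iii) is standard CQ non-containment, which sits in coNP. Hence the problem is in coNP.

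For the matching lower bound, the plan is to reduce from a coNP-complete problem whose only hard ingredient matches condition~(iii), while conditions~(i) and~(ii) are enforced ``for free'' by the construction. A convenient source is the complement of monotone NAE-3SAT restricted to clauses with three pairwise distinct variables, still coNP-complete. Given such $\phi=C_1\wedge\cdots\wedge C_m$ over $X_1,\dots,X_n$, I work over a schema with one ternary relation $R$ and two fresh unary relations $T_0, T_1$. The query $q'$ has existentially quantified variables $x_0, x_1$ with atoms $T_0(x_0)$, $T_1(x_1)$, and $R(x_i,x_j,x_k)$ for every non-all-equal triple $(i,j,k)\in\{0,1\}^3$; since a single $R$-atom covers $\{x_0,x_1\}$, we have $q'\in\HW(1)\subseteq\HW(k)$. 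The query $q$ has one existentially quantified variable $y_X$ per propositional variable of $\phi$ and, for each clause $C_\ell=X_{\ell_1}\vee X_{\ell_2}\vee X_{\ell_3}$, the atom $R(y_{X_{\ell_1}},y_{X_{\ell_2}},y_{X_{\ell_3}})$.

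Conditions~(ii) and~(iii) are easy to verify. For~(ii), since $q$ has no $T_0$-atom while $q'$ has one, no homomorphism $q'\to q$ exists; for~(iii), a homomorphism $q\to q'$ is exactly a truth assignment (identifying $x_i$ with $i$) that NAE-satisfies every clause, so $q'\not\subseteq q$ iff $\phi$ is NAE-unsatisfiable. The main obstacle is condition~(i), where we must guarantee $q\to_k q'$ for every $\phi$, regardless of satisfiability, otherwise the reduction collapses. For $k=1$ the argument is direct: each $1$-union of $q$ is a single $R$-atom on three distinct variables, and for any two values already fixed on two of those variables, at least one choice of the third makes the triple non-all-equal; since at most two shared variables survive between consecutive rounds, Duplicator can always extend. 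For $k>1$ the same template does not survive ``as is''---Spoiler might stage a $k$-union whose underlying $3$-uniform hypergraph is not $2$-colorable---so one either restricts $\phi$ to instances avoiding such substructures, or replaces the NAE template by a more robust acyclic template in $\HW(k)$ engineered so that Duplicator wins the $k$-cover game against any $q$. In either case, for each fixed $k$ one obtains a polynomial-time reduction under which $\phi$ is NAE-unsatisfiable iff $q'$ is an incomparable $\HW(k)$-$\Delta$-approximation of $q$, yielding coNP-hardness.
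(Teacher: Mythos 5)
Your upper bound is essentially the paper's: both verify the three conditions of Theorem~\ref{theo:char-incomp}, with the pebble-game test in polynomial time by Proposition~\ref{prop:games-poly} and non-containment in coNP (your extra observation that $q\subseteq q'$ is even polynomial via Equation~\eqref{eq:tw} is fine). The lower bound, however, is where the proposal diverges and where it has a genuine gap. Your NAE-template reduction is correct for $k=1$ (the family of non-constant maps from clause triples to $\{x_0,x_1\}$ is a winning strategy, the $T_0$-atom kills $q'\to q$, and $q\to q'$ is exactly NAE-satisfiability), and this is a different route from the paper, which for $k=1$ reduces from {\sc Non-Hom}$(T)$ for a coNP-hard oriented tree. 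But for $k>1$ your construction, as you yourself note, breaks: a $k$-union of $q$ can induce a non-$2$-colorable $3$-uniform hypergraph (already for $k=2$, an instance containing all $\binom{5}{3}$ clauses on five variables defeats Duplicator), so condition (i) can fail on NAE-unsatisfiable inputs and the reduction then maps a yes-instance to a no-instance. Neither proposed repair is carried out: restricting $\phi$ would require proving both that the restricted NAE problem remains coNP-hard and that the restriction actually forces $q\to_k q'$ (a global strategy condition, not just local $2$-colorability of $k$-unions), and the ``more robust template'' is not exhibited.

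The missing idea that closes this gap -- and is exactly the trick the paper uses -- is that $\to_k$ over-approximates $\to$: if $q\not\to_k q'$ then $q\not\to q'$, so the input is already certified to be a yes-instance. Hence the reduction may first test $q\to_k q'$ in polynomial time (Proposition~\ref{prop:games-poly}); if the test fails, output a fixed yes-pair, and otherwise output $(q,q')$, for which condition (i) now holds by construction. The paper implements this pre-check inside a reduction from {\sc Non-Hom}$(H_k)$ restricted to inputs $G$ with $H_k\not\to G$ (taking $H_k=\vec K_{2k}$ for $k\geq 2$, via non-$2k$-colorability, and the hard oriented tree for $k=1$), where the promise supplies condition (ii). With the same pre-check your NAE template would in fact work for every $k$; without it, your hardness argument is established only for $k=1$.
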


As in the case of $\HW(k)$-overapproximations, we do not know how to check existence of incomparable $\HW(k)$-$\Delta$-approximations, for $k>1$. 
Nevertheless, for $k =1$ we can exploit the automata techniques developed in Section \ref{sec:exist-acyclic} and obtain 
an analogous decidability result. 

\begin{proposition}
\label{prop:existence-incomp} 
There is a {\sc 2Exptime} algorithm that checks if a CQ $q$ has an incomparable $\HW(1)$-$\Delta$-approximation and, if one exists, 
it computes one in triple exponential time.  
The bounds become {\sc Exptime} and {\sc 2Exptime}, respectively, if the maximum arity of the 
schema is fixed. 
\end{proposition}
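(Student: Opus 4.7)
The plan is to reuse the 2ATA machinery of Proposition \ref{prop:tree-automata}, now driven by the characterization in Theorem \ref{theo:char-incomp}: a CQ $q'(\bar x') \in \HW(1)$ is an incomparable $\HW(1)$-$\Delta$-approximation of $q(\bar x)$ iff (i) $(q,\bar x) \to_1 (q',\bar x')$, (ii) $(q',\bar x') \not\to (q,\bar x)$ (equivalent to $q \not\subseteq q'$, since $q'\in\HW(1)$), and (iii) $(q,\bar x)\not\to (q',\bar x')$ (equivalent to $q'\not\subseteq q$). From the input $q$, I would build a 2ATA $\mathcal{A}_q$ over the same tree encoding of $\HW(1)$-tree decompositions used in Proposition \ref{prop:tree-automata}, whose language consists precisely of the encodings of incomparable $\HW(1)$-$\Delta$-approximations of $q$.

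Concretely, $\mathcal{A}_q$ is defined as the intersection $\mathcal{A}_2 \cap \overline{\mathcal{A}_1} \cap \overline{\mathcal{A}_3}$, where $\mathcal{A}_2$ and $\mathcal{A}_1$ are the automata already constructed in Proposition \ref{prop:tree-automata} that decide (i) and $(q',\bar x') \to (q,\bar x)$ respectively, and $\mathcal{A}_3$ is a new 2ATA accepting those encodings for which $(q,\bar x) \to (q',\bar x')$. The construction of $\mathcal{A}_3$ mirrors $\mathcal{A}_1$ in reverse: the automaton guesses and verifies a homomorphism $h\colon q \to q'$, carrying at each node $t$ of the encoding information that records, for each name $u_i$ in the bag of $t$, which variables of $V(q)$ are pinned to the $q'$-variable represented by $u_i$ at $t$. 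Local consistency along edges sharing a name is enforced exactly as in $\mathcal{A}_1$, the free-variable constraint $h(\bar x) = \bar x'$ is imposed wherever $\bar x'$ appears, and for each of the polynomially many atoms $R(\bar y)$ of $q$ the automaton universally forks and existentially searches for a node whose labels carry $R(h(\bar y))$. Since 2ATAs are closed under complementation with no asymptotic blow-up in states and under intersection with only linear blow-up, $\mathcal{A}_q$ inherits the same state bound as its largest component.

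By the same accounting as in Proposition \ref{prop:tree-automata}, $\mathcal{A}_q$ has exponentially many states in $|q|$ in general and polynomially many when the schema arity is fixed, and is constructible in doubly-exponential (respectively singly-exponential) time. Testing emptiness of a 2ATA is exponential in the number of states~\cite{CGKV88}, giving the 2EXPTIME / EXPTIME bound for the existence check. If nonempty, one extracts a witness tree in exponential time in the automaton size---so triply-exponential / doubly-exponential in $|q|$---and decodes it to an incomparable $\HW(1)$-$\Delta$-approximation of $q$ in polynomial time, matching the claimed 3EXPTIME / 2EXPTIME computation bounds.

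The main obstacle will be keeping $\mathcal{A}_3$ within the required state budget when the arity is fixed. The na\"ive construction would store in the state a full partial map from $V(q)$ to $\{u_1,\ldots,u_{2r}\}$, yielding $(2r+1)^{|V(q)|}$ possibilities and thus exponentially many states even for constant $r$. To avoid this blow-up, the construction should exploit alternation to delegate verification of each atom of $q$ to a separate subcomputation and to propagate per-variable identity across the tree only when required, keeping the local state tied to the at-most-$2r$ names present at a node rather than to all of $V(q)$. This careful bookkeeping is the most delicate step; the exponential-state construction in the general case and the remaining emptiness and extraction arguments follow routinely from Proposition \ref{prop:tree-automata} and standard results on two-way alternating tree automata.
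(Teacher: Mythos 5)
Your route is the one the paper itself points to (reuse the 2ATA machinery of Proposition \ref{prop:tree-automata} driven by Theorem \ref{theo:char-incomp}; the paper only sketches this), and for the general-schema bounds ({\sc 2Exptime} existence, triple-exponential computation) your plan is essentially workable, with one correction: $\A_1$ and $\A_2$ as constructed in Proposition \ref{prop:tree-automata} run over trees \emph{annotated} with an intended homomorphism (resp.\ strategy). Complementing $\A_1$ as given only expresses ``this particular annotation is not a homomorphism,'' not ``no homomorphism $q'\to q$ exists''; before dualizing you must internalize the annotation as existential guesses of the automaton (which keeps the exponential state bound), or otherwise complement the projected language. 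With that fix, exponential-state automata for all three conditions of Theorem \ref{theo:char-incomp} and the {\sc Exptime}-in-the-number-of-states emptiness test give the claimed general bounds.

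The genuine gap is the fixed-arity claim ({\sc Exptime} existence), which in your plan rests entirely on the deferred ``delicate bookkeeping'' for $\A_3$, and that step cannot be carried out as you describe. You need a 2ATA with polynomially many states (and locally computable transitions, ``tied to the at most $2r$ names present at a node'') deciding, for the query $q'$ decoded from the input tree, whether $q \to q'$, so that its dual certifies $q'\not\subseteq q$. But acceptance of a \emph{given} tree by a 2ATA reduces to solving a reachability-type game of size polynomial in the tree and in the number of states, hence is decidable in polynomial time once the relevant transitions are computable in polynomial time. Now fix the oriented tree $T$ of \cite{HNZ96} (used in this paper's own proof of Proposition \ref{prop:incomp-ident}): it is a Boolean CQ over a binary schema in $\HW(1)$, its encoding is one fixed input tree, and deciding $G\to T$ for an input digraph $G$ is NP-complete. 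If your polynomial-state $\A_3$ (built from $q:=G$) existed, running the membership test on that fixed encoding would decide $G\to T$ in polynomial time, so no such construction exists unless P $=$ NP. In other words, the obstacle you flag is not a matter of careful alternation: condition $q\not\to q'$ cannot be checked over the candidate tree by a small, locally defined automaton, and the fixed-arity bound needs a different idea (e.g., an exponential-state automaton whose emptiness is still analyzed more finely, a restriction of the candidate trees for which incomparability with $q$ comes for free, or a reformulation of the existence condition that avoids testing $q\to q'$ on the candidate). As written, your proposal establishes the general bounds (modulo the annotation issue) but not the {\sc Exptime}/{\sc 2Exptime} claims for fixed arity.
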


Now we study evaluation. Recall that, unlike $\HW(k)$-overapproximations, 
incomparable $\HW(k)$-$\Delta$-approximations of a CQ $q$ are not unique. In fact, there can be infinitely many 
(see Proposition \ref{prop:non-unique}). Thus, it is reasonable to start by trying to evaluate 
{\em at least one} of them. It would be desirable, in addition, if 
the one we evaluate depends only on $q$ (i.e., it is independent 
of the underlying database $\D$). 
Proposition \ref{prop:existence-incomp} allows us to do so as follows. 
Given a CQ $q$ with at least one incomparable $\HW(1)$-$\Delta$-approximation, we can compute in {\sc 3Exptime} 
one such an incomparable $\HW(1)$-$\Delta$-approximation $q'$. We can then evaluate $q'$ over a database $\D$ 
in time $O(|\D| \cdot |q'|)$ \cite{Yan81}, which is 
$O(|\D| \cdot f(|q|))$, for $f$ a triple-exponential function. This means that 
the evaluation of such a $q'$ over $\D$ is fixed-parameter tractable, i.e., it can be solved by an algorithm that depends 
polynomially on the size of the large database $\D$, but more loosely on the size of the small CQ $q$. 
(Recall that this is a desirable property for evaluation, which does not hold in general for the class of all CQs \cite{PY99}). 

\begin{theorem}
There is a fixed-parameter tractable algorithm that, given a CQ $q$ that has incomparable $\HW(1)$-$\Delta$-approximations, a database $\D$, 
and a tuple $\bar a$, checks whether $\bar a\in q'(\D)$, for some incomparable $\HW(1)$-$\Delta$-approximation $q'$ of $q$ 
that depends only on $q$. 
\end{theorem}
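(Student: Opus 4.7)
The plan is to combine the existence/computation result of Proposition \ref{prop:existence-incomp} with the well-known efficient evaluation procedure for acyclic CQs. First, I would preprocess the input by running the algorithm of Proposition \ref{prop:existence-incomp} on $q$ alone. Since $q$ is assumed to have some incomparable $\HW(1)$-$\Delta$-approximation, this algorithm terminates in triple-exponential time in $|q|$ and outputs a specific incomparable $\HW(1)$-$\Delta$-approximation $q'$ of $q$. Crucially, this $q'$ is a deterministic function of $q$: the algorithm never inspects $\D$, so the same $q'$ is produced for every input database, which establishes the ``depends only on $q$'' clause.

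Second, once $q'$ has been computed, I would evaluate $q'$ on $(\D,\bar a)$ directly. Because $q' \in \HW(1)$ (i.e.\ $q'$ is acyclic), standard Yannakakis-style evaluation~\cite{Yan81} decides whether $\bar a \in q'(\D)$ in time $O(|q'| \cdot |\D|)$. Writing $f$ for the triple-exponential bound on $|q'|$ coming from Proposition \ref{prop:existence-incomp}, the whole procedure runs in time
\[
  f(|q|) \; + \; O\big(f(|q|) \cdot |\D|\big) \;=\; O\big(g(|q|) \cdot |\D|^{O(1)}\big)
\]
for some computable function $g$, which is precisely the FPT running time we need, with $|q|$ as the parameter.

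Putting the two steps together yields the claimed algorithm. There is really no obstacle beyond invoking the earlier results: the nontrivial work (constructing $q'$ via the tree automaton of Proposition \ref{prop:tree-automata} and its adaptation in Proposition \ref{prop:existence-incomp}) has already been done, and the evaluation step is classical. The one subtle point worth emphasizing in the writeup is that FPT is compatible with a triple-exponential dependence on $|q|$, so the blow-up incurred in computing $q'$ is harmless for our parameterized complexity claim, and that the ``depends only on $q$'' requirement is automatic from the fact that the preprocessing phase is database-oblivious.
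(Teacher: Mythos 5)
Your proposal is correct and follows essentially the same route as the paper: precompute one incomparable $\HW(1)$-$\Delta$-approximation $q'$ of $q$ via Proposition \ref{prop:existence-incomp} (in {\sc 3Exptime}, hence database-oblivious and depending only on $q$), then evaluate the acyclic $q'$ on $(\D,\bar a)$ in time $O(|\D|\cdot|q'|)$ \`a la Yannakakis, yielding an overall running time of the form $O(|\D|\cdot f(|q|))$ for a triple-exponential $f$. This matches the paper's argument exactly, including the observation that the huge but database-independent blow-up in $|q'|$ is harmless for fixed-parameter tractability.
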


It is worth noticing that the automata techniques are essential for proving this result, and thus for 
evaluating incomparable $\HW(1)$-$\Delta$-approximations. This is in stark contrast with $\HW(k)$-overapproximations, which 
can be evaluated in polynomial time by simply checking if $(q,\bar x) \to_k (\D,\bar a)$. It is not 
at all clear whether such techniques can be extended to allow for the efficient evaluation of 
incomparable $\HW(k)$-$\Delta$-approximations.

\paragraph{{\em {\bf The infinite case.}}}
All the previous results continue to apply for the class of 
infinite CQs in $\HW(k)^\infty$. 
The following example shows that, as in the case of $\HW(k)$-overapproximations, 
considering $\HW(k)^\infty$  helps us to obtain better incomparable $\HW(k)$-$\Delta$-approximations.

\begin{figure}
\centering 

 \scalebox{0.85}{
  \begin{tikzpicture}[
      xscale=0.9,
      yscale=0.5,      
    ]
     \node (0) at (-0.5,2){$q$:};
         \node (0) at (2.1,2.6){$P_1$};
                  \node (0) at (2.1,-0.6){$P_2$};
      \node (0) at (-1,-0.55){};
     
           \node[mnode, minimum size=5pt] (1) at (-0.5,1){};
      \node[mnode, minimum size=2pt] (v2) at (1.2,2){};
       \node[mnode, minimum size=2pt] (v4) at (2.9,2){};
             \node[mnode, minimum size=2pt] (u2) at (1.2,0){};
       \node[mnode, minimum size=2pt] (u4) at (2.9,0){};
       
                 \node[mnode, minimum size=2pt] (v1) at (0.35,1.5){};
                  \node[mnode, minimum size=2pt] (u1) at (0.35,0.5){};
                    \node[mnode, minimum size=2pt] (v5) at (3.75,1.5){};
                     \node[mnode, minimum size=2pt] (u5) at (3.75,0.5){};
                           \node[mnode, minimum size=2pt] (v3) at (2.05,2){};
                     \node[mnode, minimum size=2pt] (u3) at (2.05,0){};

               \node[mnode, minimum size=5pt] (2) at (4.6,1){};

     \draw [dEdge] (1) to (v1);
       \draw [dEdge] (v1) to (v2);
         \draw [dEdge] (v2) to (v3);
           \draw [dEdge] (v4) to (v3);
             \draw [dEdge] (v4) to (v5);
               \draw [dEdge] (v5) to (2);
               
         \draw [dEdge] (1) to (u1);
                \draw [dEdge] (u1) to (u2);
                    \draw [dEdge] (u3) to (u2);
                        \draw [dEdge] (u3) to (u4);
                            \draw [dEdge] (u4) to (u5);
                                \draw [dEdge] (u5) to (2);
                
    \begin{scope}[shift={(7,0.8)}]

      \node (0) at (-0.7,1){$q^*$:};
       \node (0) at (8.3,1){$\dots$};
      \node (0) at (-1,-0.6){};
      
      \node[mnode, minimum size=5pt] (1) at (0,1){};
      \node[mnode, minimum size=2pt] (2) at (1.5,1){};
       \node[mnode, minimum size=2pt] (3) at (3,1){};
        \node[mnode, minimum size=2pt] (4) at (4.5,1){};
         \node[mnode, minimum size=2pt] (5) at (6,1){};
               \node[mnode, minimum size=2pt] (6) at (7.5,1){};
               
                 \draw [dEdge] (1) to node[above]{$P_1$} (2);
                 \draw [dEdge] (3)  to node[above]{$P_2$} (2);
                 \draw [dEdge] (3)  to node[above]{$P_1$} (4);
                 \draw [dEdge] (5)  to node[above]{$P_2$} (4);
                 \draw [dEdge] (5)  to node[above]{$P_1$} (6);
    \end{scope}       

    \begin{scope}[shift={(7,-0.8)}]
      \node (0) at (-0.7,1){$q'$:};
      \node (0) at (-1,-0.6){};
      
      \node[mnode, minimum size=5pt] (1) at (0,1){};
      \node[mnode, minimum size=2pt] (2) at (1,1){};
       \node[mnode, minimum size=2pt] (3) at (2,1){};
        \node[mnode, minimum size=2pt] (4) at (3,1){};
         \node[mnode, minimum size=2pt] (5) at (4,1){};
               \node[mnode, minimum size=2pt] (6) at (5,1){};
         \node[mnode, minimum size=2pt] (7) at (6,1){};
               \node[mnode, minimum size=5pt] (8) at (7,1){};
               
                 \draw [dEdge] (1) to (2);
                 \draw [dEdge] (2) to (3);
                 \draw [dEdge] (3) to (4);
                 \draw [dEdge] (5) to (4);
                 \draw [dEdge] (5) to (6);
                 \draw [dEdge] (6) to (7);
                 \draw [dEdge] (7) to (8);
        \end{scope}
      \end{tikzpicture}
  }
  \vspace{-2mm}

\caption{The CQ $q\in \HW(2)$ from Example \ref{ex:incomp}. The CQ $(q^*\wedge q')$ is an incomparable $\HW(1)^\infty$-$\Delta$-approximation of $q$. 
On the other hand, $q$ has no incomparable $\HW(1)$-$\Delta$-approximations.}
\label{fig:fig-incomp}
\end{figure}
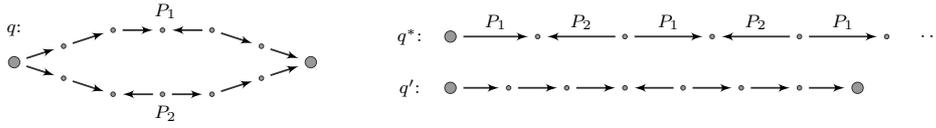

\begin{example}
\label{ex:incomp}
Consider the CQ $q$ that asks for the existence of the two oriented paths $P_1$ and $P_2$, as shown in Figure \ref{fig:fig-incomp}. 
Theorem \ref{theo:char-incomp} can be used to show that $q$ has no incomparable $\HW(1)$-$\Delta$-approximation. 
However, $q$ has an incomparable $\HW(1)^\infty$-$\Delta$-approximation. 
In fact, let $q^*$ be the CQ which is depicted in Figure \ref{fig:fig-incomp} 
(a $P_1$-labeled edge represents a copy of the oriented path $P_1$, similarly for $P_2$). 
It is easy to see that $q^*$ is the $\HW(1)^\infty$-overapproximation of $q$. 
Also, let $q'$ be an arbitrary CQ in $\HW(1)$ which is incomparable with $q$ (one such 
$q'$ is shown in Figure \ref{fig:fig-incomp}). 
Applying the extension of Theorem \ref{theo:char-incomp} to the class $\HW(k)^\infty$, 
we can prove that $(q^*\wedge q')$ is an incomparable $\HW(1)^\infty$-$\Delta$-approximation of $q$. 
\qed
\end{example}

Example \ref{ex:incomp} also illustrates the following  fact: 
If there is a CQ $q'\in \HW(k)$ which is incomparable with $q$, then $(q^* \wedge q')$ is an incomparable $\HW(k)^\infty$-$\Delta$-approximation of $q$, 
where $q^*$ is the $\HW(k)^\infty$-overapproximation of $q$. 
Given a database $\D$ and a tuple $\bar a$ in $\D$, we can check whether $\bar a$ belongs to the evaluation of 
such a $\Delta$-approximation $(q^*\wedge q')$ over $\D$ as follows. First we compute $q'$, and then we check both 
$(q,\bar x)\to_k (\D,\bar a)$ and $\bar a\in q'(\D)$. In other words, we evaluate $(q^*\wedge q')$ 
via the existential $k$-cover game -- as for the $\HW(k)^\infty$-overapproximation --, and then 
use the incomparable CQ $q'$ to filter out some tuples in the answer. 
Interestingly, we can easily exploit automata techniques and compute such an incomparable $q'$ (in case one exists). 
Thus we have the following: 

\begin{theorem}
Fix $k\geq 1$. There is a fixed-parameter tractable algorithm that given a CQ $q$ that has an incomparable $q'$ in $\HW(k)$, 
a database $\D$, and a tuple $\bar a$ in $\D$, decides whether $\bar a\in \hat q(\D)$, for some 
incomparable $\HW(k)^\infty$-$\Delta$-approximation $\hat q$ of $q$ that depends only on $q$. 
\end{theorem}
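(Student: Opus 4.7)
The plan is to separate the algorithm into a preprocessing phase depending only on $q$ (contributing an $f(|q|)$ term) and a query phase that runs in polynomial time in $|\D|$. The $\Delta$-approximation to be returned is $\hat q := q^* \wedge q'$, the disjoint conjunction of the $\HW(k)^\infty$-overapproximation $q^*$ of $q$ (which exists by Theorem \ref{theo:overapp-infty}) and a finite $q' \in \HW(k)$ that is incomparable with $q$. By the $\HW(k)^\infty$-analogue of Theorem \ref{theo:char-incomp} asserted in the paragraph following Example \ref{ex:incomp}, this $\hat q$ is an incomparable $\HW(k)^\infty$-$\Delta$-approximation of $q$; crucially, it depends only on $q$.

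For the preprocessing, I would compute an incomparable $q' \in \HW(k)$ from $q$. Since such a $q'$ is promised to exist, one option is brute-force enumeration: list finite CQs over $q$'s schema in order of size and, for each candidate, check membership in $\HW(k)$ together with the two failures $q \not\subseteq q'$ and $q' \not\subseteq q$; the procedure terminates as soon as a witness is found. A more principled approach is to adapt the 2ATA construction underlying Propositions \ref{prop:tree-automata} and \ref{prop:existence-incomp} from $\HW(1)$ to $\HW(k)$, so that it accepts tree-decomposition encodings of width-$k$ CQs incomparable with $q$, and to read off a $q'$ from any accepted tree. Either way, the preprocessing time depends only on $q$.

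At query time, I would use that $\bar a \in \hat q(\D)$ iff $\bar a \in q^*(\D)$ and $\bar a \in q'(\D)$, which is immediate from the least-upper-bound characterization of disjoint conjunction listed in Section \ref{sec:uniqueness}. By Corollary \ref{coro:eval-overapp-infty}, the first conjunct reduces to testing $(q,\bar x) \to_k (\D,\bar a)$, which is polynomial in $|\D|$ by Proposition \ref{prop:games-poly}. Evaluating the second conjunct is polynomial in $|\D|$ because $q' \in \HW(k)$. Hence the query phase runs in $|\D|^{O(1)} \cdot \mathrm{poly}(|q'|)$, and the overall running time is $f(|q|) + |\D|^{O(1)} \cdot g(|q|)$, which is fixed-parameter tractable in $|q|$.

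I expect the main obstacle to be the justification that the precomputed $\hat q$ is genuinely a maximal element of $(\HW(k)^\infty, \sqsubseteq_q)$ and not merely an element incomparable with $q$; this requires pushing the game-theoretic characterization in Theorem \ref{theo:char-incomp} through to the infinite setting. The argument should go through by replacing appeals to Proposition \ref{prop:games-tw} with its infinite counterpart (Proposition \ref{prop:games-tw-infty}) and by exploiting the fact, established inside the proof of Theorem \ref{theo:overapp-infty}, that $(q,\bar x) \to_k (q^*,\bar x^*)$ together with $(q^*,\bar x^*) \to (q,\bar x)$.
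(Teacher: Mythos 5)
Your proposal follows essentially the same route as the paper: it uses the same witness $\hat q = q^* \wedge q'$ (the disjoint conjunction of the $\HW(k)^\infty$-overapproximation with a finite incomparable $q' \in \HW(k)$), justified by the same appeal to the $\HW(k)^\infty$-extension of Theorem \ref{theo:char-incomp}, and the same query phase that tests $(q,\bar x) \to_k (\D,\bar a)$ and then filters with $q'$, with $q'$ precomputed in a phase depending only on $q$. The only point worth noting is that the paper obtains $q'$ via the automata techniques (your second option), which is preferable to bare enumeration since it yields a computable bound $f(|q|)$ on the preprocessing, as required for the fixed-parameter tractability claim.
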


\section{Final Remarks}\label{section:conclusion}

Several problems remain open:  
is the existence of $\HW(k)$-overapproximations decidable for $k > 1$? What is the precise complexity 
of checking for the existence of $\HW(1)$-overapproximations? In particular, can we improve the {\sc 2Exptime} upper bound from Theorem \ref{theo:decidability-acyclic}?   
What is an optimal upper bound on the size of 
$\HW(1)$-overapproximations? Can we extend to non-Boolean CQs the result that states
 the tractability of checking for the existence of $\HW(1)$-overapproximations over binary schemas?

In the future we plan to study how our notions of approximation 
can be combined with other techniques to obtain quantitative guarantees. One possibility is to 
exploit semantic information about the data -- e.g., in the form of integrity constraints -- 
in order to ensure that certain bounds on the size
of the result of the approximation hold. Another possibility is to try to obtain probabilistic guarantees for approximations based on reasonable 
assumptions about the distribution of the data.


\bibliographystyle{spmpsci}      
\bibliography{overapp}   


\section*{Appendix}

\begin{proof}[Theorem \ref{theo:non-existence}]
Fix $k > 1$. The CQ $q$ is defined over graphs, i.e., over a schema with a single binary relation 
symbol $E$, and 
consists of $k+1$ variables $v_1,\dots,v_{k+1}$. 
For every $1 \leq i < j \leq k+1$ we add either the atom (i.e., edge) 
$E(v_i,v_j)$ or $E(v_j,v_i)$ to $q$ in such a way that the subgraph of $G$ induced by 
$\{v_1,v_2,v_3\}$ is a directed cycle 
and a certain condition ($\dagger$), defined below, holds. 
We start introducing some terminology. 

Let $G$ be a directed graph on nodes $v_1,\dots,v_{k+1}$ that contains, for 
each $1 \leq i < j \leq k+1$, either the edge $E(v_i,v_j)$ or $E(v_j,v_i)$. 
For a $B \subseteq \{v_1,\dots,v_{k+1}\}$ of size $1 \leq \l \leq k-1$ and a node 
$v \in \{v_1,\dots,v_{k+1}\} \setminus B$, we define 
${\sf conn}(v,B)$  as the tuple 
$(e_1,\dots,e_{k+1}) \in \{-1,1,\#\}^{k+1}$ such that for each $1 \leq p \leq k+1$: 
$$e_p \ = \ \begin{cases}
\#, & \text{if $v_p \not\in B$,} \\ 
1,  & \text{if $v_p \in B$ and the edge $E(v,v_p)$ is in $G$,} \\ 
-1, & \text{otherwise, i.e., $v_p \in B$ and $E(v_p,v)$ is in $G$.}
\end{cases} 
$$  
In simple terms, ${\sf conn}(v,B)$ specifies how $v$ connects with the nodes in $B$.  


Our condition ($\dagger$) then establishes the following: 
\begin{enumerate}
\item[($\dagger$)] 
For each $B \subseteq \{v_1,\dots,v_{k+1}\}$ of size $2 \leq \l \leq k-1$ and each node 
$v$ in $\{v_1,\dots,v_{k+1}\} \setminus B$, there is a node $v' \in \{v_1,\dots,v_{k+1}\} \setminus B$ such that 
$${\sf conn}(v,B) \quad \neq \quad {\sf conn}(v',B).$$
That is, for each such $B$ and $v$ we will always be able to find another $v'$ outside $B$ that connects 
to the nodes in $B$ in a different way than $v$. 
%
\end{enumerate}

\begin{figure}
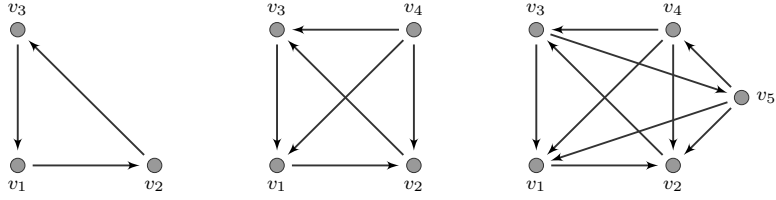

\centering 
\scalebox{0.9}{
\picCrossConditionTwo \hspace{10mm}
\picCrossConditionThree \hspace{10mm}
\picCrossConditionFour}
\caption{Directed graphs that satisfy condition ($\dagger$) for $k = 2,3,4$, respectively.}
\label{fig:fig3}
\end{figure}

\begin{example} \label{ex:dagger} The graphs in Figure \ref{fig:fig3} satisfy this condition for $k = 2,3,4$, respectively.
Notice that the directed cycle on nodes $\{v_1,v_2,v_3\}$, shown in the left-hand side, satisfies condition ($\dagger$) trivially. 
\qed
\end{example} 

The next lemma establishes that for each $k > 1$ there is always a graph that satisfies this condition.

\begin{lemma} \label{lemma:good-graphs}  
For each $k>1$, there is a directed graph $G$ on nodes $v_1,\dots,v_{k+1}$ such that the following hold:
\begin{enumerate}
\item For each $1 \leq i < j \leq k+1$, either the edge $E(v_i,v_j)$ or $E(v_j,v_i)$ is in $G$;   
\item the subgraph of $G$ induced by $\{v_1,v_2,v_3\}$ is a directed cycle; and 
\item $G$ 
satisfies condition {\em ($\dagger$)}. 
\end{enumerate} 
\end{lemma}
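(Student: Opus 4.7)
The plan is to prove Lemma~\ref{lemma:good-graphs} via the probabilistic method, supplemented by direct verification for a few small values of $k$. First I fix the three edges among $\{v_1, v_2, v_3\}$ to form the required directed cycle $v_1 \to v_2 \to v_3 \to v_1$, and orient each remaining edge of the complete graph on $V = \{v_1, \ldots, v_{k+1}\}$ independently and uniformly at random. Conditions (1) and (2) then hold by construction, and only condition ($\dagger$) remains to be ensured.

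For a fixed pair $(B, v)$ with $|B| = \ell \in [2, k-1]$ and $v \in V \setminus B$, let $\mathcal{E}_{B,v}$ denote the bad event that every other $v' \in V \setminus B$ satisfies ${\sf conn}(v', B) = {\sf conn}(v, B)$. Conditioning on ${\sf conn}(v, B)$, the match events for distinct $v'$ involve pairwise disjoint edges (each edge has the form $\{v', b\}$ for some $b \in B$) and are therefore independent. For each individual $v'$, I will argue the matching probability is at most $2^{-\ell}$ by a short case analysis on whether $v, v'$ lie in $\{v_1, v_2, v_3\}$ and on $|B \cap \{v_1, v_2, v_3\}|$. The one subtle subcase is $v, v' \in \{v_1, v_2, v_3\} \setminus B$ with $|B \cap \{v_1, v_2, v_3\}| \geq 1$: here the corresponding bits of ${\sf conn}(v, B)$ and ${\sf conn}(v', B)$ are both forced by the 3-cycle, and an inspection of the three possibilities for which element of $\{v_1, v_2, v_3\}$ lies in $B$ shows these bits must \emph{disagree}, forcing $\Pr[\mathcal{E}_{B,v}] = 0$. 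In all remaining cases we retain $\Pr[\mathcal{E}_{B,v}] \leq 2^{-\ell(k-\ell)}$.

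A union bound over all $(B, v)$ then yields
$$\Pr\Big[\bigcup_{B, v} \mathcal{E}_{B,v}\Big] \;\leq\; \sum_{\ell=2}^{k-1} \binom{k+1}{\ell}(k+1-\ell) \cdot 2^{-\ell(k-\ell)}.$$
The dominant term, at $\ell = k-1$, is $(k+1)k \cdot 2^{-(k-1)}$, and a short numerical check shows the full sum is strictly less than $1$ for all $k$ above some small explicit threshold $K_0$. For $k \geq K_0$ a tournament of the required form therefore exists in the random ensemble. For the finitely many remaining values $k < K_0$, one exhibits explicit examples: the graphs in Figure~\ref{fig:fig3} already cover $k \in \{2, 3, 4\}$, and any intermediate cases can be settled by a finite search over the finite space of candidate tournaments satisfying conditions (1) and (2), since verifying ($\dagger$) is a routine combinatorial check.

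The main obstacle is the case analysis in the second step: one must confirm that the partial determinism introduced by the fixed 3-cycle does not inflate the single-pair matching probability above $2^{-\ell}$, and in particular identify precisely when it forces this probability to $0$. Once that is established, the remaining work is a routine union-bound computation together with a finite verification for small $k$.
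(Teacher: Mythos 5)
Your route is genuinely different from the paper's: the paper proves the lemma by an explicit inductive construction (it strengthens the statement by additionally pinning the edges $E(v_1,v_2)$ and $E(v_4,v_3)$, and extends a good graph on $v_1,\dots,v_{k+1}$ by one new vertex $v_{k+2}$ whose orientation toward each $v_i$ is dictated by the edge between $v_i$ and its successor), whereas you fix the $3$-cycle and orient the remaining edges at random. The probabilistic core of your argument can be made to work: conditioned on ${\sf conn}(v,B)$ the match events for distinct $v'$ are indeed independent, and a careful bookkeeping of the forced bits gives $\Pr[\mathcal{E}_{B,v}]\le 2^{-\ell(k-\ell)}$ (or $=0$) in every case. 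But your per-$v'$ bound of $2^{-\ell}$ is \emph{not} valid conditionally: when $v\notin\{v_1,v_2,v_3\}$ while $v'$ is a cycle vertex outside $B$ and $B$ contains one or two cycle vertices, the conditional matching probability is $2^{-(\ell-f)}$ times an indicator (with $f\le 2$ forced bits) and can even equal $1$ for $\ell=2$; one recovers $2^{-\ell}$ only after averaging over the uniform coordinates of ${\sf conn}(v,B)$ at the forced positions, and one must also verify that the forced bits of two distinct cycle vertices $v'$ at a common position of $B$ always \emph{conflict} (they do, by the orientation of the $3$-cycle), since otherwise the bound would degrade. So the case analysis has more delicate subcases than the single one you single out, although all of them do close.

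The genuine gap is the small-$k$ regime. Your union bound $\sum_{\ell=2}^{k-1}\binom{k+1}{\ell}(k+1-\ell)\,2^{-\ell(k-\ell)}$ is dominated by the $\ell=k-1$ term $(k+1)k\,2^{-(k-1)}$ and drops below $1$ only from about $k\ge 8$ on (for $k=5$ the sum is roughly $3.7$, for $k=6$ roughly $2.4$, for $k=7$ roughly $1.3$). Figure~\ref{fig:fig3} only covers $k\in\{2,3,4\}$, so $k=5,6,7$ remain, and your proposal merely asserts that a finite search over tournaments would settle them; it neither exhibits such graphs nor gives any argument that the search must succeed. As written, Lemma~\ref{lemma:good-graphs} is therefore not established for those finitely many values of $k$. (They do exist -- the paper's induction produces them -- but that is exactly what your proof must show.) The paper's construction avoids this dichotomy entirely: it works uniformly for all $k\ge 3$ from the single base case $k=3$, at the cost of a longer but purely deterministic verification of ($\dagger$). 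To complete your argument you would need either to carry out (or report the outcome of) the finite search for $k=5,6,7$, or to sharpen the probabilistic estimate, e.g.\ by exploiting the many pairs $(B,v)$ for which $\Pr[\mathcal{E}_{B,v}]=0$, so that the threshold drops to $5$.
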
 

\begin{proof}[Lemma \ref{lemma:good-graphs}]
For $k = 2$ this is given by the graph in Example \ref{ex:dagger}. 
For $k \geq 3$ we prove by induction a stronger claim: 
There is a directed graph $G$ on nodes $v_1,\dots,v_{k+1}$ such that:
\begin{enumerate}
\item $G$ contains either the edge $E(v_i,v_j)$ or $E(v_j,v_i)$ for 
each $1 \leq i < j \leq k+1$.
\item The subgraph of $G$ induced by $\{v_1,v_2,v_3\}$ is a directed cycle. 
\item $G$ contains the edges 
$E(v_1,v_2)$ and $E(v_4,v_3)$. 
\item $G$ satisfies condition ($\dagger$). 
\end{enumerate} 

The basis case $k = 3$ is given again by the graph in Example \ref{ex:dagger}. 
For the inductive case, assume by induction hypothesis that there is a directed graph 
$G$ on nodes $v_1,\dots,v_{k+1}$ that satisfies the claim above. 
A new graph $G'$ is then created from $G$ by adding a new node $v_{k+2}$ 
and connecting it to the nodes in $\{v_1,\dots,v_{k+1}\}$ as follows: 
For each $1 \leq i \leq k$, if $E(v_i,v_{i+1})$ is in $G$ 
then we add the edge $E(v_{k+2},v_i)$ to $G'$, otherwise we add the edge $E(v_{i},v_{k+2})$. 
Moreover, if $E(v_{k+1},v_{1})$ is in $G$ 
then we add the edge $E(v_{k+2},v_{k+1})$ to $G'$, otherwise we add the edge $E(v_{k+1},v_{k+2})$. 
Notice that $G$ coincides with the subgraph of $G'$ that is induced by nodes $v_1,\dots,v_{k+1}$. 
Moreover, by construction $G'$ satisfies the first three conditions of the claim. We prove next that it also 
satisfies condition ($\dagger$). 
 
Take an arbitrary $B \subseteq \{v_1,\dots,v_{k+2}\}$ of size $2 \leq \l \leq k$ and a node $v$ outside $B$. 
 We prove that the condition holds by cases: 

 \begin{itemize}
 
 \item $v_{k+2} \not\in B$, $v \in \{v_1,\dots,v_{k+1}\}$, 
 and $2 \leq \l \leq k-1$: By inductive hypothesis there is a node $v' \in \{v_1,\dots,v_{k+1}\} \setminus B$ such that
 ${\sf conn}(v, B) \neq {\sf conn}(v', B)$. 
 
 \item $v_{k+2} \not\in B$, $v \in \{v_1,\dots,v_{k+1}\}$, 
 and $\l = k$: We set $v' := v_{k+2}$ and claim that the predecessor 
 $u$ of $v$ in $\{v_1,\dots,v_{k+1}\}$ distinguishes $v$ and $v'$. Here, the ``predecessor'' of $v_{i}$ is  $v_{i-1}$ if 
 $2 \leq i \leq k+1$, and the predecessor of $v_{1}$ is $v_{k+1}$ (note that $u \in B$ as $\l = k$). 
 By construction of $G'$, we have that $E(u,v) \in G'$ if and only if $E(v',u) \in G'$. 
 We conclude that ${\sf conn}(v, B) \neq {\sf conn}(v', B)$. 
 
  \item $v_{k+2} \not\in B$ and $v = v_{k+2}$: There must exist some node $v'$ in $\{v_1,\dots,v_{k+1}\}$ that
  does not belong to $B$ but its predecessor $u$ in $\{v_1,\dots,v_{k+1}\}$ does. Then by construction of $G'$, 
  we have that 
 $E(u,v') \in G'$ if and only if $E(v,u) \in G'$. We conclude that 
 ${\sf conn}(v, B) \neq {\sf conn}(v', B)$.
 
 \item $v_{k+2} \in B$ and $\l \geq 3$: Then $B' = B \setminus \{v_{k+2}\}$ is of size $2 \leq \l-1 \leq k-1$. 
 By induction hypothesis, for every node $v$ outside 
 $B'$ there is another node $v' \in \{v_1,\dots,v_{k+1}\} \setminus B'$ such that ${\sf conn}(v, B') \neq {\sf conn}(v', B')$. This implies that 
 ${\sf conn}(v, B) \neq {\sf conn}(v', B)$.
 
 \item $v_{k+2} \in B$ and $\l = 2$: Then $B = \{v_{k+2},u\}$ for some $u \in \{v_1,\dots,v_{k+1}\}$. 
 Suppose first that $u \in \{v_1,v_2,v_3\}$. Since the subgraph induced by $\{v_1,v_2,v_3\}$ in $G$ defines a directed cycle, 
 it is the case that $E(u,z)$ holds if and only if $E(w,u)$ holds, where $\{u,w,z\} = \{v_1,v_2,v_3\}$. Therefore, for each $v \in \{v_1,\dots,v_{k+1}\} \setminus B$ 
 there is a node $v' \in \{z,w\}$ such that ${\sf conn}(v, \{u\}) \neq {\sf conn}(v', \{u\})$. It follows that 
 ${\sf conn}(v, B) \neq {\sf conn}(v', B)$. Suppose now that $u \not\in \{v_1,v_2,v_3\}$. It suffices to exhibit two nodes 
 $v'$ and $v''$ outside $B$ such that $E(v', v_{k+2})$ and $E(v_{k+2}, v'')$. By induction hypothesis the edges $E(v_1,v_2)$ and $E(v_4,v_3)$ are in $G'$. Therefore, 
 $v_{k+2}$ is connected via edges $E(v_3, v_{k+2})$ and $E(v_{k+2}, v_1)$ in $G'$.
%
 \end{itemize} 
 
This concludes the proof of the lemma. \qed
\end{proof} 
 
 Fix $k \geq 1$. 
We then take as $q$ any Boolean 
CQ whose canonical database is a graph $G$ on nodes 
$v_1,\dots,v_{2k+1}$ that satisfies the conditions stated in 
Lemma \ref{lemma:good-graphs}. That is, 
 (1) for 
each $1 \leq i < j \leq 2k+1$, either the edge $E(v_i,v_j)$ or $E(v_j,v_i)$ is in $G$, (2) 
the subgraph of $G$ induced by $\{v_1,v_2,v_3\}$ is a directed cycle, and (3) $G$ 
satisfies condition ($\dagger$).
It is easy to see that $q$ is in $\HW(k+1) \setminus \HW(k)$ as its underlying undirected graph is a clique 
on $2k+1$ elements. 
In fact, these elements can be covered with $(k+1)$ edges, but not with $k$. 

We claim that $q$ has no $\HW(\l)$-overapproximation for any $1 \leq \l \leq k$. 
The proofs for the cases when $\l = 1$ and $\l > 1$ are slightly different. We start with the latter, i.e., 
when $1 < \l \leq k$. The proof for every such an $\l$ is analogous, and thus we concentrate on proving 
the claim for $\l = k > 1$. According to Theorem \ref{theo:boundedness},
we need to prove that there is no constant $c \geq 0$ such that for every database $\D$ it holds that   
$$q \to_{k} \D \quad \Longleftrightarrow \quad q \to_{k}^c \D.$$ 
It is sufficient to show then that for each integer $c \geq 0$ there is a database $\D$ 
such that    
$$q \to_{k}^c \D \ \ \text{ but } \ \ q \not\to_{k}^{c+1} \D.$$
Or, equivalently, that 
for each integer $c \geq 0$ there is a database $\D$ 
such that    
$$q_c \to \D \ \ \text{ but } \ \ q_{c+1} \not\to \D,$$
where $q_c$, for $c \geq 0$, is the CQ which is defined in Lemma \ref{lemma:tedious}, i.e., 
for every $\D$ it is the case that $q \to_k^c \D$ iff $q_c \to \D$. 
In view of Equation \eqref{eq:cont}, this boils down to proving that 
\begin{equation} \label{eq:induction}  
q_{c+1} \not\to q_{c}, \ \ \ \text{for each $c \geq 0$.}
\end{equation}  

We prove Equation \eqref{eq:induction} by induction. 
The claim clearly holds for $c = 0$, as by definition 
$q_0$ is empty while $q_1$ is not.  Let us assume now that the claim holds for $c \geq 0$. 
That is, $q_{c+1} \not\to q_c$. This means, in particular, that the core of 
$q_{c+1}$ is not contained in $q_c$. That is, this core contains 
at least 
one node $w$ in $q_{c+1}$ that does not belong to $q_c$. 

%
%
By the way $q$ is defined, any  
$k$-union of $q$ must be of the form $S \subseteq \{v_1,\dots,v_{2k+1}\}$ with $|S| = 2k$. 
Let us consider now $(T_{c+1},\beta_{c+1})$ as defined in the proof of Lemma 
\ref{lemma:tedious}. Since $w \not\in q_c$, it must be the case that 
there is a unique node $t$ of $T_{c+1}$ such that 
$w \in \beta_{c+1}(t)$.   
Moreover, this $t$ must be a leaf of $T_{c+1}$. 
Suppose that 
$\phi_t(w) = v$, for $v \in \{v_1,\dots,v_{2k+1}\}$, where $\phi_t$ is as defined in the proof of Lemma 
\ref{lemma:tedious}, i.e., $\phi_t$ is a bijection between $\beta_{c+1}(t)$ and 
the $k$-union $S \subseteq \{v_1,\dots,v_{2k+1}\}$ of $q$ such that $\lambda_{c+1}(t) = S$. 
Notice, by definition, that if the parent of $t$ in $T_{c+1}$ is $t'$, then either 
$\lambda_{c+1}(t') = \emptyset$ -- which holds precisely when $t'$ is the root of $T_{c+1}$ --, or
$\lambda_{c+1}(t') = S'$, where $S'$ is the subset of $\{v_1,\dots,v_{2k+1}\}$ which contains all 
elements save for $v$. That is, in the latter case we have that $S'$ is obtained from $S$ by replacing some 
element $v'$ in $\{v_1,\dots,v_{2k+1}\}$, with $v' \neq v$, by $v$ itself.

From Proposition \ref{prop:cores}, 
we can assume that  
the homomorphism that maps $q_{c+1}$ to its core is a retraction, i.e., 
it is the identity on the nodes of this core, in particular, on $w$. 
On the other hand, $w$ is linked in $q_{c+1}$ exclusively with the remaining 
nodes that appear in $\beta_{c+1}(t)$.
 Moreover, the graph induced by the nodes in $\lambda_{c+1}(t)$ is   
a clique on $2k$ elements, and thus all the elements in $\beta_{c+1}(t)$  
must belong to the core of $q_{c+1}$. 

Recall that $\phi_t(w) = v$. 
%
Take an arbitrary node $v'' \in S$ that is not $v$. Notice 
that neither $v'' = v'$ as $v'' \in S$, while $v' \not\in S$.  
By definition, 
$T_{c+2}$ contains a leaf $t''$ whose parent is $t$ such that $\lambda_{c+2}(t'') = S''$, 
where $S''$ is the subset of $\{v_1,\dots,v_{2k+1}\}$ which is obtained from $S$
by replacing $v''$ with the unique node in 
$\{v_1,\dots,v_{2k+1}\} \setminus S$, namely $v'$.  
Let us assume that $\phi_{t''}(v') = w''$. Notice that $w''$ appears in no other node in 
$(T_{c+2},\beta_{c+2})$. 

Assume now, for the sake of contradiction, that $q_{c+2} \to q_{c+1}$. Then 
the core of $q_{c+2}$ is the same than the core of $q_{c+1}$. Let $C$ be this core. 
Henceforth, 
from Proposition \ref{prop:cores} there 
is a retraction $h$ from $q_{c+2}$ to $C$.
Since all elements in $\beta_{c+2}(t) = \beta_{c+1}(t)$ are in $C$, the homomorphism $h$ must be the identity on them.  
But then $h$ maps $w'$ to the unique element in $q_{c+1}$ that is linked to exactly 
the same nodes than $w'$ in 
$q_{c+2}$; namely, $\phi_t(v'') = w''$.   

Suppose that $v'$ and $v''$ represent the nodes $v_i$ and $v_j$ in $\{v_1,\dots,v_{2k+1}\}$, respectively. 
By assumption, $i \neq j$. 
But this implies then that in the canonical database $G$ of $q$ we have that  
$${\sf conn}(v_i, B) \ = \ {\sf conn}(v_j, B),$$
where $B = \{v_1,\dots,v_{2k+1}\} \setminus \{v_i,v_j\}$. This is 
a contradiction since $B$ is of size $2k - 1 > 1$ 
and $G$ satisfies condition ($\dagger$). 
This concludes our proof that $q$ has no $\HW(k)$-overapproximation (and, analogously, that 
it has no $\HW(\l)$-overapproximation for any $1 < \l \leq k$). 

We prove next that $q$ neither has a $\HW(1)$-overapproximation. 
Let us assume, for the sake of contradiction, that $q$ 
has a $\HW(1)$-overapproximation $q'$. It is an easy observation that the directed graphs in 
$\HW(1)$ are precisely those whose underlying undirected graph is acyclic.
Notice also that $q'$ has no directed cycles of length two (i.e., atoms of the form 
$E(u,v)$ and $E(v,u)$); otherwise, since $q'\to q$, 
we would have that $q$ also has such a cycle (which we know it does not). 
Using the fact that $q' \in \HW(1)$ and has no directed cycles of length two, it is not difficult to show (see e.g. \cite{HN}) that there is a sufficiently large integer $n\geq 1$ 
such that, if $\vec{P}_n$ is the directed path on $n$ vertices, then 
$$q' \to \vec{P}_n \ \ \text{ but } \ \ \vec{P}_n \not\to q'.$$ 

This implies that if $q''$ is the Boolean 
CQ which is naturally defined by $\vec{P}_n$, then $q'' \subsetneq q'$. Moreover, $\vec P_n \to G$. 
This is due to the fact that $G$ contains a directed cycle on $\{v_1,v_2,v_3\}$. We conclude that
$$q \, \subseteq \, q'' \subsetneq q',$$
and, therefore, that $q'$ is not a $\HW(1)$-overapproximation of $q$. This is a contradiction. 
We then conclude the proof of Theorem \ref{theo:non-existence}. \qed 
\end{proof} 

\medskip

\begin{proof}[Lemma \ref{lemma:aux}]
Before proving the lemma, we need some terminology and claims. 
Let $\D$ be a database and $(A_1,\dots,A_n)$ be a tuple of pairwise-disjoint subsets of elements of $\D$, where $n\geq 0$. 
In addition, let $\D'$ be a database and $(a_1,\dots,a_n)$ a tuple of elements in $\D'$. 
Then we write $(\D,(A_1,\dots,A_n))\to (\D',(a_1,\dots,a_n))$ iff there is a homomorphism $h$ from $\D$ to $\D'$  such that, 
for each $i\in \{1,\dots,n\}$ and $a\in A_i$,  it is the case that $h(a)=a_i$. 

For such a pair $(\D,(A_1,\dots,A_n))$, with $n\geq 0$, we define its generalized hypertreewidth in the natural way. 
The intuition is that we see  $(\D,(A_1,\dots,A_n))$ as a ``query'', where $A_1\cup\cdots \cup A_n$ are the ``free variables'' and the 
rest of the elements are the ``existential variables''. Formally, a \emph{tree decomposition} of $(\D,(A_1,\dots,A_n))$ is a pair $(T,\chi)$, where $T$ is a tree and $\chi$  is a mapping that assigns a subset of the elements in $\D\setminus (A_1\cup\cdots \cup A_n)$ to each node $t \in T$, such that 
the following statements hold:
%
\begin{enumerate}
  \item For each atom $R(\bar a)$ in $\D$, it is the case that $\bar a\cap (\D\setminus (A_1\cup\cdots \cup A_n))$ is contained in $\chi(t)$, for some $t \in T$.
  \item For each element $a$ in $\D\setminus (A_1\cup\cdots \cup A_n)$, the set of nodes $t \in T$ for which $a$ occurs in $\chi(t)$ is connected. 
\end{enumerate} 

The {\em width} of  node $t$ in $(T,\chi)$ is the minimal number $\ell$ for which there are $\ell$ atoms in $\D$ covering $\chi(t)$, i.e., 
atoms $R(\bar a_1),\dots,R(\bar a_\ell)$ in $\D$ 
such that $\chi(t)\subseteq\bigcup_{1\leq i \leq \ell} \bar a_i$
 The width of $(T,\chi)$ is the maximal width of the nodes of $T$.  
The \emph{generalized hypertreewidth} 
of $(\D,(A_1,\dots,A_n))$ is the minimum width of its tree decompositions.

By mimicking the proof of the forward implication of Proposition \ref{prop:games-tw}, 
we can show the following:

\begin{lemma}
\label{lemma:aux1}
Fix $k\geq 1$. Let $q(\bar x),q'(\bar x')$ be CQs, where $\bar x=(x_1,\dots,x_n)$ and $\bar x'=(x_1',\dots,x_n')$, for $n\geq 0$. 
Suppose that $(q,\bar x)\to_k (q',\bar x')$. Then, for each database $\D$ and tuple $(A_1,\dots,A_n)$ of subsets of $\D$ such that 
 $(\D,(A_1,\dots,A_n))$ has generalized hypertreewidth at most $k$, it is the case that  
 \begin{multline*}
 (\D,(A_1,\dots,A_n))\to (q,(x_1,\dots,x_n)) \quad \Longrightarrow \\ 
 (\D,(A_1,\dots,A_n))\to (q',(x_1',\dots,x_n')).\end{multline*} 
\end{lemma}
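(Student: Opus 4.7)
The plan is to mimic the forward implication of Proposition \ref{prop:games-tw} while accounting for the extra baseline $(\bar x, \bar x')$: I would combine Duplicator's winning strategy in the existential $k$-cover game on $(q,\bar x)$ and $(q',\bar x')$ with the given homomorphism $h:(\D,(A_1,\dots,A_n))\to(q,(x_1,\dots,x_n))$ and a tree decomposition of $(\D,(A_1,\dots,A_n))$ of width at most $k$, in order to construct a homomorphism $f:(\D,(A_1,\dots,A_n))\to(q',(x_1',\dots,x_n'))$.

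First, I would preprocess the tree decomposition $(T,\chi)$ of $(\D,(A_1,\dots,A_n))$. For every atom $R(\bar a)\in\D$, attach a new leaf $t_{\bar a}$ to some node whose bag already contains the existential part $\bar a\cap(\D\setminus(A_1\cup\cdots\cup A_n))$, setting $\chi(t_{\bar a}):=\bar a\cap(\D\setminus(A_1\cup\cdots\cup A_n))$. This preserves the connectedness axiom, keeps the width bounded by $k$ (the single atom $R(\bar a)$ covers the new bag), and guarantees that every atom of $\D$ has an explicit ``home'' node. For each $t\in T$, let $U_t\subseteq\D$ be the union of the at most $k$ atoms covering $\chi(t)$; then $V_t:=h(U_t)$ is a $k$-union of $q$, since $h$ sends each covering atom $R_i(\bar c_i)$ of $\D$ to an atom $R_i(h(\bar c_i))$ of $q$.

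Next, I would have Spoiler traverse $T$ in a top-down manner from an arbitrary root, playing the $k$-union $V_t$ at each node $t$. Duplicator's winning strategy, viewed in its compact form, provides at each step a partial homomorphism $g_t:V_t\to q'$ such that $g_t$ together with the baseline $x_i\mapsto x_i'$ forms a partial homomorphism from $q$ to $q'$, and such that $g_t$ agrees with the response $g_p$ at the parent $p$ on the intersection $V_p\cap V_t$. Using these responses, I define $f:\D\to q'\cup\{x_1',\dots,x_n'\}$ by $f(a):=x_i'$ for $a\in A_i$, and $f(a):=g_t(h(a))$ for an existential $a$, picking any $t$ with $a\in\chi(t)$ (such $t$ exists by the tree-decomposition axiom, and then $h(a)\in h(U_t)=V_t$).

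The main obstacle is verifying well-definedness of $f$ and the homomorphism property. Well-definedness reduces, via the connectedness axiom of tree decompositions, to the consistency of $g_p$ and $g_t$ on $h(a)$ for every tree edge $\{p,t\}$ with $a\in\chi(p)\cap\chi(t)$, which is exactly what Duplicator's winning strategy provides; transitivity along paths in $T$ then yields global consistency. For the homomorphism property, given an atom $R(\bar a)\in\D$ I would evaluate $f$ at the preprocessed leaf $t_{\bar a}$: its existential coordinates lie in $\chi(t_{\bar a})$, so by well-definedness we may use $g_{t_{\bar a}}$ on them, while the baseline handles the free coordinates via $g_{t_{\bar a}}(x_i)=x_i'$. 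Hence $f(\bar a)$ coincides coordinatewise with $g_{t_{\bar a}}(h(\bar a))$; since $R(h(\bar a))\in q$ because $h$ is a homomorphism and $g_{t_{\bar a}}$ (extended by the baseline) is a partial homomorphism from $q$ to $q'$, we conclude $R(f(\bar a))\in q'$, which combined with the identities $f(a)=x_i'$ for $a\in A_i$ gives $(\D,(A_1,\dots,A_n))\to(q',(x_1',\dots,x_n'))$.
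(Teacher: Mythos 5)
Your proof is correct and follows essentially the same route as the paper's: compose the given homomorphism $h$ with Duplicator's winning strategy by having Spoiler play the images of the bags in a top-down traversal of the tree decomposition, using consistency of Duplicator's responses along tree edges for well-definedness. The paper's version is terser; your leaf-attachment preprocessing and the explicit handling of the baseline $\bar x\mapsto\bar x'$ just spell out details the paper leaves implicit.
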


\begin{proof}
Let $\H$ be a winning strategy for Duplicator witnessing the fact that $(q,\bar x)\to_k (q',\bar x')$. 
Let us assume that $(\D,(A_1,\dots,A_n))$ has generalized hypertreewidth at most $k$, and that 
$(\D,(A_1,\dots,A_n))\to (q,(x_1,\dots,x_n))$ is witnessed via a homomorphism $h$. 
Then we can compose $h$ with the strategy $\H$ to define a homomorphism $g$ witnessing $(\D,(A_1,\dots,A_n))\to (q',(x_1',\dots,x_n'))$. 
The mapping $g$ is defined in a top-down fashion over the tree decomposition $(T,\chi)$ of width at most $k$ of $(\D,(A_1,\dots,A_n))$. 
One starts at the root $r$ of $T$, and forces Spoiler to play his pebbles over the set $h(\chi(r))$. 
If Duplicator responds according to $\H$ with a partial homomorphism $f_r$, 
we then let $g(a)=f_r(h(a))$, for each $a\in \chi(r)$. 
We then move to each child of $r$ and so on, 
until all leaves are reached and $g$ is defined over all elements in $\D\setminus (A_1\cup\cdots\cup A_n)$. 
Since Duplicator responds to Spoiler's moves with consistent partial homomorphisms, we have that 
$g$ is actually a well-defined homomorphism 
from $(\D,(A_1,\dots,A_n))$ to $(q',(x_1',\dots,x_n'))$.\qed
\end{proof}

Now we are ready to show our lemma. 
Suppose that $(q,\bar x)\to_k(q',\bar x')$, where $\bar x=(x_1,\dots,x_n)$ and $\bar x'=(x_1',\dots,x_n')$, for some $n\geq 0$. 
Assume that $(q'',\bar x'')\to (q'\wedge q, \bar z)$ via a homomorphism $h$, for $q''(\bar x'')\in \HW(k)$, and 
suppose that $\bar x''=(x_1'',\dots,x_n'')$ and $\bar z=(z_1,\dots,z_n)$. 
For each $i\in \{1,\dots,n\}$, we define $V_i$ to be the set of variables $x$ in $q''$ such that $h(x)=z_i$. 
In particular, $x_i''\in V_i$,  for each  $i\in \{1,\dots,n\}$. 
We define $V$ to be the set of variables $x$ in $q''$ such that $h(x)=y$, where $y$ is an existentially quantified variable of $q$. 
Similarly, we define $V'$ with respect to the existentially quantified variables of $q'$. 
Note that the sets $V,V',V_1,\dots,V_n$ form a partition of the variables of $q''$. 

Recall that $\D_{q''}$ be the canonical database of $q''$. Since $q''\in \HW(k)$, we know that 
$$\big(\D_{q''}, (\{x_1''\},\dots,\{x_n''\})\big)$$ has generalized hypertreewidth at most $k$, as defined above. 
Let $\D_V$ be the database induced in $\D_{q''}$ by the set of variables $V\cup V_1\cup \dots\cup V_n$, i.e., the set of atoms $R(\bar t)\in \D_{q''}$ 
such that each element in $\bar t$ is in $V\cup V_1\cup\dots\cup V_n$. 
We now show that $$\big(\D_V,(V_1,\dots,V_n)\big)$$ has also generalized hypertreewidth at most $k$. 
Indeed, let $(T,\chi)$ be the tree decomposition of $(\D_{q''}$ $,$ $(\{x_1''\},\dots,\{x_n''\}))$ of width at most $k$. 
Define $\chi'$ such that for each $t\in T$, we have that $\chi'(t)=\chi(t)\cap V$. 
We claim that $(T,\chi')$ is a tree decomposition of $(\D_{V}$ $,$ $(V_1,\dots,V_n))$ of width at most $k$. 

In fact, since $(T,\chi)$ is a tree decomposition, we have that, for each $a\in V$, it is the case that the set $\{t\in T\mid a\in \chi'(t)\}$ is connected; and 
for each atom $R(\bar a)\in \D_V$, there is a node $t\in T$ such that $\bar a\cap V\subseteq \chi'(t)$. 
To see that the width of $(T,\chi')$ is bounded by $k$, let $t$ be a node in $T$. Since 
the width of $(T,\chi)$ is at most $k$, there are $\ell$ atoms $R(\bar a_1),\dots,R(\bar a_\ell)$ in $\D_{q''}$, 
with $\ell\leq k$, such that $\chi(t)\subseteq\bigcup_{1\leq i \leq \ell} \bar a_i$. Let $R(\bar a_{i_1}),\dots,R(\bar a_{i_p})$, where 
$1\leq i_1<\cdots < i_p\leq \ell$ and $p\leq \ell$, be the atoms in $\{R(\bar a_1),\dots,R(\bar a_\ell)\}$ that contain an element in $\chi'(t)$. 
Since $\chi'(t)\subseteq \chi(t)$, it is the case that $\chi'(t)\subseteq\bigcup_{1\leq j \leq p} \bar a_{i_j}$.  
It suffices to show that each $R(\bar a_{i_j})$ is actually an atom in $\D_V$, for $1\leq j\leq p$. Towards a contradiction, 
suppose that this is not the case. Then, there is an atom in $\D_{q''}$ that contains simultaneously one variable in $\chi'(t)\subseteq V$ and one variable in $V'$. 
By the definitions of $V'$ and $V$, and the fact that $h$ is a homomorphism, 
it follows that there is an atom in $(q'\wedge q)(\bar z)$ that mentions simultaneously one existentially quantified variable from $q'$ and one from $q$; 
this contradicts the definition of $(q'\wedge q)(\bar z)$. 
We conclude that the generalized hypertreewidth of $(\D_V,(V_1,\dots,V_n))$ is at most $k$. 

Recall that $h$ is our initial homomorphism from $(q'',\bar x'')$ to $(q'\wedge q, \bar z)$.  
Let $h_V$ be the restriction of $h$ to the set $V\cup V_1\cup\cdots\cup V_n$. 
By construction, $$\big(\D_V,(V_1,\dots,V_n)\big) \, \to \, \big(q,(x_1,\dots,x_n)\big)$$ via homomorphism 
$h_V$. 
We can then apply Lemma \ref{lemma:aux1} and obtain that 
$$\big(\D_V,(V_1,\dots,V_n)\big) \, \to \, \big(q',(x_1',\dots,x_n')\big)$$ via a homomorphism $h'$. 
We define our required homomorphism $g$ from $(q'',\bar x'')$ to $(q',\bar x')$ as follows: 
if $a\in V\cup V_1\cup\cdots\cup V_n$, then $g(a)=h'(a)$; 
otherwise, if $a\in V'$, then $g(a)=h(a)$.  
To see that $g$ is a homomorphism, it suffices to consider an atom $R(\bar a)\in \D_{q''}$ such that $\bar a$ contains an element in $V'$ and one element not in $V'$, 
and show that $R(g(\bar a))\in \D_{q'}$. 
Let $A$ be the set of elements in $\bar a$ that are not in $V'$. As mentioned above, 
there are no atoms in $\D_{q''}$ mentioning elements in $V'$ and $V$ simultaneously, thus $A\subseteq V_1\cup\cdots\cup V_n$. 
In particular, $h(a)=h'(a)$, for each $a\in A$. It follows that $R(g(\bar a))=R(h(\bar a))$, from which we conclude that 
$R(g(\bar a))\in \D_{q'}$. \qed
\end{proof}

\medskip

\begin{proof}[Proposition \ref{prop:non-unique}]
Consider the Boolean CQ $q$ from Figure \ref{fig:fig1}, defined as 
$$q\, = \, \exists x\exists y\exists z \big(P_a(x,y)\wedge P_a(y,x) \wedge P_a(y,z) \wedge P_a(z,y) \wedge P_b(z,x) \wedge P_b(x,z)\big),$$
and the CQ $q'$ from the same figure defined by 
\begin{multline*} 
q'\ , = \, 
\exists x\exists y_1\exists y_2\exists z \big(P_a(x,y_1)\wedge P_a(y_1,x) \wedge P_a(y_2,z) \\ 
\wedge P_a(z,y_2) \wedge P_b(z,x) \wedge P_b(x,z)\big).
\end{multline*} 
For each $n\geq 1$, we define the CQ 
\begin{multline*} 
q_n \, = \, 
\exists x_1\cdots \exists x_{n+1} \big( 
P_a(x_1,x_2)\wedge \cdots \wedge P_a(x_n,x_{n+1})\wedge \\ 
P_b(x_1,x_1)\wedge P_b(x_{n+1},x_{n+1}\big).
\end{multline*} 

Observe that $q'\wedge q_n\in \HW(1)$, for each $n\geq 1$. 
We now show that, for each $n\geq 1$, $q'\wedge q_n$ is an incomparable $\HW(1)$-$\Delta$-approximation of $q$. 
As mentioned in Example \ref{ex:example2}, we have that $q\to_1 q'$. In particular $q\to_1 (q'\wedge q_n)$. 
Clearly, $q\not\to (q'\wedge q_n)$. Also, $q_n\not\to q$ since variables $x_1$ and $x_{n+1}$ of $q_n$ cannot be mapped to any variable in $q$ via a homomorphism. 
Therefore, $(q'\wedge q_n)\not\to q$. 
By Theorem \ref{theo:char-incomp}, it follows that $q'\wedge q_n$ is an incomparable $\HW(1)$-$\Delta$-approximation of $q$.

Now we show that the CQs $\{q'\wedge q_n\}_{n\geq 1}$ form a family of non-equivalent CQs. 
First note that $q_n\not\to q'$, for each $n\geq 1$. Also, observe that $q_i\to q_j$ iff $i=j$, for $i,j\geq 1$. 
It follows that for each $i,j\geq 1$, such that $i\neq j$, it is the case that $(q'\wedge q_i)\not\to (q'\wedge q_j)$ and $(q'\wedge q_j)\not\to (q'\wedge q_i)$. 
In particular, $\{q'\wedge q_n\}_{n\geq 1}$ is a family of non-equivalent CQs. \qed
\end{proof}

\medskip

\begin{proof}[Proposition \ref{prop:incomp-ident}]
As already mentioned, the {\sc coNP} upper bound follows directly from Theorem \ref{theo:char-incomp}. 
For the lower bound, we consider the {\sc Non-Hom}$(H)$ problem, for a fixed directed graph $H$, which asks, 
 given a directed graph $G$, whether $G\not\to H$. 
 Let us assume that, for each $k\geq 1$, there is a directed graph $H_k$ such that: 
 
 \begin{enumerate}
 \item $H_k \in \HW(k)$, or more formally, the Boolean CQ $q_{H_k}$ whose canonical database is $H_k$ belongs to $\HW(k)$. 
 \item {\sc Non-Hom}$(H_k)$ is {\sc coNP}-complete even when the input directed graph $G$ satisfies that $H_k\not\to G$. 
 \end{enumerate}

We later explain how to obtain these graphs $H_k$'s. 
Now we reduce from the restricted version of {\sc Non-Hom}$(H_k)$ given by item (2) above. 
Let $G$ be a directed graph such that $H_k\not\to G$. 
We first check in polynomial time whether $G\to_k H_k$. 
If $G\not\to_k H_k$, we output a fixed pair $q_0,q_0'$ such that $q_0'\in \HW(k)$ and $q_0'$ is an incomparable $\HW(k)$-$\Delta$-approximation of $q_0$. 
In case that $G\to_k H_k$, we output the pair $q_{G}, q_{H_k}$, where $q_G$ and $q_{H_k}$ are Boolean CQs whose canonical databases
 are precisely $G$ and $H_k$, respectively. Since $q_{H_k}\in \HW(k)$ by item (1) above, the reduction is well-defined.

Suppose first that $G\not\to H_k$. 
If $G\not\to_k H_k$, then we are done, since $q_0'$ is an incomparable $\HW(k)$-$\Delta$-approximation of $q_0$. 
Otherwise, if $G\to_k H_k$, since $G\not\to H_k$ and $H_k\not\to G$ (item (2) above), Theorem \ref{theo:char-incomp} implies that 
$q_{H_k}$ is an incomparable $\HW(k)$-$\Delta$-approximation of $q_G$. 
On the other hand, assume that $G\to H_k$. 
In particular, we have that $G\to_k H_k$, and then, in this case, the reduction outputs 
 the pair $q_{G}, q_{H_k}$. Since $G\to H_k$, 
 we conclude that $q_{H_k}$ is not an incomparable $\HW(k)$-$\Delta$-approximation of $q_G$. 
 
 It remains to define the directed graph $H_k$. 
 If $k\geq 2$, it suffices to consider the clique on $2k$ vertices, that is, the directed graph $\vec K_{2k}$ 
 whose vertex set is $\{1,\dots, 2k\}$ and whose edges are $\{(i,j)\mid i\neq j, \text{ for } i,j\in\{1,\dots,2k\}\}$. 
 We have that $\vec K_{2k}\in \HW(k)$, and thus item (1) above is satisfied. 
 Also, we can reduce from the non-$2k$-colorability problem by replacing each undirected edge $\{u,v\}$ of a given undirected graph $G$, 
 by a directed edge in an arbitrary direction, e.g., from $u$ to $v$. 
 Clearly, this is a reduction from non-$2k$-colorability  to {\sc Non-Hom}$(\vec K_{2k})$. 
 Also note that the output $f(G)$ of the reduction satisfies that $\vec K_{2k}\not\to f(G)$, as 
 $f(G)$ has no directed loops nor directed cycles of length $2$. Therefore, item (2) above is satisfied. 
 For $k=1$, it is known from \cite{HNZ96} that there is an {\em oriented tree} $T$ (i.e., a directed graph whose underlying undirected graph is a tree and has no directed cycles of length $1$ (loops) and $2$) 
 such that {\sc Non-Hom}$(T)$ is {\sc coNP}-complete. Since $T$ is an oriented tree then it belongs to $\HW(1)$, and then item (1) is satisfied. 
 Also, by inspecting the reduction in \cite{HNZ96}, we have that item (2) also holds. \qed
 \end{proof}

\end{document}